\newcommand{\pt}{po\-ly\-nom\-i\-al-time}
\def\NAT@spacechar{~}%
\newtheorem{theorem}{Theorem}
\newtheorem{lemma}{Lemma}
\newtheorem{claim}{Claim}
\newtheorem{corollary}{Corollary}
\theoremstyle{definition}
\newtheorem{property}{Property}
\newtheorem{construction}{Construction}
\newtheorem{definition}{Definition}
\newtheorem{observation}{Observation}
\crefname{table}{Table}{Tables}
\crefname{figure}{Figure}{Figures}
\crefname{theorem}{Theorem}{Theorems}
\crefname{definition}{Definition}{Definitions}
\crefname{corollary}{Corollary}{Corollaries}
\crefname{lemma}{Lemma}{Lemmas}
\crefname{example}{Example}{Examples}
\crefname{reduction}{Reduction}{Reductions}
\crefname{claim}{Claim}{Claims}
\crefname{proposition}{Proposition}{Propositions}
\crefname{property}{Property}{Properties}
\crefname{observation}{Observation}{Observations}
\crefname{construction}{Construction}{Constructions}
\crefname{subsection}{Subsection}{Subsections}
\crefname{section}{Section}{Sections}
\crefname{algorithm}{Algorithm}{Algorithms}
\newcommand{\decprob}[3]{%
 {\def\descriptionlabel##1{\hspace\labelsep\quad{}\it{}##1}%
 \par\vspace{\topsep}\noindent%
 \begin{compactdesc}
 \item[\textsc{#1}]
 \item[Input:] #2
 \item[Question:] #3
\end{compactdesc}
}\vspace{\topsep}}
\newcommand{\probStarPart}{\textsc{Star Partition}\xspace}
\newcommand{\probPthreePart}{\textsc{$P_3$-Partition}\xspace}
\newcommand{\probMinPkPart}{\textsc{Min $P_k$-Partition}\xspace}
\newcommand{\probExCover}{\textsc{Exact Cover by $s$-Sets}\xspace}
\newcommand{\probThreeDMatching}{\textsc{3-Dimensional Matching}\xspace}
\newcommand{\TDMR}{\ensuremath{R}}
\newcommand{\TDMB}{\ensuremath{B}}
\newcommand{\TDMY}{\ensuremath{Y}}
\newcommand{\TDMT}{\ensuremath{T}}
\newcommand{\TDMSol}{\ensuremath{M}}
\newcommand{\TDMInst}{\ensuremath{(\TDMR, \TDMB, \TDMY, \TDMT\subseteq \TDMR \times \TDMB \times \TDMY)}}
\newcommand{\sstar}{\ensuremath{s}-star\xspace}
\newcommand{\sstars}{\ensuremath{s}-stars\xspace}
\newcommand{\starpartition}[1]{\ensuremath{#1}-star partition\xspace}
\newcommand{\classNP}{\textrm{NP}\xspace}
\newcommand{\classP}{\textrm{P}\xspace}
\DeclareMathOperator{\argmax}{arg\,max}
\newcommand{\h}[1]{\ensuremath{T_{#1}}}
\newcommand{\A}[1]{\ensuremath{A_{#1}}}
\newcommand{\Psol}[0]{\ensuremath{\mathcal P}}
\newcommand{\Qsol}[0]{\ensuremath{\mathcal Q}}
\newcommand{\lessgood}[0]{\ensuremath{\preccurlyeq}}
\newcommand{\occ}[2]{\ensuremath{\vert #2 \vert_{#1}}}
\newcommand{\qSize}[1]{\ensuremath{\Vert #1\Vert}}
\newcommand{\qAdd}[0]{\oplus}
\newcommand{\qDel}[0]{\ominus}
\DeclareMathOperator{\death}{end}
\DeclareMathOperator{\birth}{start}
\DeclareMathOperator{\rank}{rank}
\DeclareMathOperator{\tokens}{tokens}
\newcommand{\strongorder}{\ensuremath{\prec}}
\newcommand{\myll}{\strongorder}
\newcommand{\strongordereq}{\ensuremath{\preccurlyeq}}
\newcommand{\cross}{\ensuremath{\times}}
\DeclareMathOperator{\scenter}{center}
\DeclareMathOperator{\closure}{scope}
\DeclareMathOperator{\width}{width}
\DeclareMathOperator{\leftm}{lm}
\DeclareMathOperator{\rightm}{rm}
\DeclareMathOperator{\edgecrossingnum}{\#edge-crossings}
\DeclareMathOperator{\dist}{d}
\DeclareMathOperator{\union}{\oplus}
\DeclareMathOperator{\join}{\otimes}
\tikzstyle{birth}=[circle, inner sep=1.0pt, draw=black!70,
\tikzstyle{death}=[circle, inner sep=1.0pt,
\tikzstyle{single}=[circle, inner
\tikzstyle{table}=[draw=none] 
\tikzstyle{thickedge}=[draw=black, line width=1.2pt]
\tikzstyle{intervalline}=[draw=black, line width=0.8pt]
\tikzstyle{vertex} = [align=center, circle, inner sep=1.5pt, 
\tikzstyle{center} = [circle,draw=black, fill=black, inner
\newcommand{\mytitle}{Partitioning Perfect Graphs into Stars\thanks{An
    extended abstract of this work appeared under the title ``Star
    Partitions of Perfect Graphs'' in Proceedings of the 41st
    International Colloquium on Automata, Languages, and Programming
    (ICALP'14), Part~I, LNCS 8572, pp.~174--185, Springer, 2014.}}
\title{\mytitle}
\date{}
\author[1,2]{René van Bevern}
\author[2]{Robert Bredereck}
\author[3]{Laurent Bulteau}
\author[2]{Jiehua~Chen}
\author[2]{Vincent Froese}
\author[2]{Rolf Niedermeier}
\author[4]{Gerhard~J.~Woeginger}
\affil[1]{Novosibirsk State University, Novosibirsk, Russian Federation, \texttt{rvb@nsu.ru}}
\affil[2]{Institut f\"ur Softwaretechnik und Theoretische Informatik,
  TU Berlin, Germany, \texttt{\{robert.bredereck,jiehua.chen,vincent.froese,rolf.niedermeier\}@tu-berlin.de}}
\affil[3]{Institut Gaspard-Monge,
  Université Paris-Est Marne-la-Vallée, 
  France
\texttt{l.bulteau@gmail.com}}
\affil[4]{Department of Mathematics and Computer Science,
TU Eindhoven, The~Netherlands, \texttt{gwoegi@win.tue.nl}}
\begin{document}
\maketitle

\begin{abstract}
  \noindent The partition of graphs into ``nice'' subgraphs is a central algorithmic
  problem with strong ties to matching theory.  We study the
  partitioning of undirected graphs into same-size stars, 
  a problem known to be
  NP-complete even for the case of stars on three vertices.  We perform
  a thorough computational complexity study of the problem on subclasses
  of perfect graphs and identify several \pt{} solvable
  cases, for example, on interval graphs and bipartite permutation graphs, and
  also NP-complete cases, for example, on grid graphs and chordal graphs.
\end{abstract}

\section{Introduction}
We study the computational complexity (tractable versus intractable
cases) of the following basic graph problem.

\decprob{\probStarPart}
{An undirected $n$-vertex graph~$G=(V,E)$ and an integer~$s\in\mathbb N$.}
{Can the vertex set~$V$ be partitioned into $k\coloneqq\lceil
  n/(s+1)\rceil$ mutually disjoint vertex subsets~$V_1,$ $V_2,\ldots,V_k$, such that each
  subgraph~$G[V_i]$ contains an $s$-star~(a~$K_{1,s}$)?}

\noindent Two prominent special cases of \probStarPart are the case $s=1$
(finding a perfect matching) and the case $s=2$ (finding a partition
into connected triples).  Perfect matchings ($s=1$), of course, can be
found in polynomial time.  Partitions into connected triples (the
case~$s=2$), however, are hard to find; this problem, denoted
\probPthreePart, was proven to be NP-complete by \citet{KH83}.

\makeatletter \newcommand{\gettikzxy}[3]{%
  \tikz@scan@one@point\pgfutil@firstofone#1\relax \edef#2{\the\pgf@x}%
  \edef#3{\the\pgf@y}%
} \makeatother

\begin{figure}[t]
  \centering \def\dist{1pt} \def\ldist{4pt} \def\lldist{8pt}
  \resizebox{\textwidth}{!}{
    \begin{tikzpicture}[->,>=stealth', shorten >=2pt, edge from parent
      path={(\tikzparentnode) -- (\tikzchildnode)}]
      \tikzset{level distance = 1.25cm}

      \tikzstyle{treenode} = [align=center, inner sep=2pt, text
      centered] \tikzstyle{hardnode} = [treenode, rectangle, minimum
      height=.6cm, rounded corners, draw, fill=white, very thick]

      \tikzstyle{unknownnode} = [rounded corners, treenode, minimum
      height = .6cm, rectangle,draw] \tikzstyle{normalnode} =
      [treenode, rectangle, minimum height = .6cm, draw, rounded
      corners, very thick, fill=white] \tikzstyle{easynode} =
      [treenode, rectangle, minimum height = .6cm, rounded corners,
      draw, fill=white, very thick] \tikzstyle{half_easynode_extern} =
      [treenode, rectangle split, rectangle split parts=2, minimum
      height = .6cm, rounded corners, draw, rectangle split part
      fill={none, gray!20}, rectangle split draw splits=false]
      \tikzstyle{half_easynode} = [treenode, rectangle split,
      rectangle split parts=2, minimum height = .6cm, rounded corners,
      draw, rectangle split part fill={white, white}, rectangle split
      draw splits=false, very thick] \tikzstyle{easynode_extern} =
      [treenode, rectangle, minimum height = .6cm, rounded corners,
      draw, fill=white] \tikzstyle{hardnode_extern} = [treenode,
      rectangle, minimum height = .6cm, rounded corners, draw,
      fill=white] \tikzstyle{hardback} = [-,red!28, line
      width=0.4pt,rounded corners,top color=red!20,bottom
      color=red!20] \tikzstyle{easyback} = [-,green!28, line
      width=0.4pt,rounded corners,top color=green!20,bottom
      color=green!20] \Tree [ .\node[hardnode_extern] (perfect)
      {perfect}; [ .\node[hardnode_extern] (comparability)
      {comparability}; [ .\node[hardnode_extern] (bipartite)
      {bipartite~\cite{CP14}}; [ .\node[hardnode_extern](planar_bip) {subcubic
        planar\\
        bipartite~\cite{MZ05,MT07}}; [
      .\node[hardnode](grid){subcubic\\grid}; ] ] [ .\node[text width
      = 10ex, unknownnode] (chordal_bip) {chordal bipartite}; [
      .\node[easynode, text width = 13ex] (bip_perm) {bipartite
        permutation}; ] ] ] [ .\node[unknownnode,xshift=-1em] (perm)
      {permutation}; [ .\node[easynode] (cograph) {cograph}; [
      .\node[easynode] (triv_perfect) {trivially\\perfect}; [
      .\node[easynode] (threshold) {threshold\\{\small ($P_3$
          known~\cite{YCC96})}}; ] ] ] ] ] [ .\node[hardnode]
      (chordal) {chordal}; [ .\node[normalnode] (inter) {interval}; [
      .\node[easynode, text width = 8ex,xshift=1em] (unit_inter) {unit
        interval}; ] ] \edge[draw=none]; {\phantom{Claw-free}} [
      .\node[normalnode] (split) {split}; ] ] ] \node[easynode_extern,
      below = 6pt of grid, xshift=-5mm] (ser_para)
      {series-parallel~\cite{TNS82}}; \node[easynode_extern, below =
      9pt of bip_perm] (tree) {tree};

      \path[->] (ser_para) edge (tree); \path[->] (bip_perm) edge
      (tree); \path[->] (inter) edge (triv_perfect.north east); \path[->] (perm)
      edge (bip_perm.35); \draw[->] (split)--(threshold.east);

      \begin{pgfonlayer}{background}
        \gettikzxy{(perfect.north)}{\perfectnorthx}{\perfectnorthy};
        \gettikzxy{(ser_para.west)}{\serparawestx}{\serparawesty};
        \gettikzxy{(split.east)}{\spliteastx}{\spliteasty};
        \gettikzxy{(threshold.south)}{\thresholdsouthx}{\thresholdsouthy};
        \gettikzxy{(chordal.south)}{\chordalsouthx}{\chordalsouthy};
        \gettikzxy{(chordal.east)}{\chordaleastx}{\chordaleasty};
        \gettikzxy{(unit_inter.north)}{\uinorthx}{\uinorthy};
        \gettikzxy{(inter.west)}{\iwestx}{\iwesty};
        \gettikzxy{(comparability.south
          east)}{\compsoutheastx}{\compsoutheasty};
        \gettikzxy{(cograph.north
          west)}{\conorthwestx}{\conorthwesty};
        \gettikzxy{(tree.north)}{\treenorthx}{\treenorthy};
        \gettikzxy{(chordal_bip.south
          east)}{\chordbipsoutheastx}{\chordbipsoutheasty};
        \gettikzxy{(ser_para.north east)}{\serparanex}{\serparaney};
        \gettikzxy{(bipartite.east)}{\bipex}{\bipey};
        \gettikzxy{(bipartite.south)}{\bipsx}{\bipsy};
        \gettikzxy{(planar_bip.east)}{\planarbipex}{\planarbipey};
        \gettikzxy{(planar_bip.south
          east)}{\planarbipsex}{\planarbipsey};
        \gettikzxy{(grid.east)}{\gridex}{\gridey};
        \gettikzxy{(bip_perm.north)}{\bippermnx}{\bippermny};

        \draw[hardback,black]
        ($(\serparawestx-\ldist,\perfectnorthy+\ldist)$) --
        ($(\spliteastx+\ldist, \perfectnorthy+\ldist)$) --
        ($(\spliteastx+\ldist, \iwesty+\dist)$) --
        ($(\chordaleastx+\ldist, \iwesty+\dist)$) --
        ($(\chordaleastx+\ldist, \chordalsouthy-\ldist)$) --
        ($(\compsoutheastx+\ldist, \chordalsouthy-\ldist)$) --
        ($(\bipex+\lldist, \chordalsouthy-\ldist)$) --
        ($(\bipex+\lldist, \bipsy-\ldist)$) -- ($(\planarbipex+\ldist,
        \bipsy-\ldist)$) -- ($(\planarbipex+\ldist,
        \planarbipsey-\ldist)$) --
        ($(\gridex+\ldist, \planarbipsey-\ldist)$) -- ($(grid.east) +
        (\ldist, 0)$) -- ($(\gridex+\ldist, \serparaney+\ldist)$) --
        ($(ser_para.north west) + (-\ldist, \ldist)$) -- cycle;
     
        \draw[easyback,black] ($(ser_para.north west) + (-\ldist,
        2*\dist)$) -- ($(ser_para.north east) + (\ldist, 2*\dist)$) --
        ($(bip_perm.west) + (-\ldist, 0)$) -- ($(bip_perm.north west)
        + (-\ldist, \lldist)$) -- ($(bip_perm.north) + (0, \lldist)$)
        -- ($(\conorthwestx-\ldist, \bippermny+\lldist)$) --
        ($(\conorthwestx-\ldist, \uinorthy+\lldist)$) --
        ($(\iwestx-\ldist, \uinorthy+\lldist)$) -- ($(inter.west) +
        (-\ldist, -\dist)$) -- ($(\spliteastx+\ldist, \iwesty-\dist)$)
        -- ($(\spliteastx+\ldist, \thresholdsouthy-\ldist)$) --
        ($(threshold.south) + (0, -\ldist)$) --
        ($(\serparawestx-\ldist, \thresholdsouthy-\ldist)$) -- cycle;

        \node[] at ($(\serparawestx+\lldist*6,
        \perfectnorthy-\lldist)$) (nph) {\textbf{\classNP-complete}};
        \node[] at ($(\serparawestx+\lldist*6,
        \thresholdsouthy+\ldist)$) (p) {\textbf{\classP}};
      \end{pgfonlayer}
    \end{tikzpicture}
  }
  \caption{Complexity classification of \probStarPart{}. Bold borders
    indicate results of this paper. An arrow from a class~$A$ to a
    class~$B$ indicates that $A$~contains~$B$. In most classes,
    \classNP-completeness results hold for~$s=2$ (that is, for
    \probPthreePart). However, on split graphs, \probStarPart{} is
    \pt{} solvable for~$s\le 2$, while it is
    \classNP-complete for~$s\ge 3$. \probPthreePart{} is solvable on
    interval graphs in quasilinear time. We are not aware of any
    result for permutation graphs, chordal bipartite graphs,
    interval graphs (for $s\ge 3$), or grid graphs (for~$s=3$).}
  \label{fig-overview}

\end{figure}

Our goal in this paper is to achieve a better understanding of star
partitions for certain subclasses of perfect graphs.  We provide a fairly
complete classification in terms of \pt{} solvability versus
NP-completeness on the most prominent subclasses of perfect graphs,
leaving a few potentially challenging cases open; see
\cref{fig-overview} for an overview of our results.

\paragraph{Motivation.}
\looseness=-1 The literature in algorithmic graph theory is full of packing and
partitioning problems (packing is an optimization variant of partitioning,
where one tries to maximize the number of disjoint vertex subsets).
Concerning practical relevance, note that
\textsc{$P_3$-Packing} and \textsc{$P_3$-Partition} find applications
in dividing distributed systems into subsystems~\cite{KMZ06} as well
as in the \textsc{Test Cover} problem arising in
bioinformatics~\cite{BHHHLRS03}.  In particular, the application in
distributed systems explicitly motivates the consideration of very
restricted (perfect) graph classes such as grid-like structures.
\probStarPart{} on grid graphs naturally occurs in political
redistricting problems~\cite{BBCFNW15}.  We show that \probStarPart{}
remains NP-hard on subcubic grid graphs.

Interval graphs are a further famous class of perfect graphs.  Here,
\probStarPart can be considered a team formation problem: Assume that
we have a number of agents, each being active during a certain time
interval. Our goal is to form teams, all of the same size, such that each
team contains at least one agent sharing time with every other team
member. This specific team member becomes the team leader, since he or she
can act as an information hub. Forming such teams is nothing else than
solving \probStarPart on interval graphs.  We present efficient
algorithms for \probStarPart on unit interval graphs (that is, for the
case when all agents are active for the same amount of time) and for
\probPthreePart{} on general interval graphs.

\paragraph{Related work.}
\looseness=-1 Packing and partitioning problems are central problems in algorithmic graph theory with many applications and with close connections to matching theory~\cite{Yus07}.  In the case of packing, one wants to maximize the number of graph vertices that are ``covered'' by vertex-disjoint copies of some fixed pattern graph~$H$.  In the case of partitioning, one wants to cover \emph{all} vertices in the graph.  We focus on the partitioning problem, which is also called \textsc{$H$-Factor} in the literature. In this work, we always refer to it as \textsc{$H$-Partition}.  Since \citet{KH83} established the NP-completeness of \textsc{$H$-Partition} on general graphs for every connected pattern~$H$ with at least three vertices, one branch of research has turned to the investigation of classes of specially structured graphs.  For instance, on the upside, \textsc{$H$-Partition} has been shown to be \pt{} solvable on trees and series-parallel graphs~\cite{TNS82} and on graphs of maximum degree two~\cite{MT07}.  On the downside, \textsc{$P_k$-Partition} (for each fixed~$k\ge3$) remains NP-complete on planar bipartite graphs~\cite{DF85}; this hardness result is generalized to \textsc{$H$-Partition} on planar graphs for any outerplanar pattern~$H$ with at least three vertices~\cite{BJLSS90}.  For every fixed~$s\geq 2$, \probStarPart{} is NP-complete on bipartite graphs~\cite{CP14}.  Partitioning into triangles~($K_3$), that is, \textsc{$K_3$-Partition}, is \pt{} solvable on chordal graphs~\cite{DK98} and linear-time solvable on graphs of maximum degree three~\cite{RNB13}.

An optimization version of \textsc{$P_k$-Partition}, called \probMinPkPart, has also received considerable interest in the literature.  
This version asks for a partition of a given graph into a minimum number of paths, each of length \emph{at most}~$k$.  
Clearly, all hardness results for \textsc{$P_k$-Partition} carry over to this minimization version.  
If~$k$ is \emph{part of} the input, then \probMinPkPart is hard for cographs~\cite{Stei00} and chordal bipartite graphs~\cite{Stei03}.  In fact, \probMinPkPart is \classNP-complete even on convex graphs and trivially perfect graphs (also known as quasi-threshold graphs), and hence on interval and chordal graphs \cite{AN07}.  \probMinPkPart is solvable in polynomial time on trees \cite{YCHH97}, threshold graphs, cographs (for fixed~$k$)~\cite{Stei00} and bipartite permutation graphs~\cite{Stei03}.

While in this work we study the \textsc{$H$-Partition} problem,
which partitions the vertex set of a graph into mutually vertex-disjoint copies of some fixed pattern graph~$H$, 
the literature also studies the \textsc{$H$-Decomposition} problem,
which partitions the \emph{edge set} of a graph into mutually \emph{edge}-disjoint copies of a pattern~$H$.  
In general, \textsc{$H$-Decomposition} is NP-hard~\citep{CT91},
yet easy to solve on highly-connected graphs if $H$~is a $k$-star: \citet{Tho12} shows that every $(k^2+k)$-edge-connected graph has a $k$-star decomposition provided its number of edges is a multiple of~$k$. \citet{LTWZ13} strengthen this result to $(3k-3)$-edge-connected graphs for odd~$k\geq 3$.  However, since a graph may have a $k$-star decomposition without having a $k$-star partition and vice versa, the results on \textsc{$H$-Decomposition} are not applicable to the \probStarPart{} problem considered in our work.

\paragraph{Our contributions.}
So far, surprisingly little was known about the complexity of
\probStarPart for subclasses of perfect graphs.  We provide a detailed
picture of the corresponding complexity landscape for classes of perfect graphs; see
\cref{fig-overview} for an overview.
Let us briefly summarize our major findings.
(Note that all problem variants we consider are clearly contained in
NP, which means that our NP-hardness results in fact imply NP-completeness.)

As a central result, we provide a quasilinear-time algorithm for
\probPthreePart (which is \probStarPart with $s=2$) on interval
graphs; the complexity of \probStarPart
for $s\geq 3$ remains open.
But if we restrict the input graphs to be unit interval graphs or trivially perfect graphs, 
we can solve \probStarPart even in linear time.
Furthermore, we develop a \pt{}
algorithm for \probStarPart on cographs and on bipartite permutation graphs.  
Most of our \pt{} algorithms are simple to describe: 
they are based on dynamic
programming or even on greedy approaches, and hence should work well
in implementations.
Their correctness proofs, however, are intricate.

\looseness=-1 On the boundary of NP-completeness, we strengthen a result
of \citet{MZ05} and \citet{MT07} by showing that \probPthreePart is
NP-hard on grid graphs with maximum degree three.  Note that in
strong contrast to this, \textsc{$K_3$-Partition} is linear-time
solvable on graphs with maximum degree three~\cite{RNB13}.
Furthermore, we show \probPthreePart to be NP-hard on chordal
graphs, while \textsc{$K_3$-Partition} is known to be \pt{}
solvable in this case~\cite{DK98}.
Note that NP-hardness for~$s=2$ does not directly imply NP-hardness
for all values~$s\ge 2$ (for example, the case~$s=5$ is trivially
solvable on grid graphs since they have maximum degree four).
We observe that \probPthreePart is
typically not easier than \probStarPart for~$s\ge3$.  An exception to
this rule is the class of split graphs (which are chordal), where
\probPthreePart is \pt{} solvable but \probStarPart is
NP-hard for any constant value~$s\ge3$.  

\paragraph{Preliminaries.}
We assume basic familiarity with standard graph
classes~\cite{BLS99,Gol04}.  Definitions of the graph classes are
provided when first studied in this paper.  We call the
complete bipartite graph~$K_{1,s}$ an \emph{\sstar}.  For a graph $G=(V,E)$, an
\emph{\starpartition{s}} is a set of~$k\coloneqq|V|/(s+1)$ pairwise disjoint
vertex subsets~$V_1, V_2, \ldots,\allowbreak V_k \subseteq V$ with
$\bigcup_{1 \leq i\leq k}V_i=V$ such that each induced subgraph~$G[V_i]$
contains an \sstar as a (not necessarily induced) subgraph.  We refer
to the vertex sets~$V_i$ as \emph{stars}, even though the correct
description of a star would be an \emph{arbitrary~$K_{1,s}$-subgraph
  of~$G[V_i]$}.  \probPthreePart is the special case of \probStarPart
with $s=2$.  Without loss of generality, we assume throughout the
paper that the input graph~$G$ is connected (otherwise, we can solve
the partition problem separately for each connected component of~$G$).
We denote by $n\coloneqq|V|$ the number of vertices and
by $m\coloneqq|E|$ the number of edges in a graph~$G=(V,E)$.
For a vertex $v\in V$, we denote by~$N[v]\coloneqq\{u\in V\mid \{u,v\}\in
E\}\cup\{v\}$ the \emph{closed neighborhood} of~$v$.

\paragraph{Article outline.}
The article is structured into one section per graph class.  Herein,
we first present the results on graph classes with \pt{}
algorithms and then head over to the graph classes with NP-hardness
results.  Each section gives a formal definition of the graph class it
considers.
\cref{sec:intervals} considers interval graphs and their subclasses
unit interval graphs and trivially perfect graphs.
\cref{sec:cographs} provides a \pt{} algorithm for cographs,
\cref{sec:biperm} for bipartite permutation graphs.
\cref{sec:splitgraphs} marks the boundary between tractability and
NP-hardness: it shows that \probPthreePart{} is \pt{}
solvable on split graphs, while \probStarPart{} is NP-hard.
\cref{sec:gridgraphs} shows that \probPthreePart{} is NP-hard on grid
graphs and, finally, \cref{sec:chordgraphs} shows it for chordal
graphs.

\section{Interval graphs}\label{sec:intervals}

\looseness=-1 In this section, we present algorithms that solve \probStarPart{} on
unit interval graphs and on trivially perfect graphs in linear time, 
and a simple greedy algorithm that solves \probPthreePart{} on interval
graphs in quasilinear time.

An \emph{interval graph} is a graph whose vertices one-to-one
correspond to intervals on the real line such that there is an edge
between two vertices if and only if their representing intervals
intersect.  Interval graphs naturally occur in many scheduling
applications~\cite{KLPS07,BMNW14}.  In a \emph{unit interval graph},
all representing intervals are open and have the same length, while in
a \emph{trivially perfect graph}, any two representing intervals are
either disjoint or one is properly contained in the other.

\subsection{Star Partition on unit interval graphs}

The restricted structure of unit interval graphs allows us to solve
\probStarPart using a simple greedy approach, which yields the
following result.

\begin{theorem}
  \label{thm:unitlinear}
  \probStarPart is solvable in $O(n+m)$~time on unit interval
  graphs.
\end{theorem}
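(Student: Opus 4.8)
The plan is to set up the standard left-to-right structure of unit interval graphs and then show that the greedy procedure amounts to cutting this order into consecutive blocks. First I would fix an ordering $v_1,\dots,v_n$ of the vertices by nondecreasing left endpoint of the representing intervals. Since all intervals have the same length, this ordering has the \emph{umbrella property}: for $i<j$, $v_i$ and $v_j$ are adjacent iff the left endpoint of $v_j$ is within one interval-length of that of $v_i$, and consequently each closed neighbourhood $N[v]$ is a set of \emph{consecutive} indices. As $G$ is connected (assumed throughout), this forces $v_i\sim v_{i+1}$ for every $i$. The key simplifying observation is that ``repeatedly delete the $s+1$ leftmost remaining intervals'' always deletes a prefix of the current order, so the greedy algorithm simply cuts $v_1,\dots,v_n$ into the consecutive blocks $B_1=\{v_1,\dots,v_{s+1}\}$, $B_2=\{v_{s+2},\dots,v_{2(s+1)}\}$, and so on, and accepts iff every block contains a $K_{1,s}$.

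For correctness I would prove that $G$ admits an \starpartition{s} iff every block $B_q$ contains a $K_{1,s}$. The direction ``$\Leftarrow$'' is immediate, since the blocks then form a valid partition. For ``$\Rightarrow$'' I would argue by induction on $n$, reducing to two claims about any fixed \starpartition{s} $\mathcal P$: (i) $B_1$ contains a $K_{1,s}$, and (ii) $G-B_1$ is again partitionable, so that the induction hypothesis applies to the remaining blocks. Claim (i) has a clean proof via the umbrella property: let $P\in\mathcal P$ be the part containing $v_1$ and let $c$ be a centre of $P$ (a vertex adjacent to the other $s$ vertices of $P$). Then $P\subseteq N[c]$, and since the indices in $P$ are $s+1$ distinct positive integers its maximum is at least $s+1$; as $N[c]$ is an interval of indices containing the index $1$ (because $v_1\in P$) and an index $\ge s+1$, it contains all of $\{1,\dots,s+1\}$. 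Hence $c$ is adjacent to every vertex of $B_1$. If $c\in B_1$ it already witnesses a $K_{1,s}$ in $B_1$; otherwise $c$ has index $>s+1$, and then $v_1\sim c$ together with the umbrella property shows $v_1$ is adjacent to all of $B_1$, again giving a $K_{1,s}$.

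The hard part will be claim (ii), the exchange step. Here I would start from a valid partition and transform it, by swaps guided by the ordering, into one in which $B_1$ itself is a part; deleting that part then yields a partition of $G-B_1$. Concretely, while some vertex of $B_1$ lies outside the part $P\ni v_1$, there is also a vertex $v_j\notin B_1$ (necessarily of index $>s+1$) inside $P$, and one swaps an outside $B_1$-vertex into $P$ for $v_j$. Showing that $P$ remains a star is easy: its centre $c$ is adjacent to all of $B_1$ and to everything of index up to $\max P$, so it still dominates the modified part (or the modified part is exactly $B_1$, which is a star by~(i)). The genuinely delicate case is the \emph{donor} part, which loses a small-index vertex of $B_1$ and gains the larger-index vertex $v_j$: there is no automatic reason for the donor's old centre to reach $v_j$. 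I expect this to be the main obstacle, and I would resolve it by choosing $\mathcal P$ extremally---e.g.\ a valid partition that is lexicographically leftmost, so that the vertex leaving each donor is its minimum and the donor ``shifts to the right''---and then re-certifying a centre of the new donor part using the umbrella property, with minimality of the chosen partition ruling out the bad configurations.

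Finally, for the running time: a proper-interval representation and the corresponding vertex ordering can be computed in $O(n+m)$ time by standard recognition algorithms, and precomputing for each vertex the least and greatest index of its neighbours (both monotone along the order) lets us test in $O(1)$ whether a given block contains a $K_{1,s}$---it suffices to check whether the rightmost neighbour of the block's first vertex reaches the block's last vertex. Scanning the $k=n/(s+1)$ blocks therefore costs $O(n)$, for an overall bound of $O(n+m)$.
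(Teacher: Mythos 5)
Your algorithm and the easy half of the correctness argument match the paper's: the same greedy procedure that cuts a consecutive-neighbourhood ordering into blocks of size $s+1$, and essentially the same proof that the first block $B_1$ contains a $K_{1,s}$ (the paper phrases the ordering as a bicompatible elimination order, in which $N_l[v]$ and $N_r[v]$ are cliques). The problem is your claim~(ii), which you yourself flag as ``the main obstacle'' and then only outline. The one-vertex-at-a-time swap fails exactly where you say it does: after trading a low-index vertex of $B_1$ out of a donor part $Q$ in exchange for a vertex $v_j$ with $j>s+1$, the old centre of $Q$ need not be adjacent to $v_j$ (its consecutive neighbourhood may end before index $j$), and ``choose $\mathcal P$ lexicographically leftmost and rule out the bad configurations'' is a hope, not an argument --- you never exhibit a centre for the modified donor part, nor show that the extremal choice forbids the bad case. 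Since claim~(ii) carries the entire induction step, the proof is incomplete at its crux.

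The paper avoids local swaps altogether. It collects \emph{all} stars $S_1,\dots,S_\ell$ of the given partition that meet $B_1$, sets $S'=\bigcup_i S_i\setminus B_1$ (of size $(\ell-1)(s+1)$, so divisibility is automatic), and shows $S'\subseteq N[u]$ for $u$ the rightmost neighbour of $v_{s+1}$; hence $S'$ splits into the two cliques $S'\cap N_l[u]$ and $S'\cap N_r[u]$, with the rightmost vertex $x$ of the former adjacent to all of the latter. Such a set can be repartitioned greedily: fill as many $s$-stars as possible inside the clique $S'\cap N_r[u]$, group the leftover with $x$ and vertices of $S'\cap N_l[u]$, and partition the rest of that clique arbitrarily. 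You would either need to genuinely prove the re-certification step for your swaps or switch to this global repartition of $S'$. A smaller, fixable point: your $O(1)$ test ``the rightmost neighbour of the block's first vertex reaches the block's last vertex'' is not equivalent to the block containing a $K_{1,s}$ --- a path on three vertices is a connected unit interval graph whose single block contains a $P_3$ centred at the middle vertex while the two endpoints are non-adjacent, so your test would wrongly reject it. Testing all $s+1$ candidate centres per block still costs only $O(n)$ overall.
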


\noindent The general idea behind the algorithm for \cref{thm:unitlinear} is to
order the vertices in such a way that we can repeatedly select the
$s+1$~leftmost vertices to form an $s$-star and then delete them. If,
at some point, the $s+1$~leftmost vertices do not contain an $s$-star,
then it can be shown that the graph cannot be partitioned
into~$s$-stars.  We order the vertices according to a so-called
bicompatible elimination order:

\newcommand{\rightmost}{\ensuremath{\mathrm{r}}}

\begin{definition}[\cite{PD03}]
  For a graph~$G=(V,E)$, a \emph{bicompatible elimination order} is an
  ordering~$\sigma\colon V \rightarrow \{1,\dotsc,n\}$
  such that, for each vertex~$v \in V$,
  \begin{align*}
    \text{the set }N_l[v] &\coloneqq \{u \in N[v] \mid \sigma(u) \le
    \sigma(v)\}\text{ of its left neighbors and}\\
    \text{the set }N_r[v] &\coloneqq \{u \in N[v] \mid \sigma(u) \ge
    \sigma(v)\}\text{ of its right neighbors}
  \end{align*}
  each form a clique in~$G$.
\end{definition}

\noindent A graph is a unit interval graph if and only if it allows for a
bicompatible elimination order~\cite{PD03}.  Our algorithm will
exploit the following property of bicompatible elimination orders:

\begin{lemma}[\cite{BKMN10}]\label[lemma]{lem:range of an edge induces a clique}
  Let $G=(V,E)$ be a connected unit interval graph and~$\sigma$ be a bicompatible elimination order for~$G$.  Then, for all~$\{u,v\}
  \in E$ with~$\sigma(u) < \sigma(v)$, the set~$\{w \in V \mid
  \sigma(u) \le \sigma(w) \le \sigma(v)\}$ induces a clique in~$G$.
\end{lemma} 

\noindent We are now ready to prove \cref{thm:unitlinear}.

\begin{proof}[Proof of \cref{thm:unitlinear}]
  Given a unit interval graph~$G=(V,E)$ with $n\coloneqq|V|$ and $m\coloneqq|E|$, we
  can compute in linear time a bicompatible elimination
  order~$\sigma$~\cite{PD03}.  Moreover, we can assume $G$~to be
  connected, thus making \cref{lem:range of an edge induces a clique}
  applicable.  For a subset~$V' \subseteq V$ let~$\rightmost(V') \coloneqq 
  \argmax_{v\in V'}\sigma(v)$ denote the rightmost vertex in~$V'$ with
  respect to~$\sigma$.

  Now, we greedily partition~$G$ into $s$-stars starting with the
  first (with respect to~$\sigma$) $s+1$ vertices~$v_1,\dotsc,v_{s+1}$ with $\sigma(v_1) < \ldots < \sigma (v_{s+1})$.
  If~$G[\{v_1,\dotsc,v_{s+1}\}]$ does not contain an~$s$-star, then we
  answer ``no''.
  Otherwise, we delete~$v_1,\dotsc,v_{s+1}$ from~$G$ and
  continue on the remaining graph.  
  If we end up with the empty graph, then
  we have found a partition of~$G$ into~$s$-stars and answer~``yes''.

  Obviously, the algorithm requires $O(n+m)$ time since checking whether an induced subgraph~$G[V']$ with $|V'|=s$ vertices contains an $s$-star runs in $O(|G[V']|)$~time and after the check we delete the set~$V'$ from the graph. 
  In this way, we touch each vertex at most once and each edge at most twice.

  \begin{figure}[t]
  \centering
  \begin{tikzpicture}
    \tikzstyle{neuron}=[circle, draw=black, inner sep=1.5pt]

    \foreach \x in {1,2,3,4} 
    \node[single, label=below:{$v_\x$},
    fill=gray] (v\x) at (\x,2) {}; 

    \foreach \x in {5,6,...,9}
    \node[single] (v\x) at (\x,2) {}; 
    
    \node[single, label=below:{$x$}] (v10) at (10,2) {}; 
    \node[single, label=below:{$u$}, fill=black] (u) at (11,2) {}; 
    \node[single] (v12) at (12,2) {}; 
    \node[single] (v13) at (13,2) {};

    \draw (v1) -- (v2); \draw (v2) to[out=45, in=135] (v5); \draw (v2)
    to[out=45, in=135] (v6);

    \draw (v3) to[out=315,in=225] (v7); \draw (v7) -- (v8); \draw (v7)
    to[out=315, in=225] (v10);

    \draw (v4) to[out=45, in=135] (v9); \draw (v9) to[out=45, in=135]
    (v12); \draw (v9) to[out=45, in=135] (v13);

    \foreach \x in {1,2,3,4} 
    \node[single, label=below:{$v_\x$},
    fill=gray] (v\x) at (\x,0) {}; 

    \foreach \x in {5,6,...,9}
    \node[single] (v\x) at (\x,0) {}; 

    \node[single, label=below:{$x$}] (v10) at (10,0) {}; 

    \node[single, label=below:{$u$}, fill=black] (u) at (11,0) {}; 

    \node[single] (v12) at (12,0) {}; 

    \node[single] (v13) at (13,0) {};

    \draw (v1) -- (v2); 
    \draw (v2) -- (v3); 
    \draw (v2) to[out=45, in=135] (v4);

    \draw (v5) -- (v6); 
    \draw (v5) to[out=45, in=135] (v7); 
    \draw (v5) to[out=45, in=135] (v8);

    \draw (v10) -- (v9); 
    \draw (v10) to[out=45, in=135] (v12); 
    \draw (v10) to[out=45, in=135] (v13);
  \end{tikzpicture}
  \caption{Example of a $3$-star partition of a unit interval graph with
    vertices ordered according to a bicompatible elimination order from left
    to right. Only the edges and vertices of the first three stars as well as
    the rightmost neighbor~$u\coloneqq \rightmost(N(v_4))$ of $v_4$ (black) are shown.
    Top: $v_1,\dotsc,v_4$ are not grouped together into a star in the
    partition.  Bottom: A possible rearrangement of the~$3$-stars as described
    in the proof of \cref{thm:unitlinear}. It is always possible to
    group~$v_1,\dotsc,v_4$ into a $3$-star.}
  \label{fig:unit}
\end{figure}
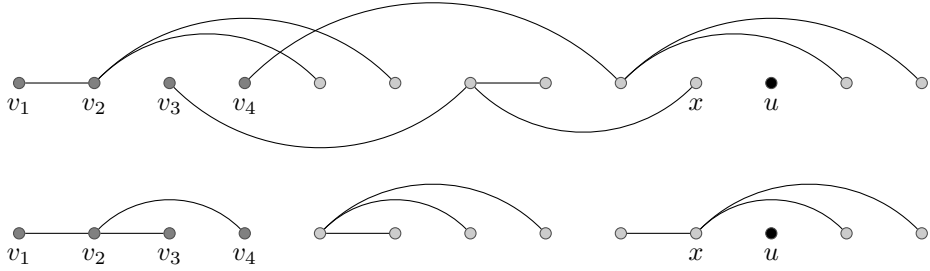
  
  It remains to show
  that this procedure is correct. To this end, we show that 
  if $G$ admits an $s$-star partition, 
  then $G$ also admits an $s$-star partition~$P'$ with $S \coloneqq \{v_1,\dots,v_{s+1}\}\in P'$
  (note that $v_1,\ldots, v_{s+1}$ are the first $s+1$ vertices).
  Let~$P$~be a partition of~$G$ into~$s$-stars such that
  $\{v_1,\dots,v_{s+1}\}\notin P$, that is, the first $s+1$ vertices
  are not grouped into one star but distributed among several
  stars. Then, let~$S_1,\dotsc,S_\ell\in P$, $2\le \ell \le s+1$,
  be the stars that contain at least one
  vertex from $S$, that is,  
  $S\subseteq \bigcup_{i=1}^{\ell}S_i$ and
  $S_i \cap \{v_1,\dotsc,v_{s+1}\} \neq \emptyset$ for~$1 \le i \le \ell$,
  and assume that~$v_1 \in S_1$.  Further,
  let~$c_i$ denote the center vertex of~$S_i$ for $1\le i \le \ell$. 
  Note that~$\sigma(\rightmost(S_1))
  > s+1$, which implies~$S \subseteq N[c_1]$.
  Since~$N_l[c_1]$ and~$N_r[c_1]$ are cliques, it follows
  that~$G[S]$ contains an~$s$-star that could
  participate in an $s$-star partition if the remaining vertices
  in~$S' \coloneqq  \bigcup_{i=1}^\ell S_i \setminus S$
  can also be partitioned into $s$-stars. To verify that this is possible,
  observe first that the number~$|S'| = (\ell-1)(s+1)$ of the remaining
  vertices is again divisible by~$s+1$.

  We now show that we can greedily partition $S'$ into stars, because $S'$
  consists of two cliques such that there is a vertex of the first
  clique that is adjacent to all vertices of the second clique. 
  To show this, we utilize the following claim, which describes the relative
  position of the center~$c_i$ of star~$S_i$ and the rightmost neighbor of $v_{s+1}$:  
  \begin{claim}\label[claim]{claim:star center and the rightmost neighbor}
    For all $1\le i \le \ell$, the center~$c_i$ of 
    star~$S_i$ satisfies that $\sigma(c_i)\le \sigma(\rightmost(N[v_{s+1}]))$.
  \end{claim}
  
  \begin{proof}[Proof of \cref{claim:star center and the rightmost neighbor}]
  \renewcommand{\qedsymbol}{(of
      \cref{claim:star center and the rightmost neighbor})~$\diamond$}
    Suppose towards a contradiction that $\sigma(c_i) > \sigma(\rightmost(N[v_{s+1}]))$.
    Then $c_i \neq \rightmost(N[v_{s+1}])$ and thus, $\{c_i,
    v_{s+1}\}\in E$ since $c_i$ is adjacent to at least one vertex
    from~$v_1,\allowbreak\dotsc,\allowbreak v_{s+1}$ and \cref{lem:range of an edge induces a
      clique} holds.  Hence, $c_i\in N_r[v_{s+1}]$, which
    contradicts~$\sigma(c_i) > \sigma(\rightmost(N[v_{s+1}]))$.
  \end{proof}
    
  Now, let $u \coloneqq  \rightmost(N[v_{s+1}])$ denote the rightmost neighbor
  of~$v_{s+1}$.
  It holds that~$S' \subseteq N[u]$. This can be seen as follows:
  For a vertex~$v' \in S'$, either $s+1 < \sigma(v') \le \sigma(u)$ or
  $\sigma(u) < \sigma(v')$ holds.
  In the first case, \cref{lem:range of an edge induces a clique}
  implies that~$\{v',u\}\in E$ since $\{v_{s+1},u\}\in E$.
  For the second case, let~$S_i$, $1\le i\le \ell$ be the star containing $v'$.
  Then, by \cref{claim:star center and the rightmost neighbor}, 
  it follows that $S_i$'s center~$c_i$ satisfies $\sigma(c_i)\le \sigma(u)$.
  If~$\sigma(u) < \sigma(v')$, then \cref{lem:range of an edge induces a clique}
  implies $\{u,v'\}\in E$ since~$\{c_i,v'\}\in E$.

  Now, consider the vertex~$x\coloneqq \rightmost(S' \cap N_l[u])$, that is,
  the rightmost vertex in~$S'$ that is a left neighbor of~$u$.
  Clearly, from \cref{claim:star center and the rightmost neighbor}
  it follows that $\sigma(c_i)\le \sigma(x)$ holds for every star center~$c_i$,
  $1\le i \le \ell$, 
  since otherwise $c_i$ were to be ordered between $x$ and~$u$, and is hence, a left neighbor of $u$---a contradiction to $x$ being the rightmost left neighbor of $u$ in $S'$.
  Thus, $x$ is adjacent to all vertices in~$S' \cap N_r[u]$ due to
  \cref{lem:range of an edge induces a clique}.
  The vertices in~$S' \cap N_l[u]$ are also adjacent to~$x$ as they
  induce a clique which includes~$x$.  Moreover, $S'\cap N_r[u]$ also induces a
  clique.
  Therefore, we simply partition the vertices in $S'$ from right
  to left (with respect to~$\sigma$) into $s$-stars.
  This is always possible since~$x$ is connected to all vertices in
  both cliques $S'\cap N_r[u]$ and $S'\cap N_l[u]$.
  \cref{fig:unit} depicts an example of the rearranged partition. 
\end{proof}

\subsection{Star Partition on trivially perfect graphs}

Recall that an \emph{interval graph} is a graph whose vertices
correspond directly to intervals on the real line, and there is an
edge between two vertices if their intervals intersect. A
\emph{trivially perfect} (also known as \emph{quasi-threshold}) graph
is an interval graph representable such that any two intervals are
either disjoint or one is properly contained in the other. %

In order to solve \probStarPart{} in linear time on trivially perfect
graphs, we will make use of the linear-time computable \emph{(rooted)
  tree representation} of connected trivially perfect
graphs~\cite{YCC96}: %

\begin{definition}[Rooted tree representation]\label[definition]{def:rtrep}
  Let $G=(V,E)$~be a connected trivially perfect graph. Let $T(G)$ be
  the directed graph on the vertex set~$V$ that contains an
  arc~$(v,w)$ if and only if 
  a) the interval representing~$v$ contains the interval representing~$w$, and
  b) there is no other vertex~$u$ such that 
  its representing interval contains the interval representing~$w$ and 
  is contained in the interval representing~$v$. 

  By definition of trivially perfect graphs, $T(G)$ is a directed tree
  having a unique vertex, the \emph{root}, with in-degree zero. 
  We call $T(G)$ the \emph{rooted tree representation} of~$G$.

  If, in $T(G)$,
  a vertex~$u$ lies on the directed path from the root to a vertex~$v$, 
  or equivalently, 
  if there is a directed path from $u$ to $v$, 
  we call $u$ \emph{ancestor of $v$} and $v$ \emph{descendant of $u$}.
  The \emph{depth} of a vertex is the length of the path from the root to this vertex. 
\end{definition}

\noindent 
\cref{def:rtrep} is illustrated in \cref{fig:trivial_perfect_graphs}. 
It is crucial to observe the equivalence of the adjacency of two vertices and their ancestor-descendant relation:

\begin{observation}\label[observation]{obs:adjacency<=>ancestors or descendants}
  The graph~$G$ contains an edge~$\{p,q\}$ if and only if
  $p$~is either an ancestor or a descendant of~$q$ in~$T(G)$.
\end{observation}

\begin{proof}
  Since $G$ is a trivially perfect graph,
  $G$ contains an edge $\{p,q\}$ if and only if either $p'\subset q'$ or $q' \subset p'$ where $p'$ and $q'$ are the representing intervals of $p$ and $q$, respectively.
  If $p'\subset q'$, 
  then there is a directed path from $q$ to $p$ in~$T(G)$.
  Conversely, $q'\subset p'$ implies that there is a directed path from $p$ to $q$ in~$T(G)$.
  By the definition of ancestors and descendants,
  $p$ is either an ancestor or a descendant of $q$. 
\end{proof}

\noindent Also the following is easy to observe:

\begin{observation}\label[observation]{obs:ancestor and edge relation}
  If there are three vertices~$p,q,r$ such that 
  $G$ contains an edge~$\{q, r\}$ and $p$ is an ancestor of $q$ in $T(G)$,
  then $G$ also contains an edge~$\{p,r\}$. 
\end{observation}

\begin{proof}
  By \cref{obs:adjacency<=>ancestors or descendants}, $\{q,r\}$ being an edge in~$G$ implies that $q$ is either an ancestor or a descendant of $r$.
  If $q$ is an ancestor of $r$, 
  then $p$ is also an ancestor of $r$, 
  implying that $G$~contains the edge~$\{p,r\}$.
  Otherwise, the interval~$q'$ representing~$q$ is properly contained in the interval~$r'$ representing~$r$.
  Since $q'$~is also properly contained in the interval~$p'$ representing~$p$ ($p$ is an ancestor of $q$),
  we obtain that $p'$ and~$r'$ are not disjoint.
  By the definition of trivially perfect graphs, 
  $\{p,r\}$~is contained in~$G$.
\end{proof}

\noindent Before presenting our algorithm, 
we show that we may assume that star partitions of~$G$ have a very restricted structure with respect to~$T(G)$. 
First of all, we can assume that the center of a star is an ancestor of all its leaves:

\begin{figure}
    \centering
    \def\gap{.2}
    \def\dist{.2}
    \def\lone{.6}
    \def\ltwo{.9}
    \def\sep{0.6} 
    \begin{tikzpicture}[>=stealth',  edge from parent path={(\tikzparentnode) -- (\tikzchildnode)}]    
      \tikzset{level distance = 17pt, sibling distance=15pt}
      \begin{scope}[xshift=-3cm]
        \node[vertex] (S-a) at (0,0) {};

        \node[vertex] (S-b) at (0,-\sep) {};

        \node[vertex] (S-c) at (\lone+2*\dist,-\sep) {};

        \node[vertex] (S-d) at (0,-2*\sep) {};

        \node[vertex] (S-e) at (2*\lone+\dist,-2*\sep) {};

        \node[vertex] (S-f) at (-\ltwo-\dist,-3*\sep) {}
        node[right = 0pt of S-f] {$f$};

        \node[vertex] (S-g) at (0,-3*\sep) {}
        node[] at ($(S-g.north east) + (0.1, 0.1)$) {$g$};

        \node[vertex] (S-h) at (\ltwo+\dist,-3*\sep) {}
        node[right = 0pt of S-h] {$h$};

        \foreach \x in {a, b, c, d, e} {
          \node[] at ($(S-\x.north east) + (0.1, 0.1)$) {$\x$};
        }

        \draw[-] (S-d) -- (S-f);
        \draw[-] (S-d) -- (S-h);

        \draw[-, draw=white, %
        line width=1.5pt]
        (S-a) to[bend right] (S-d);
        \draw[-, draw=white, %
        line width=1.5pt]
        (S-a) to[bend right=40] (S-g);
        \draw[-, draw=white, %
        line width=1.5pt]
        (S-a) to[bend left=10] (S-h);
        \draw[-] (S-a) to[bend right] (S-d);
        \draw[-] (S-a) to[bend right=40] (S-g);
        \draw[-] (S-a) to[bend left=10] (S-h);

        \draw[] (S-c) -- (S-e);
        \draw[-] (S-a) -- (S-b);
        \draw[-] (S-d) -- (S-g);

        \draw[thickedge] (S-a) -- (S-c);
        \draw[thickedge] (S-a) to[] (S-e);
        \draw[thickedge] (S-a) to[bend right=10] (S-f);

        \draw[thickedge] (S-b) -- (S-d);
        \draw[thickedge] (S-b) -- (S-h);
        \draw[-] (S-b) -- (S-f);

        \draw[-, draw=white, %
        line width=1.5pt]
        (S-b) to[bend left=85] (S-g);
        \draw[thickedge] (S-b) to[bend left=95] (S-g);  

      \end{scope}

      \node[birth] (B-f) at (0,-3*\sep) {};
      \node[death] (D-f) at (\lone,-3*\sep) {};
      \node[birth] (B-g) at (\lone+\gap,-3*\sep) {};
      \node[death] (D-g) at (2*\lone+\gap,-3*\sep) {};
      \node[birth] (B-h) at (2*\lone+2*\gap,-3*\sep) {};
      \node[death] (D-h) at (3*\lone+2*\gap,-3*\sep) {};

      \node[birth] (B-d) at (0-\dist, -2*\sep) {};
      \node[death] (D-d) at (3*\lone+3*\gap, -2*\sep) {};
      \node[birth] (B-e) at (4*\lone+7*\gap, -2*\sep) {};
      \node[death] (D-e) at (4.5*\lone+7*\gap, -2*\sep) {};

      \node[birth] (B-b) at (-2*\dist, -\sep) {};
      \node[death] (D-b) at (4*\lone+4*\gap+\dist, -\sep) {};
      \node[birth] (B-c) at (4*\lone+5*\gap+\dist, -\sep) {};
      \node[death] (D-c) at (5*\lone+5*\gap+\dist, -\sep) {};

      \node[birth] (B-a) at (-3*\dist, 0) {};
      \node[death] (D-a) at (5*\lone+5*\gap+2*\dist, 0) {};

      \foreach \x in {a,b,c,d,e,f,g,h} {
        \draw[intervalline] (B-\x) -- (D-\x) {}
        node [pos=0.5, above=-0.1] {$\x$};
      }

      \begin{scope}[shift={(5.5,0)}]
      
      \node[vertex] (S-a) at (0,0) {};

        \node[vertex] (S-b) at (0,-\sep) {};

        \node[vertex] (S-c) at (\lone+2*\dist,-\sep) {};

        \node[vertex] (S-d) at (0,-2*\sep) {};

        \node[vertex] (S-e) at (2*\lone+\dist,-2*\sep) {};

        \node[vertex] (S-f) at (-\ltwo-\dist,-3*\sep) {}
        node[right = 0pt of S-f] {$f$};

        \node[vertex] (S-g) at (0,-3*\sep) {}
        node[] at ($(S-g.north east) + (0.1, 0.1)$) {$g$};

        \node[vertex] (S-h) at (\ltwo+\dist,-3*\sep) {}
        node[right = 0pt of S-h] {$h$};

        \foreach \x in {a, b, c, d, e} {
          \node[] at ($(S-\x.north east) + (0.1, 0.1)$) {$\x$};
        }

        \draw[->] (S-a) to (S-c);
        \draw[->] (S-c) to (S-e);
        \draw[->] (S-a) -- (S-b);
        \draw[->] (S-b) -- (S-d);
        \draw[->] (S-d) -- (S-g);
        \draw[->] (S-d) -- (S-f);
        \draw[->] (S-d) -- (S-h);

      \begin{pgfonlayer}{background}
        \def\lw{8pt}
        \def\sep{1pt}
        \draw [fill=gray!30, draw=gray!30, -, line cap=round, line width=\lw]  
        ($(S-b)+(0,\sep)$) -- ($(S-g)+(0,-\sep)$);
         \draw [fill=gray!30, draw=gray!30, -, line cap=round, line width=\lw]
        (S-d) -- ($(S-h.south east)+(-\sep,\sep)$);

        \draw [fill=gray!30, draw=gray!30, -, line cap=round, line width=\lw]
        ($(S-a)+(0,\sep)$) -- ($(S-c)+(\sep, -\sep)$);
        \draw [fill=gray!30, draw=gray!30, -, line cap=round, line width=\lw]
        ($(S-c)+(\sep, -\sep)$) -- ($(S-e.south east)+(-\sep,\sep)$);
        \draw [fill=gray!30, draw=gray!30, -, line cap=round, line width=\lw]
        ($(S-a)+(0,\sep)$) -- ($(S-f.south west)+(\sep,\sep)$);
      \end{pgfonlayer}
      \end{scope}

    \end{tikzpicture}
    \caption{An example of a trivially perfect graph and its partition into
      stars~$K_{1,3}$.
      Left: The trivially perfect graph with eight vertices partitioned into stars~$K_{1,3}$ (bold).
      Middle: The interval representation.
      Right: The rooted tree representation with the corresponding partition in shaded gray.}
    \label{fig:trivial_perfect_graphs}
\end{figure}
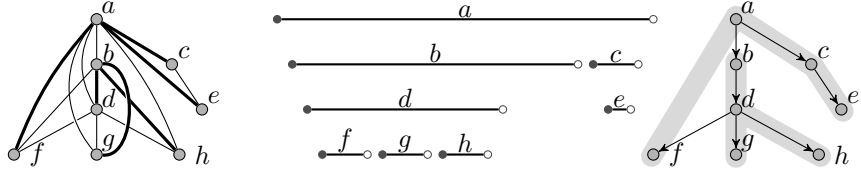

\begin{observation}\label[observation]{lem:triv_perfect:star-center-is-ancestor}
  Let $G$ be a trivially perfect graph with $n$~vertices. 
  If $G$~allows for an $s$-star
  partition~$\{V_1, V_2, \dots, V_{n/(s+1)}\}$, then each $G[V_i]$, $1\le i \le n/(s+1)$,
  contains an $s$-star whose center vertex~$c_i$ is an ancestor of all
  vertices $V_i\setminus \{c_i\}$ in~$T(G)$.
\end{observation}

\begin{proof}
  Let $c_i$ be the center of an $s$-star in $G[V_i]$.  By
  \cref{obs:adjacency<=>ancestors or descendants}, for each
  vertex~$u\in V_i\setminus \{c_i\}$, $c_i$ is either an ancestor or a
  descendant of $u$.  If $c_i$ is not an ancestor of all vertices
  in~$V_i \setminus \{c_i\}$, then let $a$~be an ancestor of~$c_i$
  in~$V_i$ with smallest depth.  Clearly, since $a$~is an ancestor of
  the center~$c_i$, $a$~is adjacent to all vertices of~$V_i$ in~$G$ by
  \cref{obs:ancestor and edge relation}.  It remains to show that
  $a$~is also an ancestor of all vertices in~$V_i\setminus \{a\}$.
  Suppose, towards a contradiction, that there is a vertex~$u \in
  V_i\setminus \{a\}$ such that $a$~is not an ancestor of~$u$.  By
  \cref{obs:adjacency<=>ancestors or descendants}, $u$~is an ancestor
  of~$a$ and is, hence, an ancestor of~$c_i$ with a smaller depth
  than~$a$---a contradiction.
\end{proof}

\noindent 
Our next observation is that we can assume that no star center is contained in a subtree~$T'$ of a rooted tree representation~$T(G)$ if $T'$~contains ``too few'' vertices.
Therefore, of special interest to use are subtrees of~$T(G)$ that can contain $s$-stars but of which no subtree can:

\begin{definition}[Center barrier]
  A subtree $X$ of a rooted tree representation~$T(G)$ is a \emph{center barrier for $s$-stars $K_{1,s}$} if $X$ has at least $s+1$~vertices and each proper subtree of~$X$ has at most $s$~vertices.
\end{definition}

\noindent 
The term ``center barrier'' is chosen since we can assume that no subtree of the center barrier contains an $s$-star center. 
Note that any (connected) rooted tree representation with at least $s+1$~vertices contains a center barrier.

\begin{observation}\label{center-barrier}
  Let $X$ be a center barrier for $s$-stars in a rooted tree representation~$T(G)$, 
  and $P$ be an $s$-star partition of~$G$. 
  Then, for any $V_i\in P$ that shares a vertex with~$X$, 
  the graph~$G[V_i]$ contains a star whose center is the root of~$X$ or an ancestor of that root.
\end{observation}

\begin{proof}
  By the definition of center barriers, 
  $X$~is a subtree of~$T(G)$.
  Let $x$~be its root.
  By \cref{lem:triv_perfect:star-center-is-ancestor}, 
  $G[V_i]$~contains a star whose center~$c$ is the ancestor of all vertices in~$V_i\setminus\{c\}$. 
  If $x\neq c$, then let $w \in V_i\cap X$, which exists by assumption.
  Observe that $x$~is an ancestor of~$w$. 
  Since $G$~contains the edge~$\{w,c\}$ ($c$~is a center for~$V_i \ni w$), 
  by \cref{obs:ancestor and edge relation} and $\{c, w\}$~being an edge in~$G$,
  $G$~also contains an edge~$\{x,c\}$.
  Then, by \cref{obs:adjacency<=>ancestors or descendants},
  $c$ is either an ancestor or a descendant of $x$. 
  If $c$~is a descendant of $x$,
  then the subtree of $T(G)$ rooted at $c$ (which contains $V_i$) is a proper subtree of~$X$.
  This is impossible since the proper subtrees of~$X$ have at most $s$~vertices.
  Hence, $x\neq c$ implies that $c$ is an ancestor of~$x$.
\end{proof}

\noindent 
Finally, we show that there exists a feasible star partition where each star
consists only of vertices from center barriers.

\begin{lemma}\label[lemma]{lem:triv_perfect:a-star-as-subtree}
  Let $G$ be a trivially perfect graph allowing for an $s$-star
  partition, let $X$ be a center barrier for $s$-stars in the rooted
  tree representation~$T(G)$, and let $x$ be the root of $X$. Then,
  $G$~admits a star partition~$\Psol$ with $S\cup \{x\}\in \Psol$,
  where $S$~consists of $s$~arbitrary vertices of~$X\setminus
  \{x\}$.
\end{lemma}

\begin{proof}
  Let $\Qsol$ be a star partition of $G$ and let $S$ consist of $s$~arbitrary vertices of~$X\setminus\{x\}$.
  By \cref{lem:triv_perfect:star-center-is-ancestor}, 
  we can assume the center~$c$ of a vertex subset~$V_i\in \Qsol$ 
  to be the one being the ancestor of all other vertices in $V_i \setminus \{c\}$.
  If $S\cup \{x\} \in \Qsol$, 
  then the partition we are searching for is $\Psol\coloneqq \Qsol$.
  Otherwise, 
  we show how to transform the partition~$\Qsol$ into a new partition~$\Psol$ containing~$S\cup\{x\}$.
  We repeatedly exchange the vertices of two vertex sets in~$\Qsol$ until 
  \begin{align}\label{P'contains an element containing all S}
    \text{ the modified partition~} \Qsol' \text{ contains a set}~V_w\text{ such that } S\subsetneq V_w. 
    \end{align}
  Finally, we set $\Psol\coloneqq  (\Qsol' \setminus \{V_w, V_u\}) \cup \{(V_u \setminus \{x\}) \cup \{w\}, (V_w\setminus \{w\}) \cup \{x\}\}$,
  where $w$~is the center of~$V_w$ and $V_u$~is the vertex set with center~$u$ in $\Qsol'$ such that $x\in V_u$.
  One can verify that both 
  $(V_w\setminus \{w\}) \cup \{x\}$ and $(V_u \setminus \{x\}) \cup \{w\}$~contain an $s$-star, 
  implying that $\Psol$~is indeed an $s$-star partition for $G$:
  on the one hand, 
  it is easy to see that $G[(V_w\setminus \{w\}) \cup \{x\}]$ contains an $s$-star with center~$x$ since $x$ is the ancestor of all vertices in $V_w \setminus \{w\}=S$.
  On the other hand, 
  the fact that $w$~and~$x$ are both ancestors of all vertices in~$S$ (for~$w$, this holds since $S\subsetneq V_w$ and by \cref{lem:triv_perfect:star-center-is-ancestor})
  implies that $w$~and~$x$ are adjacent in~$G$ (\cref{obs:adjacency<=>ancestors or descendants,obs:ancestor and edge relation}).
  Since $u$~is an ancestor of~$x$, 
  by \cref{obs:ancestor and edge relation}, 
  we have that $u$~and~$w$ are adjacent in~$G$.
  This implies that $u$~is either an ancestor or a descendant of~$w$ (\cref{obs:adjacency<=>ancestors or descendants,obs:ancestor and edge relation}).
  In any case, $G[(V_u \setminus \{x\}) \cup \{w\}]$ contains an $s$-star with star center either $u$~or~$w$. 

  Now, in the remainder of the proof, 
  we aim at transforming the partition~$\Qsol$ into a new partition~$\Qsol'$ fulfilling Property (\ref{P'contains an element containing all S}).
  To this end, 
  among all vertex subsets $V_i\in \Qsol$ with~$V_i\cap S\neq \emptyset$,
  we let $V_y$ be the one with the center~$y$ closest to~$x$ with respect to~$T(G)$ (possibly, $y=x$).
  By assumption, $|S \setminus V_y| \ge 1$.
  Thus, let $V_z\in \Qsol$ be another subset with center~$z$ that contains at least one vertex from~$S$.
  By \cref{center-barrier},
  $z$ is an ancestor of $x$.
  By the selection of $y$, $z$ is also an ancestor of $y$.
  Thus, in the graph~$G$, $z$~is adjacent to every vertex in~$V_y$ 
  and $y$~is adjacent to all vertices in~$V_z\cap S$ since $y$~is either~$x$ or an ancestor of~$x$ (\cref{center-barrier,obs:adjacency<=>ancestors or descendants}).
  Thus, by setting $V'_y \coloneqq  (V_y \setminus Y) \cup (V_z \cap S)$ and
  $V'_z \coloneqq  (V_z \setminus S) \cup Y$, where $Y\subsetneq V_y\setminus S$
  is an arbitrary size-$(|V_z\cap S|)$ subset,
  we obtain a new valid partition~$(\Qsol\setminus \{V_y, V_z\}) \cup \{V'_y, V'_z\}$ such that $V'_y$ shares more vertices with $S$ than $V_y$ does.
  Note that $Y$ exists since $|S|+1=|V_y|=|V_z|$.
  Repeating the above procedure at most $s-1$ times results in a partition satisfying~(\ref{P'contains an element containing all S}).
\end{proof}

\noindent  Based on \cref{lem:triv_perfect:a-star-as-subtree}, we now give a
 linear-time algorithm computing an $s$-star partition (if existent) of a given
 trivially perfect graph.
 
\begin{theorem}\label{thm:triv_perfect-poly}
  \probStarPart can be solved in $O(n+m)$ time on trivially perfect graphs.
\end{theorem}

\begin{proof}
\looseness=-1  Let $G$ be a connected trivially perfect graph.
  Construct a tree representation~$T(G)$ of $G$ in linear time~\cite{YCC96}.
  Furthermore, construct in linear time a directed acyclic graph~$D(G)$ from $G$ which has the same vertex set as $G$,
  and for each edge~$\{u,v\}\in E(G)$, there is an arc~$(u,v)$ in~$D(G)$ if and only if the degree of $u$ is larger than the degree of $v$ in $G$.  

  Due to \cref{lem:triv_perfect:a-star-as-subtree}, if $G$ admits an
  $s$-star partition, then $G$ also admits an $s$-star
  partition~$\Psol\coloneqq \{V_1, V_2, \ldots, V_{|V|/(s+1)}\}$ such that for
  each $i\in \{1,2,\ldots, |V|/(s+1)\}$, $V_i$ is contained in a
  center barrier of the rooted tree representation for the
  graph~$G[V\setminus (\bigcup_{j=1}^{i-1}V_j)]$ resulting from~$G$ by
  deleting the vertices in~$\bigcup_{j=1}^{i-1}V_j$.  Hence, it is
  sufficient to recursively search for a center barrier~$X$ for
  $s$-stars, and delete the root of $X$ and $s$~arbitrary remaining
  vertices from~$X$ (these deleted vertices form a subset in the
  $s$-star partition).  If, at some point, there is no center barrier,
  then there are less than~$s+1$ remaining vertices; hence, $G$ cannot
  allow for an $s$-star partition.

  To realize the above algorithm in linear time,  
  we traverse $T(G)$ in a depth-first post-order way.
  If, in $D(G)$,
  the current vertex~$u$ has at least~$s$ non-marked (out-going) neighbors, 
  then we \emph{mark} this vertex~$u$ and $s$ (arbitrary) non-marked (out-going) neighbors of $u$.
  Otherwise, we do nothing.
  We answer yes if all vertices in~$D(G)$ are marked after traversing the whole tree,
  and no otherwise.
  Since we mark, in $D(G)$, each vertex and each of its out-going neighbor at most once, and since we traverse each vertex in $T(G)$ at most once,
  by the construction of $D(G)$ and $T(G)$, 
  the total running time is $O(|V(G)|+|E(G)|)$.
\end{proof}

\subsection{\boldmath$P_3$-Partition on interval graphs}
While it might not come as a surprise that \probStarPart can be solved
efficiently on unit interval graphs using a greedy strategy, this is
far from obvious for general interval graphs even when $s=2$. 
The obstacle here is that two intervals arbitrarily far apart from each other may eventually be required to form a~$P_3$ in the solution.  Indeed, the
greedy strategy we propose to overcome this obstacle is naive in the
sense of allowing wrong choices that can be corrected later.  Note
that, while we can solve the more general \probStarPart{} in
polynomial time on subclasses of interval graphs like unit interval
graphs and trivially perfect graphs (see previous subsections), we
are not aware of a \pt{} algorithm for \probStarPart{} with
$s\geq3$ on interval graphs.

\paragraph{Overview of the algorithm.}
\looseness=-1
The algorithm is based on the following analysis of a $P_3$-partition
of an interval graph.  Each $P_3$ contains a \emph{center} and two
\emph{leaves} connected to the center via their incident edges called \emph{links}.
We associate with each interval two so-called \emph{tokens}. We require
that the link between a leaf and a center \emph{consumes} both of the leaf's
tokens (such that a leaf can be associated to only one link) and one token of
the center (which can thus be linked to two leaves).

The algorithm examines the \emph{event points} (start and end points
of intervals) of an interval representation in increasing order.  We
consider that a link~$\{x,y\}$ consumes three tokens of~$x$ and~$y$\ as soon
as one of the two intervals ends.
Intuitively, a graph is a no-instance if, at some point, an interval
with one or two remaining tokens ends, but there are not enough 
tokens of other adjacent intervals to \emph{create} a link.
Note that a link consumes three tokens. 
A graph is a yes-instance if
the number of tokens is always sufficient.

The algorithm works according to the following two rules: when an
interval starts, its two tokens are added to a list;
when an interval with remaining tokens ends, then three tokens are
deleted from this list. 
Only tokens of the earliest-ending intervals will be deleted
(this choice may not directly translate into a ``sane'' solution, 
with each link consuming tokens from only two intervals, but it turns out not 
to be a problem). The algorithm is sketched in \cref{algo:P3-interval}.
\cref{fig:intervals} shows an example instance and the list of
tokens maintained by the algorithm.  Note that a token of an interval
$x$ is simply represented by a copy of interval $x$ itself.  We now
introduce the necessary formal definitions.

\SetAlgoSkip{skipamount}
\begin{algorithm}[t]
  \caption{$P_3$-partition of an interval graph}
  \KwIn{An interval representation of an interval graph with pairwise
    distinct event points in~$\{1,\dots,2n\}$.}  \KwOut{\texttt{true}
    if the graph allows for a $P_3$-partition, otherwise
    \texttt{false}.}
  \label{algo:P3-interval}
  $\A 0\gets$ empty token list $\emptyset$\;
  \For(\label{algoLoop}){$t\gets 1$ to $2n$}{
    \lIf{$t=\birth(x)$}
    {$\A{t}\gets\A {t-1} \qAdd (x,x)$ \label{alg:insert step}}
    \If{$t=\death(x)$}
    {\lIf{$x\notin \A{t-1}$}{$\A t\gets \A {t-1}$}
      \lElseIf{$\qSize{\A {t-1}}<3$}
      {\Return{\texttt{false}}} 
      \Else{$(x,y,z)\gets$ top three elements of $\A{t-1}$ (intervals
        ending first)\; $\A{t}\gets\A {t-1} \qDel
        (x,y,z)$\label{delAtline}\; } } }
    
  \Return{\texttt{true}}\;
\end{algorithm}

\begin{figure}[t]
  \centering \def\sep{0.5}
  \begin{tikzpicture}[draw=black!70, xscale=0.7]

    \begin{scope}[xshift=-6cm]
      \node[vertex] (S-a) at (2.5,2*\sep) {}; \node[vertex] (S-b) at
      (1.5, \sep) {}; \node[vertex] (S-c) at (2.5, \sep) {};
      \node[vertex] (S-d) at (3.5, 0) {}; \node[vertex] (S-e) at (4,
      \sep) {}; \node[vertex] (S-f) at (4.5, 0) {};

      \foreach \x in {a,e,f} \node at ([shift={(90:0.22)}]S-\x) {$\x$}
      ;

      \foreach \x in {b,c,d} \node at ([shift={(-90:0.22)}]S-\x)
      {$\x$} ;

      \foreach \x/\y in {a/d,a/e,d/c} \draw (S-\x) -- (S-\y);

      \foreach \x/\y in {a/b,a/c,d/e,e/f} \draw[thickedge] (S-\x) --
      (S-\y);
    
    \end{scope}

    \node[birth] (B-a) at (1,2*\sep) {}; \node[death] (D-a) at
    (9,2*\sep) {}; \node[birth] (B-b) at (2,\sep) {}; \node[death]
    (D-b) at (3,\sep) {}; \node[birth] (B-c) at (4,\sep) {};
    \node[death] (D-c) at (6,\sep) {}; \node[birth] (B-d) at (5,0) {};
    \node[death] (D-d) at (8,0) {}; \node[birth] (B-e) at (7,\sep) {};
    \node[death] (D-e) at (11,\sep) {}; \node[birth] (B-f) at (10,0)
    {}; \node[death] (D-f) at (12,0) {};

    \foreach \x in {a,b,c,d,e,f} \draw[intervalline] (B-\x) -- (D-\x) {} node
    [pos=0.5, above=-0.1] {$\x$} ;

    \begin{scope}[yshift=-0.6cm]

      \newcommand{\listdown}[2] { \node[table] at (#1+0.2,0.1)
        {$\A{#1}$}; \foreach \y/\l in {#2} \node[table] at
        (#1+0.2,-0.7*\y*\sep) {$\l$}; } 
      \newcommand{\add}[2] {
        \draw[draw=black!50] (#1-0.3, - 0.7*#2*\sep+0.35*\sep) --
        (#1-0.3, -0.7*#2*\sep-1.05*\sep) ; \node[fill=white,inner
        sep=1pt] at (#1-0.34, -0.7*#2*\sep-0.35*\sep)
        {\color{black!50}\footnotesize $\qAdd$}; }
      \newcommand{\del}[2] { \draw[draw=black!50] (#1-0.3,
        -0.7*#2*\sep+0.35*\sep) -- (#1-0.3, -0.7*#2*\sep-1.75*\sep) ;
        \node[fill=white,inner sep=1pt] at (#1-0.34,
        -0.7*#2*\sep-0.7*\sep) {{\color{black!50}\footnotesize
            $\qDel$}}; }
    
	\begin{scope}[yscale=-1, yshift=1.7cm]

      \listdown0{1/\emptyset} \listdown1{1/a,2/a} \add 11
      \listdown2{1/a,2/a,3/b,4/b} \add 23 \listdown3{1/a} \del 32
      \listdown4{1/a,2/c,3/c} \add 42 \listdown5{1/a,2/d,3/d,4/c,5/c}
      \add 52 \listdown6{1/a,2/d} \del 63 \listdown7{1/e,2/e,3/a,4/d}
      \add 71 \listdown8{1/e} \del 82 \listdown9{1/e}
      \listdown{10}{1/f,2/f,3/e} \add {10}1 \listdown{11}{1/\emptyset}
      \del {11}1 \listdown{12}{1/\emptyset}
      \end{scope}
    \end{scope}
  
   \end{tikzpicture}
  \caption{Left: An interval graph with six vertices and a
    $P_3$-partition $\Psol$ (bold).  Right: Interval representation of
    this graph and successive token lists $\A0,\ldots,\A{12}$
    computed by \cref{algo:P3-interval} (additions and deletions
    are marked with $\qAdd$ and $\qDel$). }
  \label{fig:intervals}
\end{figure}

\begin{definition}
Let $G=(V,E)$ be a fixed interval graph. 
We assume that any vertex~$u\in V$ represents a right-open interval $u=[\birth (u),\death(u)[$ with integer end points $\birth(u) < \death(u)$. 
Moreover, without loss of generality, each position in~$\{1,\ldots,2n\}$
corresponds to exactly one event.

Let $\Psol$ be a $P_3$-partition and $P=\{x,y,z\}\in \Psol$ with 
$\death(x)<\death(y) < \death(z)$, 
we write $\rank_\Psol(x)=1$, $\rank_\Psol(y)=2$, and $\rank_\Psol(z)=3$
(we omit the subscript when there is no ambiguity). 
Moreover, we call the element among $\{y,z\}$
having the earliest start point the \emph{center} of~$P$.
The other two elements of~$P$ are called
\emph{leaves}.  Note that the center of~$P$ intersects both leaves.

A \emph{token list}~$Q$ is a list of intervals $(q_1,\ldots,q_k)$
sorted in decreasing order of their end points ($\death(q_i)\geq\death(q_{j})$ for
$1\leq i\leq j\leq k$). 
To avoid confusion with the left-to-right sequence of event points, 
we consider the list to be written vertically, with the earliest-ending interval on top. %
We write
$\qSize{Q}$ for the length of~$Q$, $\emptyset$ for the empty token
list, and $x\in Q$ if interval $x$ appears in~$Q$.  We now define
\emph{insertion}~$\qAdd$, \emph{deletion}~$\qDel$,
and \emph{comparison}~$\lessgood$ of token lists: $Q\qAdd
(x_1,\ldots, x_l)$ is the token list obtained from $Q$ by inserting
intervals $x_1\dots,x_l$ so that the list remains sorted.  For $x\in
Q$, the list~$Q\qDel x$ is obtained by deleting one copy of $x$ from
$Q$ (otherwise, $Q\qDel x=Q$); and $Q\qDel (x_1,\ldots, x_l)=Q\qDel
x_1\qDel\ldots\qDel x_l$.
We write $(q_1,\ldots,q_k)\lessgood (q'_1,\ldots,q_{k'}')$ if $k\leq
k'$ and $\forall i\in \{1,\ldots k\}: \death(q_i) \le \death(q_i').$

Let $\Psol$ be a $P_3$-partition. We define $\tokens(\Psol)$ as a
tuple $(\h0,\h1,\ldots,\h{2n})$ of $2n+1$ token lists such that
$\h0\coloneqq \emptyset$ and for $t\in \{1,\ldots, 2n+1\}$,
\begin{itemize}
\item if $t=\birth(x)$, then $\h{t}\coloneqq \h{t-1} \qAdd (x,x)$,
\item if $t=\death(x)$, then let $P\coloneqq \{x,y,z\}$ be the $P_3$ in
  $\Psol{}$ containing $x$ and
  \begin{itemize}
  \item if $\rank(x)=1$, then $\h{t}\coloneqq \h{t-1}\qDel (x,x,c)$ where $c$
    is the center of $P$,
  \item if $\rank(x)=2$, then $\h{t}\coloneqq \h{t-1}\qDel (x,x,y,y,z,z)$,
  \item if $\rank(x)=3$, then $\h{t}\coloneqq \h{t-1}$.
  \end{itemize}
\end{itemize}
Note that in \cref{fig:intervals}, each token list
$\h{t}$ for $\Psol$ is equal to the respective $\A{t}$, except for $\h
6=(d,d)$ and $\h 7=(e,e,d,d)$.
\end{definition}

\noindent To %
compare the token lists generated by
\cref{algo:P3-interval} to those induced by a $P_3$-partition, we
show a few properties for both types of lists.
 
\begin{property}\label{property:token:delete_size=3}
  Let $\Psol$ be a $P_3$-partition with
  $\tokens(\Psol)=(\h0,\h1,\ldots,\h{2n})$ and let~$x$ be an interval
  with $t\coloneqq \death(x)$. Then, one of the following is true:
  \begin{enumerate}[i)]
  \item $x\in\h {t-1}$, $\qSize{\h{t-1}}\ge 3$, and
    $\qSize{\h{t}}=\qSize{\h{t-1}}-3$ or
  \item $x\notin\h {t-1}$ and $\h{t}=\h {t-1}$.
  \end{enumerate}
  Moreover, in both cases, $x\notin \h {t}$.
\end{property}

\begin{proof}
  Let $P\in\Psol$ be the $P_3$ containing $x$.  Depending on the rank
  of~$x$, we prove that either case~(i) or~(ii) applies.

  If $\rank(x)=1$, then $x$ is not the center of $P$. Let $c$ be the
  center of $P$. Since $c$ is adjacent to $x$, it follows that
  $\birth(c) < t$. Since $x$ ranks first, $\h{t-1}$ contains twice
  both elements~$x$ and~$c$.  Hence, $\qSize{\h{t-1}}\ge 4$, and from
  the definition of $\h{t}=\h{t-1}\qDel (x,x,c)$ it follows that
  $\qSize{\h{t}}=\qSize{\h{t-1}}-3$, we are thus in case~(i).
  Moreover, only one copy of~$c$ remains in~$\h {t}$.

  If $\rank(x)=2$ and $x$ is not the center, then let~$c$ be the center of
  $P$ and $y$ be the interval of the first rank in $P$. 
  From the reasoning above, 
  it follows that $\h{t-1}$ contains once $c$ and twice $x$ but
  no~$y$, implying that $\qSize{\h{t-1}}\ge 3$.
  As~$\h{t}=\h{t-1}\qDel (x,x,c,c,y,y)$, this implies that
  $\qSize{\h{t}}=\qSize{\h{t-1}}-3$: we are in case~(i).

  If $\rank(x)=2$ and $x$ is the center, then let $P=\{x, y, z\}$ such
  that $y$ ranks first and $z$ ranks third. Using the same reasoning
  as before it follows that $\h{t-1}$ contains~$x$ once and $z$ twice,
  but not $y$, implying that $\qSize{\h{t-1}}\ge 3$.
  As~$\h{t}=\h{t-1}\qDel (x,x,y,y,z,z)$, this implies that
  $\qSize{\h{t}}=\qSize{\h{t-1}}-3$: we are in case~(i).
  
  Finally, if $\rank(x)=3$, then $\h{t-1}$ does not contain $x$ (the
  last copies have been removed when the rank-2-interval ended), and
  $\h{t}=\h {t-1}$: we are in case~(ii).
  
  The fact that $x\notin \h{t}$ is clear in each case (all copies are
  removed when $x\in \h {t-1}$, none is added).
\end{proof}

\begin{property}\label{property:in_token_is_alive}
  For any $\A{t}$ defined by \cref{algo:P3-interval} and $x\in
  \A{t}$, it holds that $\birth(x)\leq t<\death (x)$.
  For any $P_3$-partition $\Psol$ with $\tokens(\Psol) =
  (\h0,\h1,\dots,\h{2n})$ and $x\in \h{t}$, it holds that
  $\birth(x)\leq t<\death (x)$.
\end{property}
\begin{proof}
  An element $x$ is only added to a token list $\A t$ or $\h t$ when
  $t=\birth(x)$, so the inequality $\birth(x)\leq t$ is trivial in
  both cases.  Consider now an interval~$x$ and~$t\coloneqq \death(x)$. We
  show that neither $\A t$ nor $\h t$ contain~$x$, which suffices to
  complete the proof.

  The fact that $x\notin \h t$ is already proven in
  \cref{property:token:delete_size=3}. Moreover, if $x\notin
  \A{t-1}$, then $x\notin \A t$ follows obviously.

  Now, assume that $x\in \A{t-1}$. We inductively apply
  \cref{property:in_token_is_alive} to obtain that for any
  $y\in\A{t-1}$, we have $t-1<\death(y)$ (note that the property is
  trivial for~$A_0$). Hence, $x$ is the interval with the earliest end
  point in $\A{t-1}$ (i.\,e., the interval on top) and all of its copies%
  (at most two) are removed from $\A{t-1}$ to obtain $\A t$ in
  line~\ref{delAtline} of \cref{algo:P3-interval}. It follows that
  $x\notin \A t$.  
\end{proof}

\begin{property}\label{property:keep_list_order}
  Let $Q=(q_1,\ldots,q_k)$ and $Q'=(q'_1,\ldots,q'_{k'})$ be two
  token lists such that $Q\lessgood Q'$.  Then for any $q_i\in Q$,
  $Q\qDel q_i\lessgood Q'\qDel q'_{k'}$ and for any interval~$x$,
  $Q\qAdd x\lessgood Q'\qAdd x$.
\end{property}
\begin{proof}
  For both insertion and deletion, the size constraint is clearly
  maintained (both list lengths respectively increase or decrease by
  1).  It remains to compare pairs of elements with the same index in
  both lists (such pairs are said to be \emph{aligned}).

  For the deletion case, $q_i$ is removed from $Q$. For any $j\neq i$,
  $q_j$ is now aligned with either~$q'_j$ (if $j<i$) or $q'_{j-1}$ (if
  $j>i$).  We have $\death(q_j)\leq \death(q'_j)$ since $Q\lessgood
  Q'$ and $\death(q'_j)\leq \death(q'_{j-1})$ since $Q'$ is
  sorted. Hence, $q_j$~is aligned with an interval in $Q'$ ending no later
  than $q_j$ itself.
  
  We now prove the property for the insertion of $x$ in both $Q$ and
  $Q'$.  An element $q$ of $Q$ or $Q'$ is said to be \emph{shifted} if it is
  higher than the insertion point of~$x$ (assuming that already-present%
  copies of $x$ in $Q$ or $Q'$ are not shifted), this is equivalent to
  $\death(q)<\death(x)$.  Note that if some~$q'_i$ is shifted but
  $q_i$ is not, then $\death(q'_i)<\death(x)\leq \death(q_i)$, a
  contradiction to $\death(q'_i)\geq \death(q_i)$.  This implies that
  the insertion point of $x$ in $Q'$ is not lower than the insertion%
  point in $Q$.
  
  Let $q$ be an interval of $Q\qAdd x$, now aligned with some $q'$ in
  $Q'\qAdd x$. We prove that $\death(q')\geq \death(q)$.  Assume first
  that $q=x$, then either $q'=x$, in which case trivially
  $\death(q')\geq \death(q)$, or $q'\neq x$. Then, $q'$ cannot be
  shifted (since $x$'s insertion point is not lower in $Q'$ than%
  in~$Q$), and $\death(q')\geq \death(x)= \death(q)$.

  Assume now that $q\neq x$. Then, $q=q_i$ for some $i$. With
  $Q\lessgood Q'$, we have $\death(q)\leq\death(q'_i)$. If $q'=q'_i$
  then we directly have $\death(q)\leq\death(q')$. Otherwise, exactly
  one of $q_i$ and $q'_i$ must be shifted. It cannot be $q'_i$ (the
  insertion point of $x$ is not lower in $Q'$ than in~$Q$), hence%
  $q_i$ is shifted and $q'_i$ is not. 
  In~$Q$ we have $\death(q_i)<\death(x)$, and in $Q'$ 
  interval $q'_i$ must be placed directly below $q'$ and %
  both cannot occur higher than $x$ (note that $q'=x$ is%
  possible), thus we have $\death(q'_i)\geq \death(q')\geq
  \death(x)$. Overall, we indeed have $\death(q)\leq\death(q')$.
\end{proof}

\noindent Using the proven properties, we can put the token lists
defined by a $P_3$-partition into relation with the token lists
generated by \cref{algo:P3-interval}.

The following two lemmas state that, on the one hand, if there is a
$P_3$-partition, then each token list created by \cref{algo:P3-interval}
is comparable with the corresponding~$\h{t}$, hence it 
always contains enough tokens to create the next list, up to $\A{2n}$, and
answer ``\texttt{true}'' in the end.  
On the other hand, if the algorithm
returns ``\texttt{true}'', then it is indeed possible to construct a
$P_3$-partition using (indirectly) the triples of intervals removed from the 
token list to create the links.

\begin{lemma}\label[lemma]{lem:P3-partition->h_t<=A_t} 
  If an interval graph $G$ has a $P_3$-partition $\Psol$, then, for all
  $0\le t \le 2n$, \cref{algo:P3-interval} defines list $\A t$ with
  $\h{t}\lessgood \A{t}$ and $\qSize{\h{t}}-\qSize{\A{t}}\equiv 0\pmod
  3$, where $\tokens(\Psol) = (\h0,\h1,\ldots,\h{2n})$.
\end{lemma}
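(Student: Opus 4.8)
The plan is to strengthen the statement and prove it by induction on~$t$, carrying four invariants simultaneously: (i)~\autoref{algo:P3-interval} has not yet returned \texttt{false}, so $\A t$ is defined; (ii)~every handle occurring in $\A t$ or in $\h t$ belongs to an interval ending strictly after~$t$ (``no expired handles''); (iii)~$\h t\lessgood \A t$; and (iv)~$\qSize{\h t}-\qSize{\A t}\equiv 0\pmod 3$. Invariant~(iv) is essentially free: a start event adds two handles to each list and every deletion removes exactly three handles from a list, so both $\qSize{\h t}$ and $\qSize{\A t}$ are congruent to twice the number of intervals started so far modulo three.

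Before the induction I would isolate three elementary monotonicity facts about~$\lessgood$, all consequences of the observation that the $i$-th largest endpoint of a sub-multiset is at most the $i$-th largest endpoint of the full multiset: (a)~inserting identical handles into two $\lessgood$-comparable lists preserves~$\lessgood$; (b)~deleting elements from a list yields a $\lessgood$-smaller list, and $\lessgood$ is transitive; (c)~if $L\lessgood M$ and $\qSize{L}\ge 3$, then deleting \emph{any} three elements of~$L$ and the three \emph{earliest-ending} elements of~$M$ leaves the two results $\lessgood$-comparable. Fact~(c) is the workhorse: $M$ keeps its latest-ending elements, so its first $\qSize{M}-3$ entries are unchanged, while the entries of the trimmed~$L$ can only shrink.

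The start event $t=\birth(x)$ is immediate: both lists receive $(x,x)$, so (a) gives~(iii), and (ii) holds because $x$ ends after~$t$ and, $t$ being a start event, no surviving handle ends exactly at~$t$. For an end event $t=\death(w)$ I would split on $\rank(w)$ within its $P_3$. If $\rank(w)\in\{1,2\}$, the definition of $\handles(\Psol)$ removes exactly three present handles of that $P_3$, so $\qSize{\h{t-1}}\ge 3$ and hence $\qSize{\A{t-1}}\ge 3$ by~(iii); when $w\in\A{t-1}$ the algorithm also deletes three handles and fact~(c) closes the step, and when $w\notin\A{t-1}$ the algorithm does nothing and transitivity via~(b) suffices. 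Invariant~(ii) guarantees that the copies of~$w$, which end exactly at~$t$, are the globally earliest-ending handles, so the algorithm's deletion indeed removes them and leaves only handles ending after~$t$.

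The hard part is the case $\rank(w)=3$, where $\handles(\Psol)$ deletes nothing but the algorithm may still delete three handles if $w\in\A{t-1}$. Here I would first note $w\notin\h{t-1}$, since both handles of the rank-$3$ interval have already been consumed by its two links; then (ii) shows that every handle of $\h{t-1}$ ends strictly after~$t$, whereas the copies of~$w$ sit at the very bottom of $\A{t-1}$. Domination~(iii) then forces $\qSize{\h{t-1}}\le\qSize{\A{t-1}}-1$, and combining this with the congruence~(iv) upgrades the gap to $\qSize{\h{t-1}}\le\qSize{\A{t-1}}-3$. This slack is exactly what is needed: it certifies $\qSize{\A{t-1}}\ge 3$ (so the algorithm does not fail) and, after the algorithm trims the three earliest-ending handles of $\A{t-1}$, it yields $\h t=\h{t-1}\lessgood\A t$. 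This interplay of the modulo-three invariant with the domination order is the crux; the remaining bookkeeping is routine given facts~(a)--(c).
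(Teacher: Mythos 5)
Your proposal is correct and follows essentially the same route as the paper: the same induction on event points carrying the domination $\h t\lessgood\A t$ together with the mod-$3$ size congruence, the same monotonicity facts about $\lessgood$ under insertion and deletion (the paper's \cref{property:keep_list_order}), and the same crucial step of using the congruence to upgrade the size gap from $1$ to $3$ when the ending interval is absent from $\h{t-1}$ but present in $\A{t-1}$. The only cosmetic differences are that you fold the ``no expired handles'' fact (\cref{property:in_handle_is_alive}) and the ``exactly three present handles are deleted'' fact (\cref{property:handle:delete_size=3}) into the induction invariants rather than proving them as standalone properties, and you organize the case split by $\rank$ instead of by membership in $\h{t-1}$, which amounts to the same partition of cases.
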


\begin{proof}
  We show by induction that for any position $t$, $0\le t \le 2n$, the
  algorithm defines a list $\A{t}$ with $\h{t} \lessgood \A{t}$ and
  $\qSize{\h{t}}-\qSize{\A{t}}\equiv 0\pmod 3$.

  For $t=0$, \cref{algo:P3-interval} defines list $\A 0=\emptyset$,
  and $\h{0} = \emptyset \lessgood \A{0}$.  Consider now some $0<t\le
  2n$, and assume that the induction property is proven for $t-1$.
  
  If an interval~$x$ starts at position~$t$, then $x\notin \h{t-1}$,
  $x\notin \A{t-1}$, $\h{t}=\h{t-1}\qAdd(x,x)$, and
  \cref{algo:P3-interval} defines $\A{t}\coloneqq \A{t-1}\qAdd(x,x)$.  Then
  property $\qSize{\h {t}}-\qSize{\A{t}}\equiv 0\pmod 3$ is trivially
  preserved, and \cref{property:keep_list_order} implies
  $\h{t}\lessgood \A{t}$.

  If an interval~$x$ ends at~$t$, then we first show that
  \cref{algo:P3-interval} defines $\A{t}$.  Towards a
  contradiction, suppose that $\A{t}$ is not defined.  This means that
  $x\in\A{t-1}$ and $\qSize{\A{t-1}}\le 2$.  Then, $\qSize{\h{t-1}}\le
  2$ since $\h{t-1}\lessgood \A{t-1}$, which implies $\qSize{\h{t-1}}=
  \qSize{\A{t-1}}$ (since $\qSize{\h{t-1}}-\qSize{\A{t-1}}\equiv
  0\pmod 3$).  By \cref{property:token:delete_size=3}, we must have
  $x \notin \h{t-1}$ and $\h {t}=\h{t-1} $ (the second case).  
  Also, with
  $\h{t-1}\lessgood \A{t-1}$, the top element~$x'$ in $\h{t-1}$
  must have $\death(x')\leq \death(x)=t$.  By
  \cref{property:in_token_is_alive} and due to $x' \in T_{t-1}$, 
  $\death(x')>t-1$,
  i.\,e., $\death(x')=\death(x)$, and $x'=x$: a contradiction since
  $x\notin\h{t-1}=A_{t-1}$.
   
  We have shown that \cref{algo:P3-interval} defines $\A{t}$.
  Moreover, we have $\qSize{\h{t}}-\qSize{\h{t-1}}\in \{0,-3\}$
  (\cref{property:token:delete_size=3}), and
  $\qSize{\A{t}}-\qSize{\A{t-1}}\in \{0,-3\}$
  (\cref{algo:P3-interval}), hence $\qSize{\h
    {t}}-\qSize{\A{t}}\equiv 0\pmod 3$.  Note also that
  $\h{t}\lessgood\h{t-1}$ and $\A{t}\lessgood\A{t-1}$.
  
  If $x\notin \A{t-1}$, then $\A{t}=\A{t-1}$, and
  $\h{t}\lessgood\h{t-1}\lessgood \A{t-1} = \A{t}$.

  If $x\in \A{t-1}$ and $x\notin \h{t-1}$, then $\h{t}=\h{t-1}$
  (\cref{property:token:delete_size=3}). Observe that
  $\h{t-1}\lessgood\A{t-1}$ and, therefore, $\qSize{\h{t-1}} =
  \qSize{\A{t-1}}$ would imply the existence of an interval~$u\in
  \h{t-1}$ with $\death(u) \le \death(x)$. This is impossible since
  for all $u\in \h{t}=\h{t-1}$, by \cref{property:in_token_is_alive},
  $\death (u)>t=\death(x)$. Thus, one has
  $\qSize{\h{t-1}}<\qSize{\A{t-1}}$, which implies
  $\qSize{\h{t-1}}\leq \qSize{\A{t-1}}-3$.  Hence, since the
  \emph{top} three elements of $\A{t-1}$ are removed to obtain%
  $\A{t}$, from $\h{t-1}\lessgood\A{t-1}$ we conclude
  $\h{t-1}\lessgood \A{t}$ and, in turn, $\h{t}\lessgood \A{t}$.
     
  If $x\in \A{t-1}$ and $x\in \h{t-1}$, then let $(u,v,w)$ be the
  three deleted intervals from~$\h {t-1}$, and $(u',v',w')$ the top%
  three elements of $\A{t-1}$ (which are removed to obtain
  $\A{t}$). Then, applying \cref{property:keep_list_order} three
  times, we obtain $\h{t-1}\qDel (u,v,w)\lessgood \A{t-1}\qDel
  (u',v',w')$, i.\,e., $\h{t}\lessgood \A{t}$. 
\end{proof}

\noindent Before we move on to our last lemma for the interval graph algorithm, 
we introduce further notions necessary for constructing a $P_3$-partition 
from the list~$A_{2n}$ that our algorithm produces at step~$2n$.

\begin{definition}[Partial partition]
  Given an interval $x$ and a
  token list $Q$, we write $\occ{x}{Q}$ for the number of occurrences
  of interval $x$ in $Q$.  For $0\le t\le 2n$, let $\Psol =
  \{V_1,\ldots, V_k\}$ be a partition of~$\{u\in V\mid\birth (u)\le
  t\}$.  Then $\Psol$ is called a \emph{partial partition at $t$} if
  each~$V_j$ is either 
  \begin{itemize}
  \item a singleton $\{x\}$, in which case $\death(x)> t$, 
  \item an edge $\{x,y\}$, in which case $\max\{\death(x),\death(y)\} > t$, 
  \item a triple $\{x,y,z\}$ containing a $P_3$. 
  \end{itemize} 
  Note that a $P_3$-partition of an interval
  graph corresponds to a partial partition at~$t=2n$.
  A partial solution~$\Psol$ at~$t$ \emph{satisfies}~$\A{t}$ if
  \begin{itemize}
  \item for any singleton $\{x\}\in \Psol$ we have $\occ{x}{\A{t}}=2$, 
  \item for any edge $\{x,y\}\in \Psol$ with $\death(x)<\death(y)$ we have $\occ
    x{\A t}=0$ and $\occ y{\A t}=1$, and
  \item for any triple~$\{x,y,z\}\in
  \Psol$ we have $\occ x{\A t}=\occ y{\A t}=\occ z{\A t}=0$.
  \end{itemize}
\end{definition}

\noindent Note that, for any $x\in \A t$, since $\birth(x)\leq
t<\death(x)$~(\cref{property:in_token_is_alive}), it follows that
$x$ must be in a singleton or in an edge of any partial solution
satisfying $\A t$.
Moreover, for any $t$ and $x,y\in \A t$ with $x\neq y$, intervals $x$ and $y$
intersect (there is an edge between them in the interval graph).

\begin{lemma}\label[lemma]{lem:A_t-defined->partial solution}
  Let $G$ be an interval graph such that \cref{algo:P3-interval}
  returns \texttt{true} on~$G$.  Then $G$ admits a $P_3$-partition.
\end{lemma}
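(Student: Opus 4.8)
The plan is to construct a $P_3$-partition by scanning the event points and maintaining, as an invariant, a \emph{completable partial solution} attached to each list~$\A t$ produced by \autoref{algo:P3-interval}. Before anything else I would establish one structural fact that makes the whole argument possible: a \emph{clique property} stating that, for every~$t$, all intervals occurring in~$\A t$ pairwise intersect. The reason is that any interval $q\in\A t$ has already started ($\birth(q)\le t$) but has not yet ended ($\death(q)>t$): as soon as an interval ends while still present in the list it is, by distinctness of event points, the unique earliest-ending element, so all of its remaining copies sit at the bottom of the list and are removed in the deletion step. Hence every interval of~$\A t$ contains the point~$t$, so any three of them induce a triangle and in particular contain a $P_3$ with an arbitrarily chosen center. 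This is exactly what converts the purely numerical handle bookkeeping into genuine $P_3$'s.

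I would then argue by induction on $t=0,1,\dots,2n$. The invariant says that there is a partial arrangement --- some vertex-disjoint $P_3$'s on the already fully consumed intervals, plus a set of ``open centers'' (intervals having contributed one handle so far) matching the single-copy entries of~$\A t$, with the double-copy entries still unassigned --- that can be \emph{extended} to a $P_3$-partition of all of~$G$. The routine steps are quick: the base case $t=0$ is the empty arrangement; a birth step $\A t=\A{t-1}\qAdd(x,x)$ only introduces a fresh unassigned interval; and a death step with $x\notin\A{t-1}$ removes nothing. Since every interval loses all its handles no later than at its own death, the final list~$\A{2n}$ is empty, so at $t=2n$ the partial solution has no open centers and is a genuine $P_3$-partition of~$V$; this also forces $3\mid n$ (all $2n$ handles are removed three at a time), so no divisibility hypothesis is needed.

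The crux, and the step I expect to cost essentially all of the work, is the active death step in which \autoref{algo:P3-interval} deletes the three lowest handles $(x,y,z)$ of~$\A{t-1}$. Here the greedy choice need not match the links of the current completion --- the ``wrong choices corrected later'' phenomenon visible in \cref{fig:intervals}, where at $t=6$ the algorithm provisionally treats~$d$ as a center although $d$ is ultimately a leaf of~$e$. I would handle this with an \emph{exchange argument}: take a completion~$\Psol$ witnessing the invariant for~$\A{t-1}$, look at the handles that $\handles(\Psol)$ would consume at time~$t$ according to the role of~$x$ in~$\Psol$, and reroute~$\Psol$ locally so that it instead consumes exactly $(x,y,z)$. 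Two things make the reroute legal. First, validity is preserved because $x,y,z$ and every interval swapped with them lie in the common clique~$\A{t-1}$ and therefore pairwise intersect, so any reassignment of leaf/center roles among them still yields edges of~$G$. Second, completability is preserved because the algorithm always selects the earliest-ending available handles, and replacing a longer-living handle by an earlier-ending one only relaxes future constraints --- the same domination $\h t\lessgood\A t$ of \autoref{lem:P3-partition->h_t<=A_t}, which I would invoke for~$\Psol$ to relate the handles any completion consumes to those the algorithm consumes, and thereby guarantee that the earliest-ending handles are present in~$\Psol$ and available to be swapped in. The delicate bookkeeping --- deciding for each of $x,y,z$ whether it is being used as a leaf or as a center, and updating the open-center set accordingly, including the awkward sub-case where $x$ contributes only one handle to $(x,y,z)$ --- is where the real difficulty lies.
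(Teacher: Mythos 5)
There is a genuine gap, and it is structural: your induction is circular. Your invariant at time~$t$ is the existence of a partial arrangement consistent with~$\A t$ \emph{that can be extended to a $P_3$-partition of all of~$G$}, and your base case is the empty arrangement at $t=0$. But the empty arrangement is completable if and only if $G$ has a $P_3$-partition --- which is precisely the conclusion of \cref{lem:A_t-defined->partial solution}. The same assumption resurfaces in your crux step, where you ``take a completion $\Psol$ witnessing the invariant'' and reroute it, and again when you invoke the domination $\h t\lessgood\A t$ of \cref{lem:P3-partition->h_t<=A_t} ``for~$\Psol$'': that lemma's hypothesis is that a $P_3$-partition exists. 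What you have sketched is an exchange argument showing that \emph{if} $G$ has a $P_3$-partition, then it has one whose handle consumption tracks the algorithm's greedy choices; that is a variant of the converse direction, not this lemma, which must manufacture a partition out of nothing but the fact that \cref{algo:P3-interval} never fails.

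The repair is to drop completability from the invariant entirely, which is what the paper does. Its invariant is purely local: a partition of the already-started intervals into singletons (still alive, two handles in~$\A t$), edges (the later-ending endpoint alive with exactly one handle), and completed $P_3$-triples (no handles) --- with no claim about the future. Your clique property (all intervals in $\A{t-1}$ pairwise intersect, which you justify correctly) is then already enough to carry out the deletion step directly and forwards: the three lowest handles $(x,y,z)$ belong to at most three singletons or half-finished edges, and a short case analysis on the multiplicities of $x$, $y$, $z$ in $\A{t-1}$ merges them into valid new parts (two singletons become an edge; a singleton plus an edge becomes a triple; three edges become two triples, choosing centers by comparing $\birth(x)$ and $\birth(y)$ so that the already-ended partners still intersect their new center). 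No pre-existing completion and no exchange is needed, and at $t=2n$ no part can be a singleton or an edge (both require a still-alive interval), so the partial partition obtained is a $P_3$-partition of~$G$.
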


\begin{proof}
  We prove by induction that for any $t$ such that
  \cref{algo:P3-interval} defines $\A t$, there exists a partial
  solution at $t$ satisfying $\A t$.

  For $t=0$, the partial solution $\emptyset$ satisfies $\A 0$. Assume
  now that for some~$t\leq 2n$, \cref{algo:P3-interval}
  defines~$\A{t}$, and that there exists a partial solution~$\Psol$
  at~$t-1$ satisfying~$\A{t-1}$.

  First, if $t=\birth(x)$ for some interval $x$, then let
  $\Psol'\coloneqq \Psol\cup\{\{x\}\}$. Thus, $\Psol'$~is now a partial
  solution at $t$ (it partitions every interval with earlier starting
  point into singletons, edges and $P_3$s) which satisfies $\A{t}$
  since by construction of~$\A{t}$ by \cref{algo:P3-interval},
  $\occ{x}{\A{t}}=2$.

  Now assume that $t=\death(x)$ with $x\notin \A{t-1}$. Then, in
  $\Psol$, either $x$ is part of an edge $\{x,y\}$ with $\death(y)>t$,
  or $x$ is part of a $P_3$. In both cases, $\Psol'\coloneqq \Psol$ is a
  partial solution at $t$ which satisfies $\A{t}=\A{t-1}$.

  We now explore the case where $t=\death(x)$ with $x\in \A{t-1}$.
  Then, the top element of $\A{t-1}$ must be $x$ (no other interval 
  $u\in\A{t-1}$ can have $t-1<\death(u)\leq \death(x)$).  Let $y$ and
  $z$ be the two elements below~$x$ in $\A{t-1}$.  Then, by 
  construction, $\A{t}=\A{t-1}\qDel (x,y,z)$ and $\death(x)\leq
  \death(y)\leq \death(z)\leq \death(u)$ for all $u\in \A t$. We
  create a partial solution $\Psol'$ at~$t$ depending on the number of
  occurrences of $x$, $y$, and $z$ in~$\A{t-1}$.

  \looseness=-1 If $x= y$ (hence, $\occ{x}{\A{t-1}}=2$) and $\occ{z}{\A{t-1}}=2$,
  then $\Psol$ contains two singletons~$\{x\}$ and $\{z\}$. Let
  $\Psol'\coloneqq (\Psol\setminus\{\{x\},\{z\}\})\cup \{\{x,z\}\}$. Then,
  $\Psol'$ is indeed a partial solution at~$t$ (since $\{x,z\}$ is an
  edge with $\death(z)>t$) that satisfies $\A{t}$, since
  $\occ{x}{\A{t}}=0$ and $\occ{z}{\A{t}}=1$.

  If $x= y$ (hence, $\occ{x}{\A{t-1}}=2$) and $\occ{z}{\A{t-1}}=1$,
  then $\Psol$ contains a singleton~$\{x\}$ and an edge~$\{z,u\}$.
  Also, note that $\occ{u}{\A{t-1}}=0$, that is, $u\notin\A{t-1}$.
  Because there is an edge $\{x,z\}$, the triple $\{x,z,u\}$ contains
  a~$P_3$.  Let $\Psol'\coloneqq (\Psol\setminus\{\{x\},\{z,u\}\})\cup
  \{\{x,z,u\}\}$. Then $\Psol'$ is a partial solution at~$t$
  that satisfies $\A{t}$, since
  $\occ{x}{\A{t}}=\occ{z}{\A{t}}=\occ{u}{\A{t}}=0$.

  If $z= y$ (hence, $ \occ{x}{\A{t-1}}=1$ and $\occ{z}{\A{t-1}}=2$),
  then similarly $\Psol$ contains an edge~$\{x,u\}$ and a singleton
  $\{z\}$: $\Psol'\coloneqq (\Psol\setminus\{\{x,u\},\{z\}\})\cup
  \{\{x,z,u\}\}$ is a partial solution at $t$ that satisfies $\A{t}$.

  If $y\neq x$ and $y\neq z$ (hence, $\occ{x}{\A{t-1}}=1$ and
  $\occ{y}{\A{t-1}}=1$), and $\occ{z}{\A{t-1}}=2$, then $\Psol$
  contains two edges~$\{x,u\}$ and~$\{y,v\}$ and a singleton
  $\{z\}$. Recall that $v,u\notin\A{t-1}$. Assume first that
  $\birth(y)<\birth(x)$, then interval $u$ intersects $y$,
  and~$\{y,u,v\}$ contains a $P_3$. Also, $\{x,z\}$ forms an edge with
  $\occ{z}{\A{t}}=1$: define
  $\Psol'\coloneqq (\Psol\setminus\{\{x,u\},\{y,v\},\{z\}\})\cup
  \{\{y,u,v\},\{x,z\}\}$.  In the case where $\birth(x)<\birth(y)$,
  $\{x,u,v\}$ contains a $P_3$ and
  $\Psol'\coloneqq (\Psol\setminus\{\{x,u\},\{y,v\},\{z\}\})\cup
  \{\{x,u,v\},\{y,z\}\}$ is a partial solution at $t$ that satisfies
  $\A{t}$.

  Finally, we have a similar situation when $y\neq x$, $y\neq z$ and
  $\occ{z}{\A{t-1}}=1$: then, $\Psol$ contains three edges $\{x,u\}$,
  $\{y,v\}$ and $\{z,w\}$.  If $\birth(y)<\birth(x)$, then
  both~$\{y,u,v\}$ and $\{x,z,w\}$ contain~$P_3$s. Otherwise,
  $\{x,u,v\}$ and $\{y,z,w\}$ contain~$P_3$s.  Thus, we define
  $\Psol'\coloneqq (\Psol\setminus\{\{x,u\},\{y,v\},\{z,w\}\})\cup
  \{\{y,u,v\}, \{x,z,w\}\}$ and
  $\Psol'\coloneqq (\Psol\setminus\{\{x,u\},\{y,v\},\{z,w\}\})\cup
  \{\{x,u,v\}, \{y,z,w\}\}$ respectively. In both cases, $\Psol'$ is a
  partial solution at $t$ that satisfies~$\A{t}$.

\looseness=-1  Overall, if \cref{algo:P3-interval} returns \texttt{true}, then
  it defines $\A {2n}$. According to the property we have proven,
  there exists a partial solution at $t=2n$, hence $G$ has a
  $P_3$-partition.  
\end{proof}

\noindent The above lemmas allow us to conclude the correctness of
\cref{algo:P3-interval}.

\begin{theorem} \label{thm:P3-interval} \probPthreePart{} on interval
  graphs is solvable in $O(n\log n+m)$~time.
\end{theorem}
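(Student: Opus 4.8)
The plan is to derive \autoref{thm:P3-interval} from \autoref{lem:P3-partition->h_t<=A_t} and \autoref{lem:A_t-defined->partial solution} together with an implementation analysis of \autoref{algo:P3-interval}. The correctness part is almost immediate once both lemmas are in hand: I would show that the algorithm returns \texttt{true} if and only if $G$ admits a $P_3$-partition. For the forward implication, assume a $P_3$-partition $\Psol$ exists. Then \autoref{lem:P3-partition->h_t<=A_t} guarantees that the set $\A t$ is defined for every $0\le t\le 2n$, which precisely means that the algorithm never reaches the branch returning \texttt{false}, hence it returns \texttt{true}. The converse implication is exactly the statement of \autoref{lem:A_t-defined->partial solution}. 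Moreover, if an explicit partition is wanted rather than a yes/no answer, one can record the triples removed in the deletion step and reconstruct a $P_3$-partition following the constructive proof of \autoref{lem:A_t-defined->partial solution}.

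It remains to bound the running time. \autoref{algo:P3-interval} takes as input an interval representation with pairwise distinct event points in $\{1,\dots,2n\}$. Given the graph $G$ itself, I would first recognize it as an interval graph and compute an interval model in $O(n+m)$ time using any linear-time recognition algorithm, and then sort and relabel the $2n$ endpoints to distinct integers in $\{1,\dots,2n\}$; the sorting costs $O(n\log n)$, so the preprocessing fits within the claimed bound.

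For the main loop I would argue that each of the $2n$ events can be handled in $O(\log n)$ time. The only nontrivial work concerns the handle list: inserting two handles $\A{t-1}\qAdd(x,x)$ at a start point, querying the size $\qSize{\A{t-1}}$, testing membership $x\in\A{t-1}$ at an end point, and deleting the three earliest-ending handles $\A{t-1}\qDel(x,y,z)$. I would represent the list as a min-priority queue keyed by the intervals' end points, augmented by a counter array that stores, for each interval, how many of its handles currently lie in the list, together with a running length. Then each insertion and each of the at most three extract-min operations costs $O(\log n)$, the size query is $O(1)$, and the membership test $x\in\A{t-1}$ reduces to checking whether $x$'s counter is positive, again in $O(1)$. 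Summing over all $2n$ events gives $O(n\log n)$, and adding the preprocessing yields the overall $O(n\log n+m)$ bound.

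Since correctness is entirely encapsulated in the two lemmas, the only delicate point in the theorem's own proof is this implementation analysis, and within it the membership test. As the example in \autoref{fig:intervals} shows (interval $a$ is deleted at event $8$, before it ends at event $9$), a handle may leave the list before its interval ends, so one cannot assume that a just-ending interval is still present; the per-interval counter settles this cleanly. I would finally confirm that nothing in the preprocessing or the data-structure maintenance exceeds $O(\log n)$ per event, so that the stated time bound indeed holds.
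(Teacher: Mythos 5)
Your proposal is correct and follows essentially the same route as the paper: correctness is reduced to \cref{lem:P3-partition->h_t<=A_t} (yes-instances never trigger the \texttt{false} branch) and \cref{lem:A_t-defined->partial solution}, and the running time comes from $O(\log n)$-time handle-list operations over $2n$ events after linear-time preprocessing. The only differences are implementation details within the same bound: the paper normalizes event points with a counting (``pigeon'') sort in $O(n)$ time after the $O(n+m)$-time interval-model computation and maintains the handle list as a balanced binary tree, whereas you use an $O(n\log n)$ comparison sort and a min-priority queue with per-interval counters --- both fit within $O(n\log n+m)$.
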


\begin{proof}
  Let $G$ be an interval graph.  To prove the theorem, we show that
  \cref{algo:P3-interval} returns \texttt{true} on $G$ if and only if
  $G$ has a $P_3$-partition.  The ``only if'' part is the statement of
  \cref{lem:A_t-defined->partial solution}.  For the ``if'' part,
  suppose that $G$ has a $P_3$-partition~$\Psol$. Then
  \cref{lem:P3-partition->h_t<=A_t} implies that
  \cref{algo:P3-interval} defines list $\A{t}$ at position~$t=2n$,
  which means it returns \texttt{true}.

  It remains to prove the running time bound. We first preprocess the
  input as follows: in $O(n+m)$~time, we can get an interval
  representation of an interval graph with $n$~intervals that use
  start and end points in~$\{1,\dots,n\}$~\cite[Section~8]{COS09}.  We
  modify this representation so that each position is the start or end
  point of at most one interval: first, for each interval, we add its
  start point to the beginning of a list~$L$ and its end point to the
  end of~$L$.  We sort~$L$ using a stable sorting algorithm like
  counting sort in $O(n)$~time.  The result is a sorted list~$L$ that,
  for each position, contains first the start points and then the end
  points.  Now, in $O(n)$~time, we iterate over~$L$ and reassign each
  event points to its own position in $\{1,\dots,2n\}$ in the order of
  its appearance in~$L$. At the same time, we build an $2n$-element
  array~$B$ such that $B[i]$ holds a pointer to the interval starting
  or ending at event point~$i$ (there is at most one such
  interval). It follows that all preprocessing works in $O(n+m)$~time.

\looseness=-1  After this preprocessing, each of the $O(n)$~iterations for some
  $t\in\{1,\dots,2n\}$ of the loop in line~\ref{algoLoop} of
  \cref{algo:P3-interval} is executed in $O(\log n)$~time: in
  constant time, we get the interval $B[t]$ starting or ending at~$t$
  and each operation on the token list can be executed in $O(\log n)$
  time if it is implemented as a balanced binary tree (note that only
  the current value of $\A t$ need to be kept at each point, hence it
  is never necessary for the algorithm to make a copy of the whole
  token list).
\end{proof}

\section{Cographs}\label{sec:cographs}
A cograph is a graph that does not contain a
$P_4$ (path on four vertices) as an induced subgraph. Cographs allow for
a so-called \emph{cotree} to be computed in linear time~\cite{CPS85}. 

\begin{definition}
\looseness=-1  A cotree~$\cot(G)$ of a cograph~$G=(V,E)$ is a rooted binary
  tree~$T=(V_T,E_T,r), r \in V_T$, 
  where each internal node is assigned a label in $\{\union,\join\}$
  and the set of leaves
  corresponds to the original set~$V$ of vertices such that:
  \begin{itemize}
  \item A subtree consisting of a single \emph{leaf node} corresponds
    to an induced subgraph with a single vertex.
  \item A subtree rooted at a \emph{union node},
    labeled~\textnormal{``$\union$''}, corresponds to the disjoint
    union of the subgraphs defined by the two children of that node.
  \item A subtree rooted at a \emph{join node},
    labeled~\textnormal{``$\join$''}, corresponds to the \emph{join}
    of the subgraphs defined by the two children of that node; that
    is, the union of the two subgraphs with additional edges between
    every two vertices corresponding to leaves in different subtrees.
  \end{itemize}
  Consequently, the subtree rooted at the root~$r$ of $\cot(G)$
  corresponds to~$G$.
\end{definition}

\noindent Using a
dynamic programming approach on the cotree representation of the
cograph, we can solve \probStarPart in polynomial time.

\begin{theorem}\label{thm:cographs}
  \probStarPart can be solved in $O(kn^2)$ time on cographs.
\end{theorem}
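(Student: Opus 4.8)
The plan is to solve \probStarPart on cographs by dynamic programming over the cotree. Recall that the cotree of a cograph~$G$ is a rooted tree whose leaves are the vertices of~$G$ and whose internal nodes are labeled either \emph{union} ($\union$) or \emph{join} ($\join$); the subgraph~$G_v$ induced by the leaves below an internal node~$v$ is the disjoint union (for $\union$) or the join (for $\join$) of the subgraphs associated with the children of~$v$. The cotree has $O(n)$ nodes and is computable in linear time~\cite{CPS85}. The key structural insight I would exploit is that in an \sstar inside a cograph, every vertex~$v$ lies in one of three \emph{roles}: it is a center, a leaf, or---in a union node where a star straddles two subtrees---it sits in one component while its star-neighbors sit in another, which can only happen through a join. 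Thus across a union node no new star edges appear, whereas across a join node every vertex on one side is adjacent to every vertex on the other, so centers and leaves can be matched freely across the two sides.

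The main idea is to define, for each cotree node~$v$, a table indexed by a small amount of ``interface'' information describing how the vertices of~$G_v$ participate in stars that are not yet complete. First I would identify what information must be passed up the tree. A partially-built star can be in one of a bounded number of states: a center still needing~$j$ more leaves (for $0 \le j \le s$), or an isolated leaf not yet attached to a center. Because a join makes every cross-pair adjacent, I claim it suffices to track, for the subgraph~$G_v$, the number of ``open centers needing a given number of leaves'' and the number of ``free leaves,'' but this is too many parameters. The standard trick is to collapse this: I would have the table entry $T_v[a,b]$ record whether the vertices of~$G_v$ can be covered by completed stars plus exactly $a$~vertices acting as free leaves (to be supplied a center from outside) and $b$~vertices acting as centers that still need leaves from outside, counted with how many leaves they still need. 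To keep the table polynomial I would observe that a center imported from outside across a join can absorb up to~$s$ leaves, and that it never helps to leave more than $O(n)$ free vertices; so $a,b$ each range over $O(n)$ values, giving an $O(n^2)$-entry table per node, consistent with the claimed $O(kn^2)$ bound where $k=n/(s+1)$ is the number of stars.

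The recurrences are the heart of the proof. At a leaf node the table is trivial (a single vertex is one free leaf or one empty center). At a \textbf{union} node~$v$ with children $v_1,\dots,v_p$, no edges cross between children, so I would combine the children's tables by a convolution: the counts of free leaves and open centers simply add, because no unfinished star can be completed using the union (there are no cross edges), and I must verify the invariant that any fully-completed star lies entirely within one child's subtree. At a \textbf{join} node, the recurrence is richer: because every vertex on one side is adjacent to every vertex on the other, I can complete stars across the boundary---an open center on one side can pick up free leaves from the other side, and vice versa. I would write the join recurrence as a matching between the ``free leaves'' offered by one side and the ``leaf-demand'' of the open centers offered by the other, carried out greedily since any leaf can serve any center across a join. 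The root must then admit an entry with $a=b=0$, meaning all vertices are packed into completed stars.

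The step I expect to be the main obstacle is bounding the table size and proving the join recurrence correct, i.e.\ showing that the naive cross-matching at a join never forces a suboptimal global solution. Concretely, I would need an exchange argument: if an optimal \starpartition uses some cross-join edges in a complicated interleaved pattern, I can reassign leaves to centers across the join without changing feasibility, because the join is a complete bipartite connection and every center--leaf incidence across it is interchangeable. This exchange lemma lets me reduce the join to a simple arithmetic matching of supply (free leaves) against demand (leaf-slots of open centers), which is what makes the dynamic program both correct and polynomial. Once the exchange argument is in place, the running-time analysis is routine: $O(n)$ cotree nodes, each processed in $O(n^2)$ time for the union convolution and join matching (reusing the fact that there are at most $k=\lceil n/(s+1)\rceil$ stars to bound one of the indices), yielding the claimed $O(kn^2)$ total time.
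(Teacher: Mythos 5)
Your proposal follows essentially the same route as the paper: a bottom-up dynamic program over the cotree in which union nodes are handled by additive convolution and join nodes by greedily matching leaf supply against center demand across the complete bipartite connection, justified by exactly the exchange argument you identify as the crux. The only substantive difference is that the paper collapses your two-dimensional feasibility table to a one-dimensional one---for each count $c\le k$ of centers inside the subtree it stores only the \emph{maximum} number of internally covered leaves---which is what makes the claimed $O(kn^2)$ running time go through cleanly, whereas your two-index table would need the same collapse (or a sharper analysis of the convolutions) to meet that bound.
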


\begin{proof}%
  \looseness=-1 Let $(G=(V,E),s)$ be a \probStarPart instance with~$G$ being a
  cograph.  Let $T=(V_T,E_T,r)=\cot(G)$ denote the cotree of $G$.
  Furthermore, for any node~$x \in V_T$, let $T[x]$ denote the subgraph
  of~$G$ that corresponds to the subtree of $T$ rooted at~$x$.

  We define a dynamic programming table~$L$ as follows.  For every
  node $x \in V_T$ and every non-negative integer $c \le k$, the table
  entry $L[x,c]$ denotes the maximum number of leaves in~$T[x]$ that
  are covered by a center in $T[x]$ when $c$ vertices in~$T[x]$ are
  centers.  Consequently, $(G,s)$ is a yes-instance if and only
  if~$L[r,k]= ks$.
  Now, let us describe how to compute~$L$ processing the cotree~$T$
  bottom up.

  \paragraph{Leaf nodes.}  For a leaf node~$x$, either the only vertex~$v$
  from~$T[x]$ is a center or not.  In both cases no leaf in $T[x]$ is
  covered by~$v$.  Thus, $L[x,0]=L[x,1]=0$ and $\forall c>1:
  L[x,c]=-\infty$.

  \paragraph{Union nodes.}  Let~$x$ be a node labeled with ``$\union$'' and
  let $x_1$ and $x_2$ be its children.  Note that there is no edge
  between a vertex from~$T[x_1]$ and a vertex from~$T[x_2]$, neither
  in~$T[x]$ nor in any other subgraph of~$G$ corresponding to
  any~$T[x']$, $x' \in V_T$.  Thus, for every leaf~$v$ in~$T[x]$ that is
  covered by a center~$v'$ from~$T[x]$, it holds that either both $v$
  and $v'$ are in $T[x_1]$ or both are in $T[x_2]$.  Hence, it follows
  $L[x,c]=\max_{c_1+c_2=c}(L[x_1,c_1] + L[x_2,c_2])$.

  \paragraph{Join nodes.}  Let~$x$ be a node labeled with ``$\join$'' and
  let $x_1$ and $x_2$ be its children.  Join nodes are more
  complicated than leaf or union nodes for computing the table
  entries, because these nodes actually introduce the edges.  However,
  they always introduce \emph{all possible} edges between vertices
  from $T[x_1]$ and $T[x_2]$ which has some nice consequences.  The
  idea is that the maximum number of leaves in~$T[x]$ that are covered
  by centers in~$T[x]$ is achieved by maximizing the number of leaves
  from~$T[x_2]$ that are covered by centers from~$T[x_1]$ and vice
  versa.
   
  To compute~$L[x,c]$, we introduce an auxiliary table~$A$ as follows.
  For every pair~$c_1, c_2$ of non-negative integers with~$c_1+c_2=c$,
  the table entry $A[c_1,c_2]$ denotes the maximum number of leaves
  in~$T[x]$ that are covered by a center in~$T[x]$ when $c_1$ vertices
  in~$T[x_1]$ are centers and $c_2$ vertices in $T[x_2]$ are centers.
  To this end, let $\ell_i$, $i \in \{1,2\}$, be the number of leaves in
  the desired \starpartition{s} being in~$T[x_i]$.  (Note that in
  every solution every vertex is either a center or a leaf and a leaf
  is not necessarily already covered within the current~$T[x]$.
  That is, $\ell_i$ can be larger than the number of leaves
  covered by a center in~$T[x]$.)
  Moreover, $\ell_i=|V(T[x_i])|-c_i$, where $V(T[x_i])$ is the set of
  vertices in $T[x_i]$.  To compute the auxiliary table~$A$, we
  consider three cases:

  \def\noCdot{}

  \paragraph{Case 1: \boldmath$(c_1 \noCdot s > \ell_2) \wedge (c_2 \noCdot s >
    \ell_1)$.}  In this case, we can cover all leaves in $T[x]$ by
    covering the leaves from $T[x_1]$ with centers from $T[x_2]$ and
    vice versa.  Thus, $A[c_1,c_2]=\ell_1+\ell_2$.
    \paragraph{Case 2: \boldmath$(c_1 \noCdot s \le \ell_2) \wedge (c_2 \noCdot s
      \le \ell_1)$.} In this case, we can cover $c \noCdot s$ leaves
    in~$T[x]$ by covering $c_1 \noCdot s$ leaves from~$T[x_2]$ by
    centers from~$T[x_1]$ and $c_2 \noCdot s$ leaves from~$T[x_1]$ by
    centers from~$T[x_2]$.  This is obviously the best one can do.
    Thus, $A[c_1,c_2]=c \noCdot s$.
  \paragraph{Case 3: \boldmath$(c_1 \noCdot s > \ell_2) \wedge (c_2 \noCdot s \le \ell_1)$ or
    $(c_1 \noCdot s \le \ell_2) \wedge (c_2 \noCdot s > \ell_1)$.} 
    In this case it is also optimal to greedily maximize the number of
    leaves from $T[x_2]$ that are covered by centers from $T[x_1]$ and
    vice versa. To see this, let $y_i, i \in \{1,2\}$, denote the number
    of leaves from $T[x_i]$ that are covered by a center
    from~$T[x_i]$.
    More precisely, assume that $y_1$ and $y_2$ are both
    greater than zero.  Then, repeatedly take one center from~$T[x_1]$
    covering a leaf in $T[x_2]$ and one center from $T[x_2]$ covering
    a leaf in $T[x_1]$ and exchange their leaves until either $y_1$ or
    $y_2$ is zero (if both become zero, we would be in Case~2). 

    Without loss of generality, let $y_1>0$ and $y_2=0$.  
    Note that this corresponds
    to the first subcase, i.\,e., $(c_1 \noCdot s > \ell_2) \wedge (c_2 \noCdot s \le
    \ell_1)$---the other subcase works analogously.  As $y_2=0$ and $c_2
    \noCdot s \le \ell_1$, we can assume that $c_2$ centers from
    $T[x_2]$ cover altogether $c_2 \noCdot s$ leaves from~$T[x_1]$.
    Furthermore, all~$\ell_2$ leaves from~$T[x_2]$ are covered by
    centers in~$T[x_1]$.  Since $c_1 \noCdot s > \ell_2$, the
    centers in~$T[x_1]$ might additionally cover some number~$\ell'$
    of leaves from~$T[x_1]$. We thus have $A[c_1,c_2]=c_2 \noCdot  s + \ell_2 + \ell'$. 
    We now compute the maximum possible value of $\ell'$.
    Clearly:
    \begin{itemize}
    \item $\ell'$ is at most $c_1  \noCdot  s - \ell_2$, the maximum
      number of leaves that can be covered by $c_1$ centers after 
      $\ell_2$ leaves are covered in $T[x_2]$,
    \item $\ell'$ is at most $\ell_1 - c_2 \noCdot  s$, the maximum
      number of leaves that are not already covered by centers from
      $T[x_2]$, and
    \item $\ell'$ is at most $L[x_1,c_1]$, the maximum number of
      leaves from $T[x_1]$ that can be covered by centers from
      $T[x_1]$.  
    \end{itemize}
    Hence, $\ell' \leq  \min(c_1 \noCdot  s - \ell_2, \ell_1 - c_2 \noCdot  s, L[x_1,c_1])$  
    
    Conversely, for $\ell''= \min(c_1 \noCdot  s - \ell_2, \ell_1 - c_2 \noCdot  s, L[x_1,c_1])$, it is possible
    for $c_1$ centers in~$T[x_1]$ to cover $\ell''$ leaves in $T[x_1]$ and $\ell_2$ leaves in $T[x_2]$,
    and for $c_2$ centers in~$T[x_2]$ to cover $c_2  s$~leaves in $T[x_1]$.
    Here, the property that a join node introduces all
    possible edges between the two subgraphs is crucial, because we
    can therefore simply cover leaves from~$T[x_1]$ by centers
    from~$T[x_1]$ in an optimal way.  (\emph{Each} center from
    $T[x_1]$ can cover \emph{each} leaf from $T[x_2]$ and vice
    versa.) So $\ell'\geq \ell'' =\min(c_2 \noCdot  s - \ell_1, \ell_2 - c_1 \noCdot  s, L[x_2,c_2])$. Overall,
    \begin{align*}
A[c_1&,c_2]=\\&
      \begin{cases}
       \ell_1+\ell_2  	&\text{if $(c_1 \noCdot s > \ell_2) \wedge (c_2 \noCdot s > \ell_1)$ } \\
       c \noCdot s 	&\text{if $(c_1 \noCdot s \le \ell_2) \wedge (c_2 \noCdot s \le \ell_1)$} \\
       c_2 \noCdot  s + \ell_2 + \min(c_1 \noCdot  s - \ell_2, \ell_1 - c_2  \noCdot s,
         L[x_1,c_1])	&\text{if $(c_1  \noCdot s > \ell_2) \wedge (c_2  \noCdot s \le \ell_1)$}\\
       c_1  \noCdot s + \ell_1 + \min(c_2 \noCdot  s - \ell_1, \ell_2 - c_1 \noCdot  s,
         L[x_2,c_2])	&\text{if $(c_1 \noCdot  s \le \ell_2) \wedge (c_2  \noCdot s > \ell_1)$}.
     \end{cases}
    \end{align*}
    Finally, we compute~$L[x,c]$ by considering the auxiliary table,
    that is,
    \[L[x,c]=\max_{c_1+c_2=c}(A[c_1,c_2]).\]
 The $O(kn^2)$ running time of this algorithm can be seen as follows:
 Computing the cotree representation runs in linear time~\cite{CPS85}.
 The table size of the dynamic program is bounded by $O(kn)$---there are $O(n)$~nodes
 in the cotree and $c\leq k$.  Since $V(T[x_i])$ corresponds to the
 set of leaf nodes of the subtree of~$T$ rooted in~$x_i$, the sizes
 $|V(T[x_i])|$ can be precomputed in linear time for each node~$x_i$
 of the cotree.  Hence, computing a table entry costs at most
 $O(n)$.
\end{proof}

\section{Bipartite permutation graphs}\label{sec:biperm}

In this section, we show that \probStarPart can be solved in
$O(n^2)$~time on bipartite permutation graphs.  The class of bipartite
permutation
graphs is the intersection of the class of bipartite graphs and
the class of permutation graphs.  An alternative characterization of
 bipartite permutation graphs
can be given using \emph{strong orderings} of the vertices of
a bipartite graph:

\begin{definition}[\citet{SBS87}]\label[definition]{def:strongorder}
  A \emph{strong ordering}~$\strongorder$ of the vertices of a
  bipartite graph $G=(U,W,E)$ is the union of a total order~$\strongorder_{U}$ of $U$ and a total order~$\strongorder_{W}$ of
  $W$, such that, for all edges $\{u,w\}$, $\{u',w'\}$ in $E$ with $u,u'
  \in U$ and $w,w' \in W$, $u \strongorder u'$ and $w' \strongorder w$
  implies that there are edges $\{u,w'\}$ and $\{u',w\}$ in $E$.
\end{definition}

A graph is a bipartite permutation graph if and only if it is
bipartite and there is a strong ordering of its vertices; a strong ordering can be
computed in linear time~\cite{SBS87}.

In a bipartite graph~$G$ with vertex set~$U\cup W$, if the subgraph
induced by a size-$(s+1)$ vertex subset $X\subseteq U \cup W$ contains
an \sstar, then this induced subgraph \emph{is} a star---there is only
one way to choose the star center.  Thus, we refer to $G[X]$ as a
star.  We denote by $\scenter(X)$ the center of the star~$G[X]$.
Observe that the number $k_U$ of star centers in $U$ and the number
$k_W$ of star centers in $W$ are uniquely determined by the sizes
$|U|$ and $|W|$ of the two independent vertex sets and by the number $s$
of leaves in a star, since
\begin{alignat*}{4}
  |U|&=k_U+s\cdot k_W &\quad\text{ and }\quad |W| &= k_W+s\cdot k_U\\
  \intertext{and therefore}
  k_U&=\frac{|U| - |W| \cdot s}{1 - s^2}  &\quad\text{ and }\quad k_W&=\frac{|W| - |U| \cdot s}{1 - s^2}.
\end{alignat*}

If these numbers are not positive integers, then $G$ does not have an
\starpartition{s}. Thus, we assume throughout this section that $k_U$
and $k_W$ are positive integers.

Our key to obtain star partitions on bipartite permutation graphs
is a structural result that only a certain ``normal form'' of star
partitions has to be searched for. This paves the way to developing a
dynamic programming algorithm exploiting these normal forms.  We define
these structural properties of an \starpartition{s} of bipartite
permutation graphs in the following.

  Let $(G,s)$ be a \probStarPart instance, where $G=(U,W,E)$ is a
  bipartite permutation graph, $\strongorder$ is a strong ordering of
  the vertices, and $\strongordereq$ is the reflexive closure
  of~$\strongorder$.  For two vertex sets~$A,B$, we also write $A\myll
  B$ if for all vertices $v\in A$ and $w\in B$, we have $v\strongorder
  w$.

  \looseness=-1 Assume that $G$ admits an \starpartition{s}~$\Psol$.  Let $X\in
  \Psol$ form a star. By $\leftm(X)$ (respectively by $\rightm(X)$),
  we denote the leftmost (that is, the minimum), respectively the
  rightmost (that is, the maximum) leaf of~$X$ with respect
  to~$\strongorder$.  The \emph{scope} of star~$X$ is the
  set~$\closure(X)\coloneqq \{v\mid x_{l}\strongordereq v\strongordereq
  x_{r}\}$ containing all vertices from~$x_l=\leftm(X)$ to~$x_r=\rightm(X)$.  The \emph{width} of star~$X$ is the cardinality
  of its scope, that is, $\width(X)\coloneqq |\closure(X)|$.  The \emph{width}
  of $\Psol$, $\width(\Psol)$, is the sum of $\width(X)$ over all
  $X\in \Psol$.

  Let $e=\{u, w\}$ and $e'=\{u', w'\}$ be two edges.  We say that $e$
  and $e'$ \emph{cross} each other if it holds that $u \strongorder u'$ and
  $w'\strongorder w$ or if it holds that $u' \strongorder u$ and $w \strongorder
  w'$.  The \emph{edge-crossing number} of two stars~$X,Y\in\Psol$ is
  the number of pairs of crossing edges~$e, e'$ with respect to the given
  strong order~$\strongorder$ where $e$ is an edge
  of~$X$ and~$e'$ is an edge of~$Y$.  The edge-crossing
  number~$\edgecrossingnum(\Psol)$ of~$\Psol$ is the sum of the
  edge-crossing numbers over all pairs of stars $X\neq Y \in \Psol$.

  We identify the possible configurations of two stars, depending on
  the relative positions of their leaves and centers, see
  \cref{fig:crossing}. Among those, the following two configurations
  are favorable: Given $X,Y\in\Psol$, we say that $X$ and $Y$ are
  \begin{itemize}
  \item \emph{non-crossing} if their edge-crossing number is zero;
  \item \emph{interleaving} if $\scenter(X)\in \closure(Y)$ and
    $\scenter(Y)\in \closure(X)$;
  \end{itemize}

  We say that $\Psol$ is \emph{good} if any two stars $X\neq Y \in
  \Psol$ are either non-crossing or interleaving.  We define the \emph{score}
  of $\Psol$ as the tuple~$(\width(\Psol), \edgecrossingnum(\Psol))$.
  We use the lexicographical order %
  to compare scores.

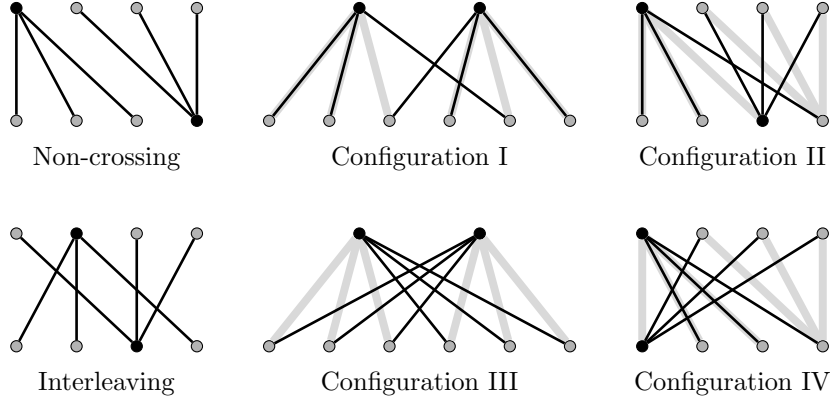
\begin{figure}[t!]
  \centering \def\sep{1.5} \def\xgrid{5.2cm} \def\ygrid{-3cm}
  \begin{tikzpicture}[draw=black!100, xscale=0.8]
    \tikzstyle{newedge}=[draw=gray!30, line width=3pt]
    \tikzstyle{oldedge}=[draw=black,line width=1pt]

    \begin{scope}[xshift=-\xgrid]
      \node[center] (u-1) at (1,\sep) {}; \foreach \x in {2,3,4}
      \node[vertex] (u-\x) at (\x,\sep) {};
      \node[center] (v-4) at (4,0) {}; \foreach \x in {1,2,3}
      \node[vertex] (v-\x) at (\x,0) {};
      \foreach \x in {1,2,3} \draw[oldedge] (u-1) -- (v-\x); \foreach
      \x in {2,3,4} \draw[oldedge] (u-\x) -- (v-4); \node at (2.5,-.5)
      {Non-crossing};
    \end{scope}
  
    \begin{scope}[xshift=-\xgrid,yshift=\ygrid]
      \node[center] (u-2) at (2,\sep) {}; \foreach \x in {1,3,4}
      \node[vertex] (u-\x) at (\x,\sep) {};
      \node[center] (v-3) at (3,0) {}; \foreach \x in {1,2,4}
      \node[vertex] (v-\x) at (\x,0) {};
      \foreach \x in {1,2,4} \draw[oldedge] (u-2) -- (v-\x); \foreach
      \x in {1,3,4} \draw[oldedge] (u-\x) -- (v-3); \node at (2.5,-.5)
      {Interleaving};
    \end{scope}

    \begin{scope}[]
      \node[center] (u-2) at (1.5,\sep) {}; \node[center] (u-4) at
      (3.5,\sep) {};
      \foreach \x in {0,1,2,3,4,5} \node[vertex] (v-\x) at (\x,0) {};

      \foreach \x in {0,1,2} \draw[newedge] (u-2) -- (v-\x); \foreach
      \x in {3,4,5} \draw[newedge] (u-4) -- (v-\x);
      \foreach \x in {0,1,4} \draw[oldedge] (u-2) -- (v-\x); \foreach
      \x in {2,3,5} \draw[oldedge] (u-4) -- (v-\x); \node at (2.5,-.5)
      {Configuration~I};
    \end{scope}
  
    \begin{scope}[yshift=\ygrid]
      \node[center] (u-2) at (1.5,\sep) {}; \node[center] (u-4) at
      (3.5,\sep) {};
      \foreach \x in {0,1,2,3,4,5} \node[vertex] (v-\x) at (\x,0) {};
      \foreach \x in {0,1,2} \draw[newedge] (u-2) -- (v-\x); \foreach
      \x in {3,4,5} \draw[newedge] (u-4) -- (v-\x);
      \foreach \x in {3,4,5} \draw[oldedge] (u-2) -- (v-\x); \foreach
      \x in {0,1,2} \draw[oldedge] (u-4) -- (v-\x); \node at (2.5,-.5)
      {Configuration~III};
    \end{scope}

    \begin{scope}[xshift=\xgrid]
      \node[center] (u-1) at (1,\sep) {}; \foreach \x in {2,3,4}
      \node[vertex] (u-\x) at (\x,\sep) {};
      \node[center] (v-3) at (3,0) {}; \foreach \x in {1,2,4}
      \node[vertex] (v-\x) at (\x,0) {};
      \foreach \x in {1,2,3} \draw[newedge] (u-1) -- (v-\x); \foreach
      \x in {2,3,4} \draw[newedge] (u-\x) -- (v-4);
      \foreach \x in {1,2,4} \draw[oldedge] (u-1) -- (v-\x); \foreach
      \x in {2,3,4} \draw[oldedge] (u-\x) -- (v-3);
      
      \node at (2.5,-.5) {Configuration~II};
    \end{scope}
  
    \begin{scope}[xshift=\xgrid,yshift=\ygrid]
      \node[center] (u-1) at (1,\sep) {}; \foreach \x in {2,3,4}
      \node[vertex] (u-\x) at (\x,\sep) {};
      \node[center] (v-1) at (1,0) {}; \foreach \x in {2,3,4}
      \node[vertex] (v-\x) at (\x,0) {};
      \foreach \x in {1,2,3} \draw[newedge] (u-1) -- (v-\x); \foreach
      \x in {2,3,4} \draw[newedge] (u-\x) -- (v-4);
      \foreach \x in {2,3,4} \draw[oldedge] (u-1) -- (v-\x); \foreach
      \x in {2,3,4} \draw[oldedge] (u-\x) -- (v-1);
      
      \node at (2.5,-.5) {Configuration~IV};
    \end{scope}

  \end{tikzpicture}
  \caption{Possible interactions between two stars of a
    partition. Centers are drew black.
    The four possible configurations of star centers and scopes
    that are neither non-crossing nor interleaving are labeled
    I to IV.
    By \cref{lem:only_noncrossing+interleaving}, any partition
    containing one of the configurations I to IV can be edited to
    reduce the score (see the thick gray edges). }
  \label{fig:crossing}
\end{figure}

These definitions allow us to observe the following property 
and show a normal form of star
partitions in bipartite permutation graphs.

\begin{property} \label{prop:multiple-crossings} Let $u_0\strongorder
  u_1$ and $w_0\strongorder w_1$ be four vertices such that edges
  $\{u_0,w_1\}$ and $\{u_1,w_0\}$ are in $G$.  Then, $G$ has edges
  $\{u_0,w_0\}$ and $\{u_1,w_1\}$ and, for any edge~$e$ crossing one
  (respectively both) edge(s) in $\{\{u_0,w_0\},\{u_1,w_1\}\}$, $e$ crosses
  one (respectively both) edge(s) in $\{\{u_0,w_1\},\{u_1,w_0\}\}$.
\end{property}
\begin{proof}
  The existence of the edges $\{u_0,w_0\}$ and $\{u_1,w_1\}$ is a
  direct consequence of \cref{def:strongorder}.  Let $e=\{u,w\}$ be an
  edge crossing $\{u_0,w_0\}$ and/or $\{u_1,w_1\}$.  We consider the
  cases where $u\strongorder u_0$ and where $u_0\strongorder u
  \strongorder u_1$ (the case $u_1\strongorder u$ being symmetrical
  to $u\strongorder u_0$).
 
  \looseness=-1 If $u\strongorder u_0$, then $w_0\strongorder w$, and $e$ crosses
  both $\{u_0,w_0\}$ and $\{u_1,w_0\}$. Also, if $e$~crosses
  $\{u_1,w_1\}$, then $e$ also crosses $\{u_0,w_1\}$, which proves the
  property for this case.
 
  \looseness=-1 If $u_0\strongorder u\strongorder u_1$, then if $e$ crosses~$\{u_0,w_0\}$, then $e$ also crosses $\{u_0,w_1\}$. If $e$ crosses~$\{u_1,w_1\}$, then $e$ also crosses $\{u_1,w_0\}$.  Overall, the
  property is thus proven for all cases. 
\end{proof}

\noindent Our main structural lemma now is the following.

\begin{lemma}\label[lemma]{lem:only_noncrossing+interleaving}
  Any \starpartition{s} of a bipartite permutation graph $G$ with
  minimum score is a good \starpartition{s}, that is, any two stars
  are either non-crossing or interleaving.
\end{lemma}

\begin{proof}
  Let $\Psol$ be an \starpartition{s} for $G$.  First, we show that any two stars $X\neq Y\in \Psol$ are 
  non-crossing, interleaving, or in one of the
  following four configurations (possibly after exchanging the roles
  of $X$ and $Y$, see \cref{fig:crossing} for an illustration):
  \begin{enumerate}[{Configuration}~I.]
  \item $\closure(X)\cap\closure(Y)\neq \emptyset$;
  \item $\scenter(Y)\in\closure(X)$ and
    $\scenter(X) \not\in \closure(Y)$;
  \item $\scenter(X) \strongorder \scenter(Y)$ and
    $\closure(Y) \myll \closure(X)$;
  \item $\scenter(X)\myll \closure(Y)$ and
    $\scenter(Y) \myll \closure(X)$ or, symmetrically, $\closure(Y) \myll
    \scenter(X)$ and $\closure(X) \myll \scenter(Y)$.
  \end{enumerate}
  
  First, assume that $\scenter(X)$ and $\scenter(Y)$ are both either
  in~$U$ or in~$W$. Furthermore, assume, without loss of generality,
  that $\scenter(X)\strongorder \scenter(Y)$.  If $X$ and~$Y$ are not
  in Configuration~I, then either $\closure(X)\myll \closure(Y)$ or
  $\closure(Y) \myll \closure(X)$.  If~$\closure(X)\myll \closure(Y)$,
  then $G[X]$ and $G[Y]$ are non-crossing. Otherwise, $\closure(Y)
  \myll \closure(X)$ and, hence, Configuration~III holds.

  If $\scenter(X)$ and $\scenter(Y)$ are in different vertex sets and
  if $X$ and $Y$ are not in Configuration~IV, then $\scenter(X) \in
  \closure(Y)$ and/or $\scenter(Y) \in \closure(X)$.  If~$\scenter(X)
  \in \closure(Y)$ and $G[X]$ and $G[Y]$ are not interleaving, then
  $\scenter(Y) \notin \closure(X)$ and we are in Configuration~II.
  Otherwise, if $\scenter(Y)\in\closure(X)$ and, again, $G[X]$ and
  $G[Y]$ are not interleaving, then $\scenter(X) \notin \closure(Y)$
  and we are again in Configuration~II.
  
  We now prove that a minimum-score \starpartition{s} $\Psol$ does not
  contain any pair of stars~$X\neq Y \in \Psol$ in Configurations~I,
  II, III or~IV (see \cref{fig:crossing}). For each such configuration, we construct an
  \starpartition{s}~$\Psol'$ with a score strictly smaller than~$\Psol$.
  
  \paragraph{Configuration~I.} Let $X,Y$ be two stars of $\Psol$ in
  Configuration~I, that is, $\closure(X)\cap\closure(Y)\neq
  \emptyset$.  Write $x_c=\scenter(X)$ and $y_c=\scenter(Y)$.  Then,
  $x_c$ and~$y_c$ are either both in~$U$ or both in~$W$. Without loss
  of generality, assume $x_c\strongorder y_c$.  Write
  $\{z_1,z_2,\ldots,z_{2s}\}$ for the union of the leaves of~$X$
  and~$Y$, with indices taken such that $z_i\strongorder z_j$ for
  $1\leq i<j\leq2s$.  Let~$Z_l=\{z_1,\ldots, z_s\}$
  and~$Z_r=\{z_{s+1},\ldots, z_{2s}\}$. We first show that both vertex
  sets $Z_l\cup\{x_c\}$ and $Z_r\cup\{y_c\}$ form a star in $G$.
   
  Let $k$ be the index such that $z_k=\leftm(Y)$.  Then, since the
  scopes of~$X$ and~$Y$ intersect, $z_k$~cannot be to the right of all
  the leaves of $G[X]$, hence we have~$k\leq s$, and $z_k\myll
  Z_r$. Consider now any $z\in Z_r$. If $z\in Y$, then there exists an
  edge $\{z, y_c\}$ in~$G$. If $z\in X$, then there exists an edge
  $\{z, x_c\}$ in $G$ that crosses~$\{z_k, y_c\}$
  (since~$z_k\strongorder z$ and $x_c\strongorder y_c$).  Thus, there
  also exists an edge $\{z, y_c\}$ in~$G$ by \cref{def:strongorder}.
  With a symmetrical argument, $G$ has an edge~$\{z, x_c\}$ for
  all~$z\in Z_l$.  It follows that the vertex sets $X'=Z_l\cup\{x_c\}$ and
  $Y'=Z_r\cup\{y_c\}$ both form stars in $G$.
   
  We now compare the widths of $G[X']$ and $G[Y']$ to the widths of
  the original stars~$G[X]$ and $G[Y]$. Let $w$ be the total number of
  elements between $z_1$ and $z_{2s}$, that is, the cardinality of the vertex
  set $\{u\mid z_1\strongordereq u\strongordereq
  z_{2s}\}=\closure(X)\cup\closure(Y)$.  Then, using the fact that the
  scopes of $X'$ and $Y'$ are disjoint and included in a size-$w$ set,
  we have
  \begin{align*}
    \width(X')+\width(Y')&\leq w \\
    &=|\closure(X)|  + |\closure(Y)| - |\closure(X) \cap \closure(Y)|\\
    &< \width(X) + \width (Y).
  \end{align*}
  We can thus construct an \starpartition{s}
  $\Psol'=(\Psol\setminus\{X,Y\})\cup\{X',Y'\}$ such that
  $\width(\Psol')<\width(\Psol)$, that is,  with strictly smaller
  score. Thus, no pair of stars in the minimum-score \starpartition{s} $\Psol$ may
  be in Configuration~I.
   
  \paragraph{Configuration~II.} Let $X,Y$ be two stars of
  $\Psol$ in Configuration~II, i.\,e., $\scenter(Y)\in\closure(X)$
  and $\scenter(X) \not\in \closure(Y)$.  Write $x_c=\scenter(X)$
  and~$y_c=\scenter(Y)$.  Then $y_c \strongorder \rightm(X)$ and
  either $x_c\myll \closure(Y)$ or $\closure(Y) \myll x_c$.  We only
  consider the case $x_c\myll \closure(Y)$;  the case $\closure(Y)
  \myll x_c$ works analogously.
  
  Let $v=\rightm(X)$ be the rightmost vertex of the leaves of the
  star~$G[X]$.  First, $G$~contains the edge~$\{x_c, y_c\}$ since the star~$G[Y]$
  has at least one leaf~$u$ with $x_c \strongorder u$
  and~$y_c\strongorder v$, and $G$ contains the edges~$\{x_c, v\}$ and
  $\{y_c, u\}$.  Now, consider any vertex~$u \in Y\setminus
  \{\scenter(Y)\}$.  Then, the edge~$\{x_c, v\}$ crosses the edge~$\{u,
  y_c\}$, since $x_c \strongorder u$ and $y_c\strongorder \rightm(X)$.
  The graph~$G$ contains the edges~$\{x_c, y_c\}$ and $\{v, u\}$.  Thus,
  the vertex sets~$X'=(X\setminus \{v\}) \cup \{y_c\}$ and $Y'=(Y
  \setminus \{y_c\}) \cup \{v\}$ both form stars in~$G$.

  We now compare the widths of $G[X']$ and $G[Y']$ to the widths of
  the original stars $G[X]$ and $G[Y]$.

  Since $y_c \strongorder v$, one has $\width(X')\le \width(X)-1$.  Obviously,
  $\width(Y) = \width(Y')$.  We can thus construct an
  \starpartition{s} $\Psol'=(\Psol\setminus \{X,Y\})\cup\{X',Y'\}$
 with $\width(\Psol')<\width(\Psol)$, that is,  with strictly smaller
  score. Therefore, no pair of stars in the \starpartition{s} $\Psol$ may
  be in Configuration~II.

  \paragraph{Configuration~III.} Let $X,Y$ be two stars of
  $\Psol$ in Configuration~III.  Let~$x_c\coloneqq \scenter(X)$ and
  $y_c\coloneqq \scenter(Y)$ and assume, without loss of generality,
  that~$x_c\strongorder y_c$. Then, $\closure(Y) \myll \closure(X)$.
  Thus, all edges of $G[X]$ cross all edges of $G[Y]$. Hence,
  there exists an edge $\{x_c,y\}$ for each leaf~$y$ of $G[Y]$, and an
  edge $\{y_c,x\}$ for each leaf~$x$ of $G[X]$.  Defining
  $X'=(X\setminus\{x_c\})\cup\{y_c\}$
  and~$Y'=(Y\setminus\{x_c\})\cup\{y_c\}$, we thus have two stars
  $G[X']$ and $G[Y']$ with the same width as $G[X]$ and $G[Y]$,
  respectively.  Hence, the \starpartition{s}
  $\Psol'=(\Psol\setminus\{X,Y\})\cup\{X',Y'\}$ has the same width as
  $\Psol$.
   
  We now show that $\edgecrossingnum(\Psol')<\edgecrossingnum(\Psol)$.
  We write $B_X$ (respectively $B_Y$, $B_{X'}$, and $B_{Y'}$) for the
  \emph{branches} of the corresponding star, that is, for the set of
  edges of $G[X]$ (respectively of $G[Y]$, $G[X']$, and $G[Y']$), and $R_{X,Y}$
  (respectively $R_{X',Y'}$) for the set of edges in $G[Z]$ for any $Z\in
  \Psol\setminus\{X,Y\}$ (respectively for any $Z\in \Psol'\setminus\{X',Y'\}$). Note
  that, by definition of $\Psol'$, $R_{X',Y'}=R_{X,Y}$.  We thus simply
  denote this set by $R$.  We write~$\cross_{b,b}$
  (respectively~$\cross_{b',b'}$) for the number of crossings between
  branches of~$B_X$ and~$B_Y$ (respectively of $B_{X'}$ and $B_{Y'}$),
  $\cross_{b,r}$ (respectively $\cross_{b',r}$) for the number of crossings
  between a branch in~$B_X\cup B_Y$ (respectively in $B_{X'}\cup B_{Y'}$) and an
  edge in $R$, and $\cross_{r,r}$ for the number of crossings between
  two edges in $R$.  Note that $\edgecrossingnum(\Psol) =
  \cross_{b,b}+\cross_{b,r}+\cross_{r,r}$ and that
  $\edgecrossingnum(\Psol') =
  \cross_{b',b'}+\cross_{b',r}+\cross_{r,r}$.
   
  It is easy to see that $\cross_{b',b'}=0$ ($X'$ and $Y'$ form
  non-crossing stars), and~$\cross_{b,b}>0$%
  .  Let~$x_i$ (respectively $y_i$) be the~$i$-th leaf
  of~$X$ (respectively of $Y$) in the order $\strongorder$.  Then, by
  \cref{prop:multiple-crossings}, any edge in $R$ crossing one or two
  edges among~$\{\{y_c, x_i\},\{x_c, y_i\}\}$ also crosses at least as
  many edges among~$\{\{x_c, x_i\},\{y_c, y_i\}\}$. Summing over all
  branches and all crossing edges, we obtain
  $\cross_{b',r}\leq\cross_{b,r}$.  Thus, overall, we indeed have
  $\edgecrossingnum(\Psol')<\edgecrossingnum(\Psol)$.
   
  Finally, we have constructed an \starpartition{s} with the same
  width but fewer crossings, that is, with strictly smaller score.
  Thus, no pair of stars in the~\starpartition{s} $\Psol$ may be in
  Configuration~III.

  \paragraph{Configuration~IV.}  Let $X,Y$ be two stars of
  $\Psol$ in Configuration~IV.  Without loss of generality, we assume
  that $\scenter(X)\myll\closure(Y)$ and $\scenter(Y)\myll\closure(X)$.
  We moreover assume that $X$ and $Y$ are chosen so that the number of
  elements between $\scenter(X)$ and $\rightm(Y)$, written
  $\dist(X,Y)$, is minimal among all pairs in Configuration~IV.  The
  configuration is depicted in more detail in \cref{fig:configIV}
  (left).

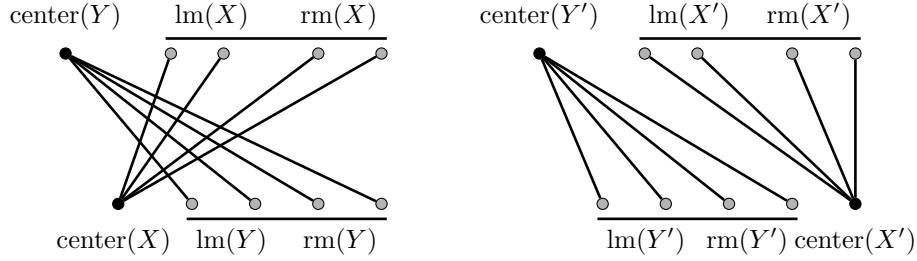
\begin{figure}[t!]
  \centering \def\sep{2} \def\ylabels{0.5}
  \begin{tikzpicture}[draw=black!100, xscale=0.7]
    \tikzstyle{newcenter}=[center] \tikzstyle{thickedge}=[draw=black,
    line width=1pt]
    \tikzstyle{oldedge}=[draw=black, line width=1pt]
    \tikzstyle{newedge}=[oldedge]

    \begin{scope}[xshift=-9cm]
      \node[center] (u-1) at (0,\sep) {}; \foreach \x in {2,3}
      \node[vertex] (u-\x) at (\x*1,\sep) {}; \foreach \x in {4,5}
      \node[vertex] (u-\x) at (\x*1.2,\sep) {};
    
      \node[] (c) at ([shift={(0,\ylabels)}]u-1) {$\scenter(Y)$};
      \node[anchor=west] (lm) at ([shift={(-0.1,\ylabels)}]u-2)
      {$\leftm(X)$}; \node[anchor=east] (rm) at
      ([shift={(0.1,\ylabels)}]u-5) {$\rightm(X)$}; \draw[oldedge]
      (lm.south west) -- (rm.south east);

      \node[center] (v-1) at (1,0) {}; \foreach \x in {2,3,4,5}
      \node[vertex] (v-\x) at (\x*1.2,0) {};
    
      \node[] (c) at ([shift={(0,-\ylabels)}]v-1) {$\scenter(X)\ $};
      \node[anchor=west] (lm) at ([shift={(-0.1,-\ylabels)}]v-2)
      {$\leftm(Y)$}; \node[anchor=east] (rm) at
      ([shift={(0.1,-\ylabels)}]v-5) {$\rightm(Y)$}; \draw[oldedge]
      (lm.north west) -- (rm.north east);

      \foreach \x in {2,3,4,5} \draw[oldedge] (u-1) -- (v-\x);
      \foreach \x in {2,3,4,5} \draw[oldedge] (u-\x) -- (v-1);
    \end{scope}

    \node[newcenter] (u-1) at (0,\sep) {}; \foreach \x in {2,3}
    \node[vertex] (u-\x) at (\x*1,\sep) {}; \foreach \x in {4,5}
    \node[vertex] (u-\x) at (\x*1.2,\sep) {};

    \node [] (cp) at ([shift={(0,\ylabels )}]u-1) {$\scenter(Y')$};
    \node[anchor=west] (lmp) at ([shift={(-0.1,\ylabels )}]u-2)
    {$\leftm(X')$}; \node[anchor=east] (rmp) at ([shift={(0.1,\ylabels
      )}]u-5) {$\rightm(X')$}; \draw[newedge] (lmp.south west)--
    (rmp.south east);
    \node[newcenter] (v-5) at (6,0) {}; \foreach \x in {1,2,3,4}
    \node[vertex] (v-\x) at (\x*1.2,0) {};
      
    \node [] (cp) at ([shift={(0,-\ylabels )}]v-5) {$\scenter(X')$};
    \node[anchor=west] (lmp) at ([shift={(-0.1,-\ylabels )}]v-1)
    {$\leftm(Y')$}; \node[anchor=east] (rmp) at
    ([shift={(0.1,-\ylabels )}]v-4) {$\rightm(Y')$}; \draw[newedge]
    (lmp.north west)-- (rmp.north east);
    \foreach \x in {1,2,3,4} \draw[newedge] (u-1) -- (v-\x); \foreach
    \x in {2,3,4,5} \draw[newedge] (u-\x) -- (v-5);

  \end{tikzpicture}
 \caption{Left: Two stars $X$ and $Y$ in Configuration~IV such that $\dist(X,Y)$ is minimal.
   Right: Two stars $X'$ and $Y'$ obtained from $X$ and $Y$, with
   equal width and fewer crossings.  }
 \label{fig:configIV}
\end{figure}

We first show that $\dist(X, Y)=s-1$, which means that no vertex
exists between $\scenter(X)$ and~$\rightm(Y)$, except for the $s-1$
other leaves of $Y$.  Suppose, towards a contradiction, that there is a
vertex~$z\notin \{\scenter(X)\}\cup \closure(Y)$ such that
$\scenter(X)\strongorder z \strongorder \rightm(Y)$.
   
Assume first that $z$ is the center of a star $G[X']$ with~$X'\in
\Psol$.  Then, $\closure(X) \myll \closure(X')$, since $X$ and $X'$
cannot be in Configuration~I or~III. Moreover, we have
$\scenter(X)\strongorder z \myll \closure(Y)$ since, otherwise, $z\in
\closure(Y)$ and $Y$ and $X'$ would be in Configuration~II.  Hence,
$X'$ and $Y$ are in configuration~IV (with
$\scenter(X')\myll\closure(Y)$, $\scenter(Y)\myll\closure(X')$), and
$\dist(X', Y) < \dist(X, Y)$, which is a contradiction.

Now assume that $z$ is a leaf of a star $G[Y']$ with $Y' \in \Psol$.
First compare $Y$ and $Y'$: $\closure(Y') \cap \closure(Y) =
\emptyset$ since, otherwise, $Y$ and $Y'$ would be in Configuration~I.
Using $z \strongorder \rightm(Y)$, it follows that $\closure(Y') \myll
\closure(Y)$.  This implies that $\scenter(Y')\strongorder
\scenter(Y)\myll \closure(X)$ since, otherwise, $Y'$ and $Y$ would be in
Configuration~III.  We now compare $X$ and $Y'$. We have already seen that
$\scenter(Y')\myll \closure(X)$.  Also, $\scenter(X) \notin
\closure(Y')$ since, otherwise, $Y'$ and~$X$ would be in Configuration~II.
Using $\scenter(X) \strongorder z$, we thus have $\scenter(X) \myll
\closure(Y')$, which implies that $X$ and $Y'$ are in Configuration~IV
with $\dist(X, Y') < \dist(X, Y)$, which is a contradiction.
We conclude that no vertex other than the leaves of $Y$ may exist between
$\scenter(X)$ and $\rightm(Y)$.

We now construct an \starpartition{s} with score strictly less than
$\mathcal P$.  To this end, let $X_0=X\setminus\{\scenter(X)\}$ and
$Y_0=Y\setminus\{\rightm(Y)\}$.  First observe that~$G$ contains the
edge~$\{\scenter(X), \scenter(Y)\}$ since there is an edge in $G[X]$
and an edge in~$G[Y]$ crossing each other.  Hence, $Y'=Y_0\cup
\{\scenter(X)\}$ forms a star.  Now, consider any vertex~$u\in X_0$.
The edge~$\{\scenter(X), u\}$ crosses the edge~$\{\scenter(Y),
\rightm(Y)\}$ and, therefore, $G$~contains the edge~$\{\rightm(Y),
u\}$.  Thus, $X'=X_0\cup \{\rightm(Y)\}$ forms a star.  For an illustration, see
\cref{fig:configIV} (right).  Also, $X'$ and $Y'$ are non-crossing
($X'$ is completely to the right of~$Y'$).
 
We now compare the widths of~$G[X']$ and~$G[Y']$ to the widths of the
original stars~$G[X]$ and $G[Y]$.  Obviously, $\width(X')=\width(X)$.
Moreover, since $\dist(X,Y)=s-1$, it follows that $\width(Y') =
\width(Y) =s$.  Hence, the \starpartition{s}
$\Psol'=(\Psol\setminus\{X,Y\})\cup\{X',Y'\}$ has the same width as
$\Psol$.

Since the widths have not changed, we have to show that
$\edgecrossingnum(\Psol')<\edgecrossingnum(\Psol)$.  We introduce the
same notations as in Configuration~III: Let $B_X$ (respectively $B_Y$,
$B_{X'}$, and $B_{Y'}$) be the set of branches of the corresponding star,
that is, the set of edges in $G[X]$ (respectively in $G[Y]$, $G[X']$, and~$G[Y']$),
and let $R_{X,Y}$ (respectively $R_{X',Y'}$) be the set of edges in~$G[Z]$
for any $Z\in \Psol\setminus\{X,Y\}$ (respectively for any $Z\in
\Psol'\setminus\{X',Y'\}$). Note that by definition of $\Psol'$,
$R_{X',Y'}=R_{X,Y}$, and we thus simply denote this set by~$R$.
Furthermore, let $\cross_{b,b}$ (respectively $\cross_{b',b'}$) be the number of
crossings between branches of $B_X$ and $B_Y$ (respectively between branches of $B_{X'}$ and
$B_{Y'}$), let $\cross_{b,r}$ (respectively $\cross_{b',r}$) be the number of
crossings between a branch in~$B_X\cup B_Y$ (respectively in $B_{X'}\cup
B_{Y'}$) and an edge in $R$, and let $\cross_{r,r}$ be the number of
crossings between two edges in $R$.  Note that
$\edgecrossingnum(\Psol) = \cross_{b,b}+\cross_{b,r}+\cross_{r,r}$ and
that $\edgecrossingnum(\Psol') =
\cross_{b',b'}+\cross_{b',r}+\cross_{r,r}$.  Then, it is easy to see that $\cross_{b',b'}=0$ ($X'$ and $Y'$ form
non-crossing stars), and $\cross_{b,b}>0$.

We now show that $\cross_{b',r}\leq \cross_{b,r}$.  First recall
that no edge in $R$~has an end point between $\scenter(X)$ and
$\rightm(Y)$.  We consider the branches in $B_{X'}\cup B_{Y'}$ and,
for each, give a unique edge in $B_{X}\cup B_{Y}$ crossing the same
edges of $R$.  For any leaf $x$ of $X'$, any $r\in R$ crossing
$\{\scenter(X'),x\}$ must also cross $\{\scenter(X),x\}$.  For the
leftmost branch of $Y'$, any~$r\in R$ crossing
$\{\scenter(X),\scenter(Y)\}$ must also cross
$\{\rightm(Y),\scenter(Y)\}$.  For any other branch
$b=\{\scenter(Y),y\}$ of $Y'$, any~$r\in R$ must also cross the same
branch $b$ of $Y'$.  Overall, we indeed have~$\cross_{b',r}\leq
\cross_{b,r}$, which implies
$\edgecrossingnum(\Psol')<\edgecrossingnum(\Psol)$.
   
Altogether, we have shown that a minimum-score \starpartition{s} $\Psol{}$
containing pairs of stars in Configurations~I to~IV leads to a
contradiction, since, in this case, we could find an \starpartition{s}
of lower score, which is a contradiction.
\end{proof}

\noindent As a consequence of \cref{lem:only_noncrossing+interleaving}, we
obtain the following corollary.

\begin{corollary}
  \label[corollary]{cor:at-most-one-crossing}
  Let $\Psol$ be an \starpartition{s} of a bipartite permutation graph
  $G$ with minimum score.  Then, for each star~$X \in \Psol$, there is
  at most one~$Y \in \Psol$ such that $X$ and $Y$ are interleaving,
  and for all $Z\in \Psol\setminus\{X,Y\}$, $X$ and $Z$ are
  non-crossing.
\end{corollary}

\begin{proof}
  Since $\Psol$ has minimum score, for any $Y\in \Psol\setminus\{X\}$,
  $G[X]$ and $G[Y]$ are either interleaving or non-crossing.
 
  Any star interleaving with $G[X]$ contains $\scenter(x)$ in its
  scope.  If there exist at least two such stars in $\Psol$, then
  their scopes intersect and they are in Configuration~I, which is
  impossible by \cref{lem:only_noncrossing+interleaving}.
\end{proof}

\noindent We now informally describe a dynamic programming algorithm for
deciding whether a bipartite graph~$G=(U,W,E)$ allows for a
\emph{good} \starpartition{s}.  It builds up a solution following the
strong ordering of the graph from left to right. A partial solution
can be extended in three ways only: either (i) a star is added with
the center in $U$, or (ii) a star is added with the center in $W$, or
(iii) two interleaving stars are added. The algorithm can thus
compute, for any given number of centers in $U$ and in $W$, whether it
is possible to partition the leftmost vertices of~$U$ and~$W$ in one
of the three ways (i)--(iii).  This algorithm leads to the following
result.

\begin{theorem} \label{thm:bipPerm} \probStarPart can be solved in
  $O(n^2)$~time on bipartite permutation
  graphs. %
\end{theorem}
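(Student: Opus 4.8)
The plan is to convert the structural results \autoref{lem:only_noncrossing+interleaving} and \autoref{cor:at-most-one-crossing} into a left-to-right dynamic program over the vertices of $G=(U,W,E)$ ordered by the strong ordering~$\strongorder$. First I would note that, by \autoref{lem:only_noncrossing+interleaving}, a minimum-score \starpartition{s} is good; hence $G$ admits an \starpartition{s} if and only if it admits a \emph{good} one, so it suffices to search only among good partitions~$\Psol$. Since there are $k=n/(s+1)$ parts covering $n$ vertices and each part must contain a $K_{1,s}$, every part is exactly a set of $s+1$ vertices forming a star, with its center in either $U$ or $W$.

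The core of the proof is a canonical-form argument. By \autoref{cor:at-most-one-crossing}, a good $\Psol$ splits into \emph{blocks}, each block being either a single star that is non-crossing with all others, or a pair of interleaving stars; as $G$ is bipartite, an interleaving pair has exactly one center in~$U$ and one in~$W$. I would show that the blocks admit a linear order along~$\strongorder$ under which, processing them left to right, every block consumes precisely a prefix of the still-unused vertices of~$U$ and of~$W$. Consequently, after placing $a$ stars centered in~$U$ and $b$ stars centered in~$W$, the used vertices are exactly the $a+bs$ smallest vertices of~$U$ and the $b+as$ smallest vertices of~$W$ with respect to~$\strongorder$; the partial solution is thus determined entirely by the counter pair~$(a,b)$, not by which concrete vertices are chosen.

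With this normal form in hand, the algorithm fills a Boolean table $T[a][b]$ that is \texttt{true} exactly when the $a+bs$ leftmost vertices of~$U$ and the $b+as$ leftmost vertices of~$W$ can be partitioned into $a$ stars centered in~$U$ and $b$ stars centered in~$W$ forming a good partition. Starting from $T[0][0]=\texttt{true}$ and filling in increasing order of~$a+b$, the three block types yield three transitions: a star centered in~$U$ sends $(a,b)\to(a+1,b)$, a star centered in~$W$ sends $(a,b)\to(a,b+1)$, and an interleaving pair sends $(a,b)\to(a+1,b+1)$. In each transition the new centers and leaves are forced to be the next unused vertices on each side, so feasibility reduces to testing $O(s)$ candidate edges (plus, for an interleaving pair, whether the next $s+1$ unused vertices per side admit the interleaving configuration, again $O(s)$). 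A good \starpartition{s} exists if and only if $T[a][b]=\texttt{true}$ at the target counters with $a+bs=|U|$ and $b+as=|W|$, which also forces $a+b=k$ and is uniquely determined for $s\ge 2$; the case $s=1$ is that of perfect matching and is handled by the same recurrence (or by any matching algorithm).

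For the running time, the constraints $a+bs\le|U|$ and $b+as\le|W|$ force $a+b\le k=n/(s+1)$, so the table has $O(k^2)$ entries, each computed from a constant number of predecessors with an $O(s)$-time feasibility test; this totals $O(k^2 s)=O(n^2 s/(s+1)^2)=O(n^2)$ time, and the strong ordering is obtained in linear time by~\citet{SBS87}. I expect the main obstacle to be the canonical-form step: justifying that one may restrict attention to good partitions whose blocks consume prefixes, which collapses the exponentially many ways of choosing \emph{which} vertices form each star down to the two counters~$(a,b)$. This calls for an exchange argument that repeatedly substitutes the leftmost admissible vertex for a leaf or a center without increasing the score~$(\width(\Psol),\edgecrossingnum(\Psol))$, using the non-crossing/interleaving dichotomy to guarantee that the modified partition stays good throughout.
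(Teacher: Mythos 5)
Your proposal matches the paper's proof in all essentials: the same Boolean table indexed by the pair (number of stars centered in $U$, number of stars centered in $W$), the same three transitions (star centered in $U$, star centered in $W$, interleaving pair), the same reliance on \cref{lem:only_noncrossing+interleaving} and \cref{cor:at-most-one-crossing} to justify that prefixes of the strong ordering are consumed block by block, and the same $O(k^2)$-entries-times-$O(\mathrm{poly}(s))$ accounting yielding $O(n^2)$. The ``canonical-form'' step you flag as the main obstacle is exactly what the paper establishes via its case analysis on the rightmost vertices of the prefix subgraph, so there is no substantive divergence.
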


\begin{proof}
  Let $(G,s)$ denote a \probStarPart instance, where $G=(U,W,E)$ is a
  bipartite permutation graph.  Furthermore, let
  $U=\{u_1,u_2,\dots,u_{k_U}\}$ and~$W=\{w_1,w_2,\dots,w_{k_W}\}$ such
  that $u_i \strongorder u_j$ (respectively\ $w_i \strongorder w_j$) implies
  $i<j$ for some fixed strong ordering $\strongorder$.  We describe a
  dynamic programming algorithm that finds a good
  \starpartition{s}~$\Psol$.  The idea is to use the fact that a star
  from~$\Psol$ is either interleaving with exactly one other star
  from~$\Psol$ or it does not cross any other star from~$\Psol$
  (see \cref{lem:only_noncrossing+interleaving} and
  \cref{cor:at-most-one-crossing}).  In both cases, the part of the
  graph that lies entirely to the left of the star (of the two
  interleaving stars respectively) with respect to the strong ordering must have an
  \starpartition{s} on its own.  This is clearly also true for the
  part of the graph that lies entirely to the right, but we do not need
  this for the proof.
  
  \looseness=-1 Informally, an entry $T(x,y)$ of our binary dynamic programming
  table~$T$ is true if and only if $x$~stars with centers from~$U$ and
  $y$~stars with centers from~$W$
  can ``consecutively cover'' the correspondingly large part of the
  graph from the left side of the strong ordering.  Formally, the
  binary dynamic programming table~$T$ is defined as
  \[T(x,y)=
  \begin{cases}
    1&\text{if $G[\{u_1,u_2,\dots,u_{x+s \cdot y},w_1,w_2,\dots,w_{y+s \cdot x}\}]$}\\
    &\text{\quad has an \starpartition{s},}\\
    0&\text{otherwise.}
  \end{cases}
  \]
  Initialize the table~$T$ by:\allowdisplaybreaks
  \begin{align*}
    T(0,1)&=
    \begin{cases}
      1&\text{if $G[\{u_1,u_2,\dots,u_{s},w_1\}]$ contains an
        \sstar},\\
      0&\text{otherwise,}
    \end{cases}
\\
    T(1,0)&=
    \begin{cases}
      1&\text{if $G[\{u_1,w_1,w_2,\dots,w_{s}\}]$ contains an \sstar},\\
      0&\text{otherwise, and}
    \end{cases}
\\
    T(1,1)&=
    \begin{cases}
      1&\text{if $G[\{u_1,u_2,\dots,u_{s+1},w_1,w_2,\dots,w_{s+1}\}]$}\\
      &\text{\quad contains disjoint \sstars},\\
      0&\text{otherwise.}
    \end{cases}
  \end{align*}
  Update the table~$T$ for all $1<x\le k_U$ and $1<y\le k_W$ by
  \begin{align*}
    T&(x,y)=\\&
    \begin{cases}
      1&
      \begin{minipage}[t]{0.9\linewidth}if one of the
          following holds:
        \begin{enumerate}[(a)]
        \item $T(x,y-1)=1$ and $G[\{u_{x+s \cdot (y-1)+1}, u_{x+s
            \cdot (y-1)+2}, \dots, \allowbreak u_{x+s \cdot (y-1)+s},\allowbreak w_{(y-1)+s \cdot x+1}\}]$ contains an \sstar,
    \item 
      $T(x-1,y)=1$ and $G[\{u_{(x-1)+s \cdot y+1},\allowbreak w_{y+s \cdot
         (x-1)+1},\allowbreak w_{y+s \cdot (x-1)+2}, \allowbreak \dots,\allowbreak w_{y+s \cdot
         (x-1)+s}\}]$ contains an \sstar, 
     \item $T(x-1,y-1)=1$ and $G[\{u_{(x-1)+s \cdot (y-1)+1},\allowbreak %
       \dots, \allowbreak u_{(x-1)+s \cdot (y-1)+s+1},\allowbreak w_{(y-1)+s
         \cdot (x-1)+1}, %
       \dots, w_{(y-1)+s \cdot
          (x-1)+s+1}\}]$ contains disjoint \sstars.
        \end{enumerate}
    \end{minipage}
\\
      0&\text{otherwise.}
    \end{cases}
  \end{align*}
  Concerning the running time, 
  first, a strong ordering of the vertices can be computed in linear time~\cite{SBS87}.
  Second, the table in the dynamic program has $O(k^2)$~entries and
  initialization as well as updating works in $O(s^2)$ time. Hence,
  the total running time is $O(k^2\cdot s^2)=O(n^2)$.

  Concerning the correctness of the algorithm, we show that
  $T(k_U,k_V)$ is true if and only if there is a good
  \starpartition{s} and, hence, if and only if there is an
  \starpartition{s}. To this end, consider an
  \starpartition{s}~$\Psol'$ for $G'\coloneqq G[\{u_1,u_2,\dots,u_{x+s \cdot
    y}$, $w_1,w_2,\dots,w_{y+s \cdot x}\}]$ with minimum score.  Now
  there are three simple cases:

  \paragraph{Case~(a).} The rightmost vertex of $G'$ in~$W$ is a
  center of a non-crossing star in~$\Psol'$ and, hence,
  $G[\{u_1,u_2,\dots,u_{x+ s\cdot (y-1)},w_1,w_2,\dots,w_{y-1+s \cdot
    x}\}]$ has an \starpartition{s}.

  \paragraph{Case~(b).} The rightmost vertex of $G'$ in~$U$ is a
  center of a non-crossing star in~$\Psol'$ and, hence,
  $G[\{u_1,u_2,\dots,u_{x-1+s \cdot y},w_1,w_2,\dots,w_{y+s \cdot
    (x-1)}\}]$ has an \starpartition{s}.

  \paragraph{Case~(c).} The rightmost vertex of $G'$ in~$U$ and $G'$s
  rightmost vertex in~$W$ are leaves of two interleaving stars.  Due
  to \cref{cor:at-most-one-crossing}, none of the other stars
  from~$\Psol'$ is crossing these two stars. It follows that
  $G[\{u_1,u_2,\dots,u_{x-1+s \cdot (y-1)},\allowbreak
  w_1,w_2,\dots,\allowbreak w_{y-1+s \cdot (x-1}\}]$ has an \starpartition{s}.

  Note that the rightmost vertex of $G'$ in~$U$ can only be a leaf of
  a non-crossing star in~$\Psol'$ if the rightmost vertex of $G'$
  in~$W$ is the center and vice versa.  Otherwise, the corresponding
  star is clearly not non-crossing.  Hence, these cases are already
  covered by~$(a)$ and~$(b)$.  Furthermore, neither the rightmost
  vertex of $G'$ in~$U$ nor in~$W$ can be a center of an interleaving
  star from~$\Psol'$, because both are rightmost with respect to the
  strong ordering and, thus, interleaving is impossible.  Thus we
  considered all cases and the update process is correct.  
\end{proof}

\section{Split graphs}\label{sec:splitgraphs}
A \emph{split graph} is a graph whose vertices can be partitioned into
a clique (that is, a complete subgraph) and an independent set (that is, a subgraph with only isolated vertices). 
Remarkably, split graphs are the only
graph class where we could show that \probPthreePart{} is solvable in
polynomial time, but that \probStarPart{} for $s\geq 3$ is NP-hard.

More precisely, we solve \probPthreePart{} on split graphs by reducing
it to finding a restricted form of \emph{factor} in an auxiliary
graph; herein, a \emph{factor} of a graph $G$ is a spanning subgraph
of~$G$ (that is, a subgraph containing all vertices). This graph
factor problem then can be solved in polynomial time \cite{Corn88}.
Alternatively, we can also solve the problem by reducing it to finding perfect matchings~(\cref{thm:splittract}).

Let $G = (C \cup I, E_C\cup E )$ be a split graph where $(C,E_C)$ is
a clique, $I$ induces an independent set, and $B = (C \cup I, E)$
forms a bipartite graph over~$C$ and~$I$.  Note that if $|C|+|I|$ is
not a multiple of $3$, or if $|I|>2|C|$, then $G$ trivially has no
$P_3$-partition.  We thus assume that $|C|+|I|$ (and hence $2|C|-|I|$)
is a multiple of $3$, and that $2|C|-|I| \ge 0$.

First, we show how a $P_3$-partition of a split graph is related to a
specific factor of the bipartite graph $B$.  Assume that~$G$ admits a
partition into $P_3$s and let $P$ denote the set of edges in the
partition.  There are three types of $P_3$s:
\begin{enumerate}[(i)]
\item\label{3clique-vertices} a $P_3$ consisting of three clique
  vertices,
\item\label{2clique-vertices+1independent-vertex} a $P_3$ consisting
  of two clique vertices and one independent set vertex, and
\item\label{1clique-vertex+2independent-vertices} a $P_3$ consisting
  of one clique vertex and two independent set vertices.
\end{enumerate}
Note that, for each~$P_3$, we can assume that the edges are selected so
that each independent set vertex is incident with \emph{at most} one edge in~$P$.
In particular, this implies that the two clique vertices in Type~(\ref{2clique-vertices+1independent-vertex}) are adjacent in the corresponding~$P_3$.
This leads to the following definition.

\begin{definition}\label{def:feasible factor}
\looseness=-1  A factor~$F$ of the bipartite graph~$B$ is \emph{feasible} if, in
  $F$, every independent set vertex has degree one, every clique vertex
  has degree zero, one or two, and 
  there are at least as many degree-zero
  clique vertices as degree-one clique vertices.
\end{definition}

It turns out that \cref{def:feasible factor} is necessary and sufficient for obtaining a $P_3$-partition.

\begin{lemma}\label[lemma]{lem:P3-partition==factor+contraint}
  A split graph $G$ admits a partition into $P_3$s if and only if there exists a
  feasible factor of its bipartite graph $B$.
\end{lemma}

\begin{proof}
  Assume that there is a partition of~$G$ into $P_3$s with edge set~$P$ 
  such that each vertex in~$I$ is incident with exactly one edge
  in $P$.  Let~$P_E \coloneqq P \cap E$ be the subset of edges of the
  partition which connect vertices from $C$ with vertices from~$I$,
  and let $F\coloneqq(C \cup I, P_E)$ be the corresponding factor of $B=(C\cup I, E)$.
  
  \looseness=-1 Each independent set vertex in~$I$ has degree one in~$F$ (since it is adjacent to
  exactly one edge in~$P$, which is also in $P_E$).  Each clique vertex~$v$
  in~$C$ belongs to a $P_3$ from $P$.  
  Depending on the type of this
  $P_3$, in $F$, vertex~$v$ can have degree zero (Type~(\ref{3clique-vertices}) or
  (\ref{2clique-vertices+1independent-vertex}), note that we assume each independent set vertex to be incident with at most one edge in $P$), 
  degree one~(Type~(\ref{2clique-vertices+1independent-vertex})),
  or degree two~(Type~(\ref{1clique-vertex+2independent-vertices})).
  Let~$n_{(\text{\ref{3clique-vertices}})}$
  (respectively $n_{(\text{\ref{2clique-vertices+1independent-vertex})}}$) denote
  the number of~$P_3$s of Type~(\ref{3clique-vertices})
  (respectively (\ref{2clique-vertices+1independent-vertex})).  
  It remains to show that 
  the number of degree-zero clique vertices is equal to or greater than 
  the number of degree-one clique vertices:
  The number of degree-zero vertices
  is~$n_{(\text{\ref{2clique-vertices+1independent-vertex}})}
  +3n_{(\ref{3clique-vertices})}$, and the number of degree-one
  vertices is $n_{(\text{\ref{2clique-vertices+1independent-vertex}})}$,
  hence the difference is positive.  Thus, $F$ is feasible.
  
  Conversely, let $F=(C\cup I, P_E)$ be a feasible factor of $B$.
  Then we partition~$G$ into~$P_3s$ as follows. 
  For each degree-two clique vertex~$v$ in $C$, 
  add $\{v,x,y\}$ to $P$ where $x$ and $y$ are
  the neighbors of $v$ in $F$.
  For each degree-one clique vertex~$v$ in~$C$, 
  add~$\{v,x,y\}$ to~$P$, 
  where $x$~is $v$'s neighbor in~$F$ 
  and~$y$~is an arbitrary degree-zero clique vertex (there are enough such vertices).  
  The number of remaining degree-zero vertices in~$C$ is thus a multiple of~$3$: 
  these vertices are simply grouped up in arbitrary triples
  Add these triples to $P$. 
  Overall, due to the degree constraints, $P$ is a~$P_3$-partition of~$G$. 
\end{proof}

\noindent 

\citet{Corn88} shows that finding a feasible factor in $B$ can be solved in polynomial time 
by reducing it to finding disjoint edges and triangles in a corresponding auxiliary graph.
Nevertheless, we show in the following how to reduce the problem to finding perfect matchings.
To this end, we formulate a nice property that a feasible factor in $B$ must fulfill:

\begin{property}\label{prop:feasible_factor+no_zero>=no_one->degree_constraints}
  Let $F=(C\cup I, P\cap E)$ be a feasible factor of the bipartite graph~$B=(C\cup I, E)$.  
  Let $q$ and $r$ be two non-negative integers such that $r\in \{0,1\}$ and $(2|C|-|I|)/3=2q+r$.
  Then, in $F$, the number~$n_1$ of degree-one vertices in~$C$ is~$2i+r$ for some~$i$, $0\leq i
  \leq q$. In particular, $n_1\le (2|C|-|I|)/3$.
\end{property}

\begin{proof}
  Let $n_0$, $n_1$, and $n_2$ be the number of degree-zero,
  degree-one, and degree-two clique vertices in~$F$.  
  Then,
  $n_0+n_1+n_2=|C|$ (all clique vertices have degree~0, 1 or 2),
  $n_1+2n_2=|I|$ (vertices in~$I$ have degree 1).
  Rearranging and resolving variable~$n_2$ yields
  \begin{align}\label{eq:n1formula}
    3n_1 = 2|C|-|I|- 2(n_0-n_1)\text{.}
  \end{align}
  As mentioned, $2|C|-|I|$ is a multiple of three (as well as $|C|+|I|$).
  Note that $n_0-n_1$, which is positive because $F$ is feasible,
  is a multiple of three because it equals 
  the number of clique vertices in $P_3$s of Type~\eqref{3clique-vertices}.
  Let $j$ be an integer with $(n_0-n_1)/3=j$. 
  Then, together with \eqref{eq:n1formula}, we obtain that
  \begin{align*}
    \nonumber n_1 &= \frac{2|C|-|I|}{3} - 2\cdot \frac{n_0-n_1}{3}\\
    &= 2q+r - 2j\text{.}
  \end{align*} 
  The last statement is satisfied since $2q+r=(2|C|-|I|)/3$ and $j\ge 0$.
\end{proof}

To be able use a perfect matching algorithm to solve our problem,
we first reduce it to a restricted variant of the graph factor problem: 
We add an additional vertex~$z$ to the bipartite graph~$B$,
and connect it to all vertices in~$C$. We call this graph~$B'$.  Now, the
following lemma states that~$B'$ can be used to find a
feasible factor for~$B$. %

\begin{lemma}\label[lemma]{lem:B=B'}
  The bipartite graph $B=(C\cup I, E)$ admits a feasible factor if and only if
  graph~$B'=(C\cup I \cup \{z\}, E\cup \{\{z,c\}\mid c\in C\})$ 
  has a factor satisfying the following degree constraints:
  \begin{inparaenum}[(1)]
  \item Every vertex in~$I$ has degree one,
  \item every vertex in~$C$ has degree zero or two, and
  \item the added vertex~$z$ has degree~$2i+r$,
  where $r\in \{0,1\}$ such that there is an integer~$q$
  with $(2|C|-|I|)/3=2q+r$
  and $i \in \{0,1,\ldots,q\}$.
\end{inparaenum}
\end{lemma}

\begin{proof}
  Assume that $B$ admits a feasible factor~$F=(C\cup I, P_E)$.  Then
  $F'=(C\cup I\cup\{z\}, P_E)$ is a factor of $B'$. For each
  degree-one vertex~$v\in C$, we add edge~$\{v, z\}$ to factor~$F'$.
  By \cref{prop:feasible_factor+no_zero>=no_one->degree_constraints},
  we thus add $2i+r$ edges, with $0\leq i \leq q$.  It is easy to
  verify that the degree constraints stated in the lemma are
  satisfied.

  Conversely, let $F'$ be a factor for graph~$B'$ where every
  independent set vertex has degree one, every clique vertex has
  degree zero or two, and vertex~$z$ has degree~$2i+r$ with $i \in \{0,1,\ldots,q\}$.
  If we delete from $F'$ all edges incident to vertex~$z$,
  then we obtain a feasible factor~$F$ for $B$ where the number $n_1$
  of degree-one clique vertices is the original degree of $z$,
  that is, 
  \begin{align}
    \label{eq:n1cond}
    n_1=2i+r\le (2|C|-|I|)/3\text{.}
  \end{align}

  \looseness=-1 Since each independent set vertex still has degree one, the number
  of degree-two clique vertices is $(|I|-n_1)/2$, and the number of
  degree-zero clique vertices is
  $n_0=|C|-n_1-(|I|-n_1)/2=(2|C|-|I|-n_1)/2$. The difference between
  the number of degree-zero and the number of degree-one vertices is
  $n_0-n_1=(2|C|-|I|-3n_1)/2$, which is non-negative by using~\eqref{eq:n1cond}.
\end{proof}

\noindent\looseness=-1 \cref{fig:splitgraph-factor} (Left) depicts an example factor
for the graph~$B'$ fulfilling the degree constraints of
\cref{lem:B=B'}.
  
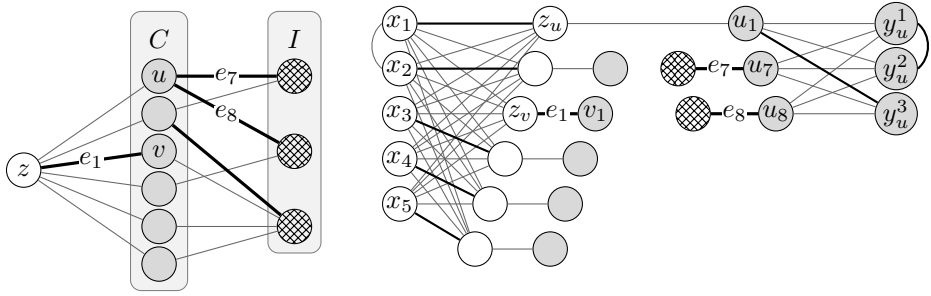
\begin{figure}[t]
  \pgfdeclarepatternformonly{soft crosshatch}{\pgfqpoint{-1pt}{-1pt}}{\pgfqpoint{4pt}{4pt}}{\pgfqpoint{3pt}{3pt}}%
  {
    \pgfsetstrokeopacity{0.3}
    \pgfsetlinewidth{0.4pt}
    \pgfpathmoveto{\pgfqpoint{3.1pt}{0pt}}
    \pgfpathlineto{\pgfqpoint{0pt}{3.1pt}}
    \pgfpathmoveto{\pgfqpoint{0pt}{0pt}}
    \pgfpathlineto{\pgfqpoint{3.1pt}{3.1pt}}
    \pgfusepath{stroke}
  }
\centering \def\layersep{1.8}
  \def\levelsep{.5} 
  \tikzstyle{labeledvertex}=[vertex, inner sep=.5pt, minimum size=13pt,fill=black!15]
  \tikzstyle{labeledindvertex}=[vertex, inner sep=.5pt, minimum size=13pt,fill=black!15,pattern=soft crosshatch]
  \tikzstyle{zvertex}=[vertex, inner sep=.5pt, minimum size=13pt,fill=white]
  \tikzstyle{thickedge}=[draw=black, line width=1.2pt] 
  \tikzstyle{halfthickedge}=[draw=black, line width=.8pt] 
  \begin{tikzpicture}[draw=black!60, node distance=\layersep]

    \begin{scope}[xshift=-7.6cm]
    \node[zvertex] (z) at (0-\layersep,-3.5*\levelsep)
    {$z$};

    \node (c-0) at (0,0) {$C$}; 
    \foreach \y/\l in {1/u,2/{},3/{v},4/{},5/{},6/{}} {
      \node[labeledvertex] (c-\y) at (0,-\y*\levelsep) {$\l$};
    }

    \node (i-0) at (\layersep,0) {$I$}; 
    \foreach \name / \l in {1/{},2/{},3/{}}
    {
      \node[labeledindvertex] (i-\name) at (\layersep,-\name*\levelsep*2+\levelsep) {$\l$};
    }

    \foreach \c in {1,...,6} \draw (z) -- (c-\c);

    \draw (c-2) -- (i-1); 
    \draw (c-4) -- (i-2); 
    \foreach \c in {3,5,6} \draw (c-\c) -- (i-3); 

    \draw[thickedge] (c-1) -- node[midway,fill=white,inner sep=0pt] {$e_7$} (i-1);
    \draw[thickedge] (c-1) -- node[midway,fill=white,inner sep=0pt] {$e_8$} (i-2); 
    \draw[thickedge] (c-2) -- (i-3);
     
   \draw[thickedge] (z) -- node[midway,fill=white,inner sep=0pt] {$e_{1}$} (c-3);
    \begin{pgfonlayer}{background}
      \node (C) [draw=black!50,rounded corners,fill=gray!10,fit=(c-0) (c-6)] {}; 
      \node (I) [draw=black!50,rounded
      corners,fill=gray!10,fit=(i-0) (i-3)] {};
    \end{pgfonlayer}
    \end{scope}

    \def\layersep{2.2}
    \def\levelsep{.6} 
    \def\baselevel{.8}
    \foreach \l/\name in {1/{z_u},2/{},3/{z_v},4/{},5/{},6/{}}
    {
      \node[zvertex] (z\l) at (0-\layersep-\l*0.2,-\l*\levelsep+\baselevel)
      {$\name$};
    }

    \foreach \l/\pos/\name in {1/2/{},2/3/{v_1},3/4/{},4/5/{},5/6/{}}
    {
      \node[zvertex] (a\l) at (0-2*\layersep,-\l*\levelsep+\baselevel)
      {$x_{\l}$};
      \node[labeledvertex] (e\pos) at (0-\layersep/2-\pos*.2-.1,-\pos*\levelsep+\baselevel)
      {$\name$};
      \draw (z\pos) -- (e\pos);

      \foreach \j in {1,...,6} 
      \draw (z\j) -- (a\l);
    }

    \draw (a1.west) edge[bend right=50] (a2.west);

    \foreach \l/\name in {1/1,2/7,3/8} {
      \node[labeledvertex] (u\l) at (\l*0.2,-\l*\levelsep+\baselevel)
      {$u_{\name}$};
    }
    \foreach \l in {1,...,3} {
      \node[labeledvertex] (b\l) at (\layersep,-\l*\levelsep+\baselevel)
      {$y^{\l}_{u}$};
      \foreach \j in {1,...,3}{
         \draw (u\j) -- (b\l);
      }
    }

    \foreach \i/\j in {2/7,3/8} {
      \node[labeledindvertex] (ee\i) at (\i*.2-\layersep/2,-\i*\levelsep+\baselevel) {};
      \draw[thickedge] (ee\i) -- node[midway,fill=white,inner sep=0pt] {$e_{\j}$} (u\i); 
    }

    \draw (b1.east) edge[bend left=50] (b2.east);

    \draw[] (z1) -- (u1);
    \draw[thickedge] (z3) -- node[midway,fill=white,inner sep=0pt] {$e_1$} (e3);

    \draw[halfthickedge] (u1) -- (b3);
    \draw[halfthickedge] (b1.east) edge[bend left=50] (b2.east);
    
    \foreach \i/\j in {1/1,2/2,3/4,4/5,5/6} {
      \draw[halfthickedge] (a\i) -- (z\j);
    }
  \end{tikzpicture}
  \caption{Left: An example of a factor for $B'$ fulfilling the degree
    constraints as required in \cref{lem:B=B'}. 
    The black thick bold edges reflects the constraints.
    Right: The gadget for vertex~$z$ and the gadget for the clique vertex~$u$ used to construct graph~$B^*$ (according to \cref{cons:graph_for_matching}). 
    The black thick bold edges (labeled) correspond to the ones marked on the right.
  The black bold edges are additional matching edges.}
  \label{fig:splitgraph-factor}
\end{figure}

Using a gadget introduced by \citet{Corn88}, we can even reduce \probPthreePart{}
to the perfect matching problem. We construct a graph~$B^*$ from $B'$ in which we are searching
for a perfect matching.  The idea is to replace every clique
vertex~$v$ by a gadget which can simulate the constraint that $v$ has
degree zero or two, and to replace vertex~$z$ by a gadget
to simulate its degree constraint.

\begin{construction}\label{cons:graph_for_matching}
  \normalfont
  Let $m=|E|$ (number of edges between $C$ and $I$ in the original
  split graph) and $d_u$ be the degree of a clique vertex~$u$ in $C$.
  Note that due to the edges going to vertex~$z$, we have $\sum_{u\in
    C} d_u=m+|C|$. It holds that $|I|=O(m)$ and $|C|=O(m)$.
  
  To construct the vertex set of $B^*$,
  first add a copy of $I$ to $B^*$.
  For each clique vertex~$u\in C$,
  add $d_u$ vertices~$y^{1}_{u},y^{2}_{u},\ldots,y^{{d_u}}_u$ to $B^*$;
  denote this set as~$Y(u)$.
  For each edge~$e_j\in E$ that is incident with $u$,
  add a vertex~$u_j$ to $B^*$; denote this set as $V(u)$.
  Note that $|Y(u)|=|V(u)|$ ($Y(u)$ and $V(u)$ are used to form a complete bipartite subgraph).
  For each edge~$\{z,u\}$ in $B'$, add a vertex~$z_{u}$ to $B^*$; denote this set as $V(z)$.
  Finally, add a set~$X$ of $|C|-r$ copies of vertex~$z$, named as $x_1,x_2,\ldots, x_{|C|-r}$ to $B^*$ ($V(z)$ and $X$ are used to form a complete bipartite subgraph).
  
  This completes the construction for the vertex set, 
  which consists of $|I|+\sum_{u\in C}2d_u+2|C|-r=O(m)$ vertices.

  Now we are ready to add edges to~$B^*$.
  For each edge~$e_j=\{u,v\}$ in $B'$ which connects 
  a clique vertex~$u$ and an independent set vertex~$v$,
  add an edge~$\{u_j, v\}$ to~$B^*$.
  Analogously, for each edge~$e_j=\{u,z\}$ in $B'$ which connects
  $z$ and a clique vertex~$u$, 
  add an edge~$\{u_j,z_u\}$ to~$B^*$.
  This is used to model the original edges of $B'$.
  Now, to model the degree constraints,
  for each clique vertex~$u\in C$,
  add to~$B^*$ an edge between every vertex from $V(u)$ and every vertex from $Y(u)$,
  and an edge between $y^{1}_u$ and $y^{2}_u$.
  Add to~$B^*$ an edge between every vertex from $V(z)$ and every vertex from $X$.
  Finally, for each integer~$i$ with $1\le i \le q$,
  add to~$B^*$ edge~$\{x_{2i-1},x_{2i}\}$.

  The overall number of edges in~$B^*$ is $(m+|C|)+\sum_{u\in
    C}(d_{u})^2+|C|(|C|-r)+ |C|+q\le (m+|C|)^2 + |C|^2 +O(m)=O(m^2)$.
  This finishes the construction.
\end{construction}

\cref{fig:splitgraph-factor} (Right), shows the gadget constructed for a clique vertex and the gadget for vertex~$z$. Now, we show how the constructed graph~$B^*$ can be used to find a
factor for~$B'$ satisfying the specific degree constraints as specified in
\cref{lem:B=B'}.

\begin{lemma}\label[lemma]{lem:B'has_a_factor==B*has_a_matching}
  Graph~$B^*$ constructed according to \cref{cons:graph_for_matching} admits a perfect matching if and only if graph~$B'$ admits a factor $F'$ satisfying the condition that  \begin{inparaenum}[(1)]
  \item Every vertex in~$I$ has degree one,
  \item every vertex in~$C$ has degree zero or two, and
  \item the added vertex~$z$ has degree~$2i+r$,
  where $r\in \{0,1\}$ such that there is an integer~$q$
  with $(2|C|-|I|)/3=2q+r$
  and $i \in \{0,1,\ldots,q\}$.
  \end{inparaenum}
\end{lemma}

\begin{proof}
  Let $M$ be a perfect matching for $B^*$.  We construct a
  factor~$F'=(C\cup I \cup\{z\}, P_E')$ for $B'$.  For each
  edge~$\{u_j,v\} \in M$ which connects an independent set vertex~$v$,
  add to $P_E'$ edge~$\{u,v\}$.  
  For each edge~$\{z_u, u_j\}\in M$ which connects vertices in~$V(z)$ and~$V(u)$,
  $u\in C$, add to $P_E'$ edge~$\{z,u\}$.

  We show that $F'$ is a factor for $B^*$ satisfying the properties stated in the lemma.
  Obviously, every independent set vertex~$u\in I$ has degree one.  
  Consider a clique vertex~$u\in C$.  
  By the construction of graph~$B^*$, 
  in order to match all vertices in $Y(u)$, either (i) every vertex in $Y(u)$ has
  to be matched to a vertex in $V(u)$ or (ii) $y^{1}_u$ and $y^2_{u}$ are
  matched together while every vertex in $Y(u)\setminus \{y^{1}_u, y^2_{u}\}$
  is matched to exactly one vertex in $V(u)$.  This implies that either
  no vertex or exactly two vertices in $V(u)$ are matched to some
  vertices which are not from~$Y(u)$.  Thus, $v$ has either degree
  zero or degree two in $F'$.

  Analogously, by the construction of graph~$B^*$, in order to match
  all vertices in $X$ which has size $|C|-r$, exactly $i$ pairs of
  vertices in $X$ can be left without being matched to any vertex
  in $V(z)$ where $0\le i \le q$ (note that only the first $2q$ vertices are connected by a path). 
  This implies that exactly $|C|-(|C|-r-2i) = 2i+r$ vertices from $V(z)$ are matched to vertices
  that are not from~$X$.  Thus, $z$ has degree $2i+r$. 

  Conversely, assume that $B'$ admits a factor~$F'$ satisfying the above
  properties.  We show that the following construction yields a perfect
  matching $M$ for $B^*$.
  
  For each edge~$e_j=\{u,v\}$ in~$F'$ that connects a clique vertex~$u$ and an independent set vertex~$v$, 
  add to $M$ edge~$\{u_j, v\}$.
  For each edge~$e_j=\{z,u\}$ in~$F'$ that connects vertex~$z$ with a
  clique vertex~$u$, add to $M$ edge~$\{z_u,u_j\}$.
  
  For each clique vertex~$u\in C$, let $R(u)\subseteq V(u)$ be the set
  of vertices which are not yet matched by~$M$.  We need to match
  all vertices in $R(u)$. %
  Depending on whether $u$ has degree zero or two in~$F'$, 
  $|V(u)|-|R(u)|$ is either zero or two, and
  $|Y(u)|-|R(u)|=|V(u)|-|R(u)|$.  
  Moreover, $B^*[V(u)\cup Y(u)|$ contains a complete bipartite graph for $V(u)$ and $Y(u)$.
  If $|R(u)|=|V(u)|$ (which means that $u$ has degree zero in $F'$), 
  then add to $M$ edges connecting exactly one vertex of $R(u)$ and one vertex of $Y(u)$;
  otherwise, $|R(u)|=|V(u)|-2$: add edge~$\{y^{1}_u,y^{2}_u\}$ to~$M$, 
  and edges connecting exactly one vertex of $R(u)$ and one vertex of $Y(u)\setminus \{y^{1}_u,y^{2}_u\}$.
  Analogously, let
  $R(z)\subseteq V(z)$ be the set of vertices in $V(z)$ which are not yet
  matched by $M$. By assumption, 
  there is an integer~$i$, $0\le i \le q$ such that $|V(z)|-|R(z)|=2i+r$.
  Since $|X|=|V(z)|-r$, we have $|X|-|R(z)|=2i$.  
  Moreover, $B^*[V(z)\cup X|$ contains a complete bipartite graph for $V(z)$ and $X$.
  Thus, for each $1\le k \le i$, add edge $\{x_{2k-1},x_{2k}\}$ to $M$, and
  add edges connecting exactly one vertex of $R(z)$ and one vertex of $X$
  to match vertices of $R_z$ and of~$V'_z\setminus \{x_{1},x_{2}, \ldots, x_{2i}\}$.
  It is easy to verify that $M$ is indeed a
  perfect matching.
\end{proof}

\noindent We now have gathered all ingredients to show \cref{thm:splittract}.

\begin{theorem}\label{thm:splittract}
  \probStarPart{} on split graphs is solvable in $O(m^{2.5})$~time for
  $s=2$.
\end{theorem}

\begin{proof}%
  Let $G=(C\cup V, C_E\cup E)$ be a split graph with $m$ being the
  number of edges in $E$.  
  Let $B'=(C\cup V \cup \{z\}, E\cup \{\{z,v\} \mid v\in C\})$ be a bipartite graph over $C$ and
  $I\cup\{z\}$.  Let $B^*$ be computed from $B'$ using \cref{cons:graph_for_matching}.  By
  \cref{lem:P3-partition==factor+contraint,lem:B=B',lem:B'has_a_factor==B*has_a_matching},
  $G$ admits a $P_3$-partition if and only if $B^*$ admits a perfect
  matching.  Since deciding whether a graph with $s$ vertices and $t$
  edges has a perfect matching can be done in $O(t\sqrt{s})$ time~\cite[Theorem~16.4]{Sch03}
  and since $B^*$ has $O(m)$ vertices and $O(m^2)$ edges, deciding
  whether $G$ has a $P_3$-partition can be done in $O(m^{2.5})$ time.
\end{proof}

\noindent In contrast, we can show that \probStarPart{} is NP-hard for each~$s\ge3$ by a reduction from \probExCover{}.

\begin{theorem}\label{thm:split-s>=3-NP-hard}
  \probStarPart{} on split graphs is NP-hard for $s\ge3$.
\end{theorem}

\begin{proof}
  We show that it is \classNP-hard to find an \starpartition{s} of a
  split graph via reduction from \probExCover \cite{GJ79} (illustrated
  in \cref{fig:splitgraph-reduction}).
  
  \decprob{\probExCover} {A finite set $U$ and a collection
    $\mathcal{S}$ of size-$s$ subsets of $U$.}  {Is there a
    subcollection $\mathcal{S}' \subseteq \mathcal{S}$ that partitions
    $U$ (each element of~$U$ is contained in exactly one subset in
    $\mathcal{S}'$)?}  

  \noindent Given $(U,\mathcal{S})$ with $|U|=sn$ and
  $|\mathcal{S}|=m \ge n$ where $n,m \in \mathbb{N}$, we construct a
  split graph~$G=~(C \cup I,E)$ as follows: The vertex set consists of
  a clique~$C$ and an independent set~$I$.  The clique~$C$ contains a
  vertex for each subset in~$\mathcal{S}$, the independent set~$I$
  contains a vertex for each element of~$U$.  For
  each~$S\in\mathcal{S}$ the corresponding vertex in~$C$ is adjacent
  to the~$s$ vertices in~$I$ that correspond to the elements
  of~$S$. Moreover, let~$q,r \in \mathbb{N}$ such that~$m-n = (s-1)q +
  r$. We add~$q$ dummy vertices to both~$C$ and~$I$ and connect every
  dummy vertex in~$C$ with all other vertices in~$C$ and uniquely with
  one of the dummy vertices in~$I$. Finally, we add one more dummy
  vertex to~$C$ and another~$s-r$ dummy vertices to~$I$. This last
  dummy in~$C$ is connected to all other vertices in~$C$ and to each
  of the~$s-r$ dummies in~$I$. Note that each dummy vertex in~$I$ has
  degree one. The above construction can be carried out in polynomial
  time.

  Now, let $\mathcal{S}' \subseteq \mathcal{S}$ be a partition of~$U$.
  Then we can partition~$G$ into stars of size~$s$ in the following
  way: For each~$S \in \mathcal{S}'$, we choose the star containing
  the vertex from~$C$ corresponding to~$S$ and the vertices in~$I$
  corresponding to the elements of~$S$. Moreover, each of the dummy
  vertices in~$I$ is put together with its neighboring dummy in~$C$
  and filled up to a star of size~$s$ with the remaining non-dummy
  vertices in~$C$ corresponding to the subsets in~$\mathcal{S}
  \setminus \mathcal{S}'$. Indeed, the values of~$q$ and~$r$ are
  chosen in a way that guarantees that this is
  possible. Since~$\mathcal{S}'$ partitions~$U$, we get a valid
  \starpartition{s} of~$G$.

  Conversely, in any \starpartition{s} of~$G$, all dummy vertices
  in~$I$ are grouped together into an $s$-star with their one dummy
  neighbor in~$C$ and the respective number of other non-dummy
  vertices from~$C$.  The values of~$q$ and~$r$ are such that there
  are exactly~$n$ non-dummy vertices left in~$C$ together with
  the~$s\cdot n$ vertices in~$I$ corresponding to~$U$.  It follows
  that each remaining non-dummy vertex in~$C$ forms a star with
  its~$s$ neighbors in~$I$, which yields a partition of~$U$.
\end{proof}

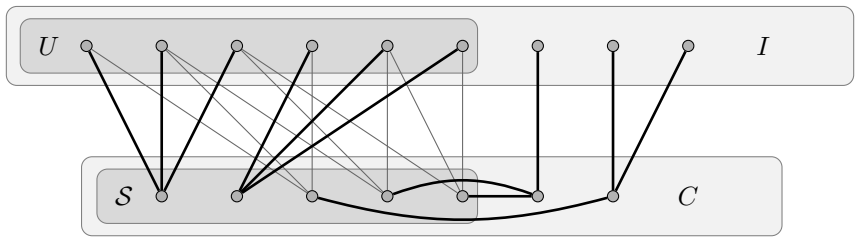
\begin{figure}[t]
  \centering \def\sep{2}
  \begin{tikzpicture}[draw=black!60]
    \tikzstyle{thickedge}=[draw=black, line width=1pt]
    \node (i-0) at (10,\sep) {$I$}; \node (u-0) at (.5,\sep) {$U$};
    \foreach \x in {1,2,...,9} \node[vertex] (i-\x) at (\x,\sep) {};

    \node (c-0) at (9,0) {$C$}; \node (s-0) at (1.5,0)
    {$\mathcal{S}$}; \foreach \x in {1,2,...,7} \node[vertex] (c-\x)
    at (\x+1,0) {};

    \draw (c-3) -- (i-1); \draw (c-3) -- (i-2); \draw (c-3) -- (i-4);
    \draw (c-4) -- (i-2); \draw (c-4) -- (i-3); \draw (c-4) -- (i-5);
    \draw (c-5) -- (i-3); \draw (c-5) -- (i-5); \draw (c-5) -- (i-6);
    \draw[thickedge] (c-6) -- (i-7); \draw[thickedge] (c-6)
    to[out=160, in=20] (c-4); \draw[thickedge] (c-6) -- (c-5);
    \draw[thickedge] (c-7) -- (i-8); \draw[thickedge] (c-7) -- (i-9);
    \draw[thickedge] (c-7) to[out=195, in=345] (c-3); \draw[thickedge]
    (c-1) -- (i-1); \draw[thickedge] (c-1) -- (i-2); \draw[thickedge]
    (c-1) -- (i-3); \draw[thickedge] (c-2) -- (i-4); \draw[thickedge]
    (c-2) -- (i-5); \draw[thickedge] (c-2) -- (i-6);

    \begin{pgfonlayer}{background}
      \node (C) [inner ysep=8pt, inner xsep=28pt, draw=black!50,rounded corners,fill=gray!10,fit=(c-0)
      (c-1)] {}; 
      \node (I) [inner ysep=8pt, inner xsep=28pt, draw=black!50,rounded
      corners,fill=gray!10,fit=(i-0) (i-1)] {}; 
      \node (U)
      [draw=black!50,rounded corners,fill=gray!30,fit=(u-0) (i-1) (i-6)]
      {}; 
      \node (S) [draw=black!50,rounded corners,fill=gray!30,fit=(s-0)
      (c-1) (c-5)] {};
    \end{pgfonlayer}
  \end{tikzpicture}
 \caption{Reduction from \probExCover to \probStarPart with $s=3$.
   Dummy vertices are in the light gray area and non-dummy vertices are in
   the dark gray area.  The \starpartition{3} for the constructed graph~$G$ is
   indicated by thick edges.}
 \label{fig:splitgraph-reduction}
\end{figure}

\section{Grid graphs}\label{sec:gridgraphs}

In this section, we show that \probPthreePart{} is NP-hard even on
grid graphs with maximum degree three, thus strengthening a result of
\citet{MZ05} and \citet{MT07}, who showed that \probPthreePart{} is
\classNP-complete on planar bipartite graphs of maximum degree three.

A \emph{grid graph} is a graph with a vertex set $V\subseteq \mathbb
N\times\mathbb N$ and edge set $\{\{u,v\}\mid u=(i,j)\in
V,v=(k,\ell)\in V, |i-k|+|j-\ell|= 1\}$. That is, its vertices can
be given integer coordinates such that every pair of vertices is
joined by an edge if and only if their coordinates differ by 1
in exactly one dimension.

\looseness=-1 To show NP-hardness of \probPthreePart{} on grid graphs, we exploit
the above mentioned result of \citet{MZ05} and \citet{MT07} and find a
suitable embedding of planar graphs into grid graphs while maintaining
the property of a graph having a $P_3$-partition. This allows us to
prove the following.

\begin{theorem}\label{NPhardgrid}
  \probPthreePart is \classNP-hard on grid graphs of maximum degree
  three.
\end{theorem}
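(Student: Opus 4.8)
The plan is to reduce from \probPthreePart{} on subcubic planar bipartite graphs, which is \classNP-complete by \citet{MZ05} and \citet{MT07}, and to realize any such instance as a grid graph of maximum degree three by drawing it orthogonally on a grid. The reduction rests on two ingredients: a \emph{subdivision lemma} asserting that subdividing every edge by a number of vertices divisible by three preserves $P_3$-partitionability, and an \emph{embedding} of a subcubic planar graph into the grid in which each edge is routed as an internally disjoint rectilinear path whose number of internal vertices is a positive multiple of three. Since grid graphs are automatically bipartite, I do not need to track bipartiteness separately.

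For the subdivision lemma, let $G'$ arise from $G$ by replacing each edge $e=\{x,y\}$ with a path $x=v_0^e - v_1^e - \dots - v_{3k_e}^e - v_{3k_e+1}^e = y$ on $3k_e$ internal, degree-two vertices, where $k_e\ge 1$ may depend on $e$. The key observation is a local analysis of how a $P_3$-partition of $G'$ restricts to a single subdivided edge. Each original vertex lies in exactly one $P_3$, so along any incident edge it can ``reach in'' by at most two consecutive internal vertices: if it is a leaf of its $P_3$ it consumes the first two internal vertices $v_1^e,v_2^e$ of its one used edge, and if it is the center it consumes only the single nearest internal vertex of each of two used edges. Writing $p,q\in\{0,1,2\}$ for the numbers of internal vertices of $e$ consumed from the $x$- and the $y$-side, the remaining internal vertices form a subpath that can be covered using internal edges only if its length $3k_e-p-q$ is divisible by three, forcing $p+q\in\{0,3\}$. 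Hence every edge is either \emph{unused} ($p=q=0$, with its internal vertices self-contained) or \emph{used} with $(p,q)\in\{(1,2),(2,1)\}$, i.e.\ exactly one endpoint acts as center (consuming one internal vertex) and the other as leaf (consuming two). Reading this off, a $P_3$-partition of $G'$ induces one of $G$ in which each center vertex is grouped with the two leaf-endpoints of its two used edges; conversely, a $P_3$-partition of $G$ lifts to $G'$ by letting each center take the nearest internal vertex of each of its two edges, each leaf take the two internal vertices nearest to it on its used edge, and the untouched internal subpaths be split into consecutive triples. This establishes the equivalence for arbitrary per-edge multiples of three.

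For the embedding, I would compute an orthogonal (rectilinear) drawing of the subcubic planar graph $G$ on a grid, placing each vertex at a grid point and routing each edge as a path of horizontal and vertical unit segments that is internally vertex-disjoint from all other routes and passes through no vertex. Scaling the grid by a constant factor creates slack between distinct routes, which I use for two purposes: first, to insert a short local detour (a unit ``bump'') into each route so that its number of internal grid vertices becomes a positive multiple of three, as required by the subdivision lemma; and second, to guarantee that no two grid points lie at grid distance one unless they are consecutive on a common route. The latter is what makes the drawing a legitimate (induced) grid graph with exactly the intended edge set, and it also keeps the maximum degree at three: internal route vertices have degree two, while each original vertex uses at most $\deg_G(v)\le 3$ of its four grid-neighbours. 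Combining the embedding with the subdivision lemma, $G'$ is a grid graph of maximum degree three that has a $P_3$-partition if and only if $G$ does, which proves the theorem.

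I expect the main obstacle to be the embedding rather than the combinatorics: one must simultaneously fix each route's length to the correct residue modulo three, avoid all spurious grid adjacencies so that the induced grid graph has precisely the subdivided-edge set and no vertex is pushed to degree four, and keep everything within polynomially many grid points so the reduction runs in polynomial time. All three are standard features of grid-drawing arguments, but verifying that the length-fixing detours can always be added without violating disjointness or the degree bound is the part that requires the most care. By contrast, the subdivision lemma is clean, since the divisibility constraint $p+q\in\{0,3\}$ pins down the interaction between each edge and its endpoints completely.
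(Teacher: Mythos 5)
Your proposal follows essentially the same route as the paper: reduce from \probPthreePart{} on subcubic planar bipartite graphs, show that subdividing an edge by a multiple of three internal vertices preserves $P_3$-partitionability (this is \autoref{obsubdivide}, which the paper justifies only by a case-enumerating figure, whereas your $p+q\in\{0,3\}$ analysis is a cleaner and simultaneously more general proof), and then realize the graph as a subcubic grid graph via a rectilinear embedding in which every edge route has a number of internal vertices divisible by three. The only substantive difference is how the embedding is instantiated: the paper uses the representation of \citet{RT86} (vertices as horizontal segments, edges as vertical segments), scales all coordinates by six, and inserts a five-vertex bend into each vertical route to fix the residue, while you propose a generic orthogonal point-drawing with scaling and local detours --- which works, but note that a single unit bump changes a route's internal length by two, so up to two detours per edge (or a longer one) may be needed to reach residue zero, and the disjointness and no-spurious-adjacency conditions you flag are exactly what the paper's scale-by-six step is buying.
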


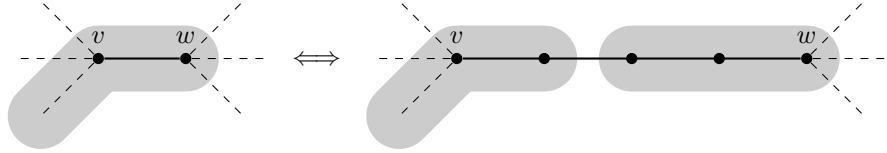
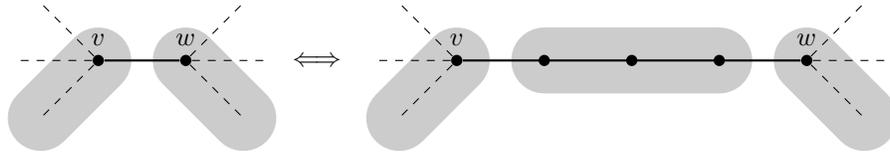
\begin{figure}[t]
 \tikzstyle{vertex} = [align=center, circle, inner sep=1.5pt, fill=black]
  \tikzstyle{edge} = [color=black,opacity=.2,line
  cap=round, line join=round, line width=25pt]
  \centering \subfloat[Case 1: Vertex~$v$ and vertex~$w$ are covered by the same~$P_3$
    (which implies that one vertex is an endpoint and the other vertex is an internal
    point of the $P_3$).]{
    \begin{tikzpicture}[baseline=(v.south)]
       \tikzstyle{edge} = [color=black,opacity=.2,line
      cap=round, line join=round, line width=25pt]

      \node[vertex,label={above:$v$}] (v) {}; \node[vertex,right=of
      v,label={above:$w$}] (w) {};

      \node[below left=1cm of v] (a) {}; \node[above left=1cm of v]
      (b) {}; \node[left=1cm of v] (e) {}; \node[below right=1cm of w]
      (c) {}; \node[above right=1cm of w] (d) {}; \node[right=1cm of
      w] (f) {};

      \draw[thick] (v)--(w); \draw[dashed] (v)--(a); \draw[dashed]
      (v)--(b); \draw[dashed] (v)--(e); \draw[dashed] (w)--(c);
      \draw[dashed] (w)--(d); \draw[dashed] (w)--(f);

      \begin{pgfonlayer}{background}
        \draw[edge] (a)--(v.center)--(w.center);
      \end{pgfonlayer}
    \end{tikzpicture}$\iff$
    \begin{tikzpicture}[baseline=(v.south)]
      \node[vertex,label={above:$v$}] (v) {}; \node[vertex,right=of v]
      (n1) {}; \node[vertex,right=of n1] (n2) {};
      \node[vertex,right=of n2] (n3) {}; \node[vertex,right=of
      n3,label={above:$w$}] (w) {};

      \node[below left=1cm of v] (a) {}; \node[above left=1cm of v]
      (b) {}; \node[left=1cm of v] (e) {}; \node[below right=1cm of w]
      (c) {}; \node[above right=1cm of w] (d) {}; \node[right=1cm of
      w] (f) {};

      \draw[thick] (v)--(w); \draw[dashed] (v)--(a); \draw[dashed]
      (v)--(b); \draw[dashed] (v)--(e); \draw[dashed] (w)--(c);
      \draw[dashed] (w)--(d); \draw[dashed] (w)--(f);

      \begin{pgfonlayer}{background}
        \draw[edge] (a)--(v.center)--(n1.center); \draw[edge]
        (n2.center)--(w.center);
      \end{pgfonlayer}
    \end{tikzpicture}
  }

  \subfloat[Case 2: Vertex~$v$ and vertex~$w$ are covered by different $P_3$s
     (vertex~$v$ and/or vertex~$w$ can also be internal vertices).]{
    \begin{tikzpicture}[baseline=(v.south)]
      \node[vertex,label={above:$v$}] (v) {}; \node[vertex,right=of
      v,label={above:$w$}] (w) {};

      \node[below left=1cm of v] (a) {}; \node[above left=1cm of v]
      (b) {}; \node[left=1cm of v] (e) {}; \node[below right=1cm of w]
      (c) {}; \node[above right=1cm of w] (d) {}; \node[right=1cm of
      w] (f) {};

      \draw[thick] (v)--(w); \draw[dashed] (v)--(a); \draw[dashed]
      (v)--(b); \draw[dashed] (v)--(e); \draw[dashed] (w)--(c);
      \draw[dashed] (w)--(d); \draw[dashed] (w)--(f);

      \begin{pgfonlayer}{background}
        \draw[edge] (a)--(v.center); \draw[edge] (w.center)--(c);
      \end{pgfonlayer}
    \end{tikzpicture}$\iff$
    \begin{tikzpicture}[baseline=(v.south)]
      \node[vertex,label={above:$v$}] (v) {}; \node[vertex,right=of v]
      (n1) {}; \node[vertex,right=of n1] (n2) {};
      \node[vertex,right=of n2] (n3) {}; \node[vertex,right=of
      n3,label={above:$w$}] (w) {};

      \node[below left=1cm of v] (a) {}; \node[above left=1cm of v]
      (b) {}; \node[left=1cm of v] (e) {}; \node[below right=1cm of w]
      (c) {}; \node[above right=1cm of w] (d) {}; \node[right=1cm of
      w] (f) {};

      \draw[thick] (v)--(w); \draw[dashed] (v)--(a); \draw[dashed]
      (v)--(b); \draw[dashed] (v)--(e); \draw[dashed] (w)--(c);
      \draw[dashed] (w)--(d); \draw[dashed] (w)--(f);

      \begin{pgfonlayer}{background}
        \draw[edge] (a)--(v.center); \draw[edge]
        (n1.center)--(n3.center); \draw[edge] (w.center)--(c);
      \end{pgfonlayer}
    \end{tikzpicture}
  }
  \caption{All possibilities of two vertices~$v$ and~$w$ participating
    in a $P_3$-partition if they are joined by an edge or a path on
    three other degree-two vertices. Edges participating in the
    same~$P_3$ are grouped together in a gray background.}
  \label{fig:obsubdivide}
\end{figure}
\noindent Towards proving \cref{NPhardgrid}, the following observation helps us
embed planar graphs into grid graphs: it allows us to replace edges
by paths on $3i$~new vertices for any $i\in\mathbb N$.

\begin{observation}\label[observation]{obsubdivide}
  Let $G$~be a graph, $e=\{v,w\}$ be an edge of~$G$, and $G'$ be the
  graph obtained by removing the edge~$e$ from~$G$ and by connecting
  $v$ and~$w$ using a path on three new vertices. Then, $G$ has a
  $P_3$-partition if and only if~$G'$ has one.
\end{observation}

\noindent Note that the correctness of \cref{obsubdivide} is proven by
\cref{fig:obsubdivide}, which enumerates all possible cases.

We can now prove \cref{NPhardgrid} by showing that
$G$~has a $P_3$-partition if and only~$G'$ has, where $G'$~is the
graph obtained from a planar graph~$G$ of maximum degree three using
the following construction.

\begin{construction}\label{cons:gridgraphs}
  \normalfont
  Let $G$~be a planar $n$-vertex graph of maximum degree three. Using a
  \pt{} algorithm of \citet{RT86} we obtain a crossing-free
  \emph{rectilinear embedding} of~$G$ into the plane such that:
  \begin{enumerate}
  \item Each vertex is represented by a horizontal line.
  \item Each edge is represented by a vertical line.
  \item All lines end at integer coordinates with integers in $O(n)$.
  \item If two vertices are joined by an edge, then the vertical line
    representing this edge ends on the horizontal lines
    representing the vertices.
  \end{enumerate}
  \begin{figure}[t]
    \centering \subfloat[A planar graph~$G$.]{\begin{tikzpicture}[label distance=-1mm] 

        \node[vertex, label={left:$a$}] (a) at (0,0) {}; \node[vertex,
        label={right:$b$}] (b) at (3, 0) {}; \node[vertex,
        label={right:$c$}] (c) at (3,3) {}; \node[vertex,
        label={left:$d$}] (d) at (0,3) {};

        \draw[very thick] (a) -- (b) -- (c) -- (d) -- (a); 
        \draw[very thick] (a) -- (c);
      \end{tikzpicture}\label{subfig:k4}}
    \hfill{} \subfloat[A rectilinear embedding
    of~$G$.]{\begin{tikzpicture}[very thick,label distance=-2mm]

        \draw[help lines, densely dotted] (0,0) grid (3,3);
        \draw (2,1) -- (2,0) -- (0,0) -- (0,3) -- (3,3) -- (3,1) --
        cycle; \draw (2,0) -- (2,1); \draw (2,2) -- (2,3); \draw (1,0)
        -- (1,2) -- (2,2);

        \node[label={above:$a$}] (a) at (0.5,0) {};
        \node[label={above:$b$}] (a) at (2.5,1) {};
        \node[label={below:$d$}] (a) at (1.5,2) {};
        \node[label={below:$c$}] (a) at (1.5,3) {};
      \end{tikzpicture}\label{subfig:rectili}}
    \hfill{} \subfloat[The grid graph~$G'$ obtained from~$G$.]{
      \begin{tikzpicture}[label distance=-1.5mm]
        \tikzstyle{edge} = [color=black,opacity=.2,line
        cap=round, line join=round, line width=5pt, very thick]
        \draw[help lines, densely dotted] (0,0)
        grid[step=1/6] (3,3); \draw[help lines] (0,0) grid (3,3);
        \draw[very thick] (2,1) -- (2,4/6) -- (2+2/6,4/6) -- (2+2/6,2/6) --
        (2,2/6) -- (2,0) -- (0,0) -- (0,2/6) -- (2/6,2/6) -- (2/6,4/6)
        -- (0,4/6) -- (0,3) -- (3,3) -- (3,1+4/6) -- (3-2/6,1+4/6) --
        (3-2/6,1+2/6) -- (3,1+2/6) -- (3,1) -- cycle;

        \begin{pgfonlayer}{background}
        
        \draw[edge] (0,0)--(2,0);
        \draw[edge] (2,1)--(3,1);
        \draw[edge] (1,2)--(2,2);
        \draw[edge] (0,3)--(3,3);
        
        \end{pgfonlayer}
        \draw[very thick] %
        (2,2) -- (2,2+2/6) -- (2+2/6,2+2/6) -- (2+2/6,2+4/6) --
        (2,2+4/6) -- (2,3); \draw[very thick] (1,0) -- (1,2/6) -- (1+2/6,2/6) --
        (1+2/6,4/6) -- (1,4/6) -- (1,2) -- (2,2);

        \node[vertex,label={above left:$a'$}] at (1,0) {};
        \node[vertex,label={below left:$c'$}] at (2,3) {};
        \node[vertex,label={above left:$d'$}] at (1,2) {};
        \node[vertex,label={above left:$b'$}] at (2,1) {};
      \end{tikzpicture}\label{subfig:grid}}
    \caption{Various embeddings of a planar graph. In the rectilinear
      embedding in \cref{subfig:rectili}, horizontal lines
      represent vertices of~$G$, while vertical lines represent its
      edges. In \cref{subfig:grid}, every intersection of a line
      with a grid point is a vertex, but only the vertices
      corresponding to vertices in \cref{subfig:k4} are
      shown. The horizontal lines of the rectilinear embedding
      are now replaced by paths highlighted in gray.}\label{fig:coolembedding}
  \end{figure}
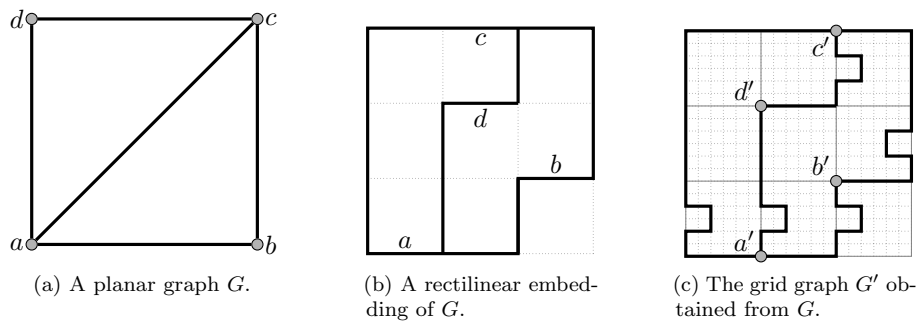
  \cref{subfig:rectili} illustrates such an embedding. Without loss
  of generality, every end point of a line lies on another line. Now,
  in polynomial time, we obtain a grid graph~$G'$ from the rectilinear
  embedding, as follows:
  \begin{enumerate}
  \item We multiply all coordinates by six (see
    \cref{subfig:grid}).
    
  \item Every point in the grid touched by a horizontal line that
    represents a vertex~$v$ of~$G$ becomes a vertex in~$G'$. The
    horizontal path resulting from this horizontal line we denote
    by~$P(v)$ (indicated by a gray background in \cref{subfig:grid}).    

  \item For each vertical line, all its grid points become vertices
    in~$G'$, except for the third point from the bottom horizontal
    line that we bypass by adding a bend of
    five vertices to the vertical line (see \cref{subfig:grid}).
    
  \item With each vertex~$v$ in~$G$, we associate the vertex~$v'$
    of~$G'$ that lies on $P(v)$ and has degree three. There is at most
    one such vertex. If no such vertex exists, then we arbitrarily
    associate with~$v$ one of the end points of $P(v)$.
  \end{enumerate}
\end{construction}

\begin{proof}[Proof of \cref{NPhardgrid}]
  Since \cref{cons:gridgraphs} runs in polynomial time, it remains to
  prove that a graph~$G$ has a $P_3$-partition if and only if the
  graph~$G'$ obtained by \cref{cons:gridgraphs} has. By
  \cref{obsubdivide}, it is sufficient to verify that every
  edge~$e\coloneqq \{u,v\}$ in~$G$ is replaced by a path~$p$
  between~$u'$~and~$v'$ in~$G'$ whose number of inner vertices is
  divisible by three. %
  To this end, we partition the path~$p$ into two parts: one part
  consists of the subpaths~$p_u,p_v$ of~$p$ that lie on $P(u)$
  and~$P(v)$, respectively. Note that each of $p_u$ and $p_v$ might
  consist only of one vertex, as seen for the path from~$d'$ to~$a'$
  in \cref{subfig:grid}. The other part is a path~$p_e$ that
  connects $p_u$ to~$p_v$. We consider $p_e$ not to contain the
  vertices of~$p_u$ or~$p_v$. Hence, $p_e$ contains no vertices of any
  horizontal paths.

  The number of \emph{inner} vertices of~$p$ shared with the
  horizontal paths~$p_u$ and~$p_v$ is divisible by three (it is
  possibly zero) since all coordinates that start or end paths are
  divisible by three (in fact, by six). Herein, note that we do not
  count the vertices~$u'$ and~$v'$ lying on~$p_u$ or~$p_v$,
  respectively.

  Moreover, the number of vertices on~$p_e$ is also divisible by
  three: the number of vertices on a strictly vertical path~$p_s$
  connecting $p_u$ with $p_v$ would leave a remainder of two when
  divided by three (as the two vertices on $p_u$ and~$p_v$ are not
  considered to be part of~$p_s$). However, our added bend of five new
  vertices makes $p_e$ by four vertices longer compared
  to~$p_s$. Hence, the number of vertices on $p_e$ is also divisible
  by three. It follows that the total number of inner vertices of~$p$
  is divisible by three.
\end{proof}

\section{Chordal graphs}\label{sec:chordgraphs}
A graph is \emph{chordal} if every induced subgraph containing a cycle
of length at least four also contains a \emph{triangle}, that is, a
cycle of length three. We show that \probPthreePart restricted to
chordal graphs is \classNP-hard (in contrast to the \pt{}
solvability on split graphs which form
a subclass of chordal graphs) by reduction from
\probThreeDMatching.
More precisely, we use the construction that \citet{DF85}
provided to show that \probPthreePart{} is
NP-complete and observe that we can triangulate the resulting
graph while maintaining the correctness of the reduction.

\decprob{\probThreeDMatching~(3DM)} {Pairwise disjoint sets $\TDMR,
  \TDMB, \TDMY$ with $|\TDMR|=|\TDMB|=|\TDMY|=q$ and a set of
  triples~$\TDMT\subseteq \TDMR \times \TDMB \times \TDMY$.}  {Does
  there exist a \emph{perfect 3-dimensional matching} $M\subseteq T$,
  that is, $|M|=q$ and each element of $R\cup B \cup Y$ occurs in
  exactly one triple of $M$?}

\noindent \citet{DF85} introduced \cref{cons:TDM-to-P3Planar} described below and illustrated in
\cref{fig:gadget_c}).  Using it as a reduction from the \classNP-complete
restriction of 3DM to planar graphs~\cite{DF86},
they proved that \probPthreePart restricted to bipartite planar graphs is
\classNP-complete.
  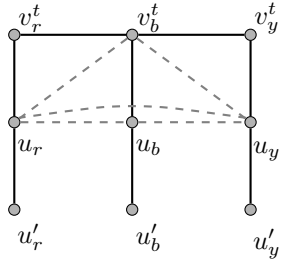
\begin{figure}[t]
    \centering
    \begin{tikzpicture}
      \node[] (n_vr) {$v^t_r$}; \node[right = of n_vr]
      (n_vb) {$v^t_b$}; \node[right = of n_vb] (n_vy) {$v^t_y$};
      
      \foreach \x in {r,b,y} { \node[vertex] at ($(n_v\x)+(-.22,
        -.22)$) (v\x) {}; \node[vertex, below = of v\x] (u\x) {};

        \node[below = 7.8ex of n_v\x] (n_u\x) {$u_{\x}$};
        \node[vertex, below = of u\x] (u'\x) {}; \node[below = 4.4ex
        of n_u\x] (n_u'\x) {$u'_{\x}$};

        \draw[-, thick] (v\x) -- (u\x); \draw[-, thick] (u\x) --
        (u'\x); }

      \draw[-, thick] (vr) -- (vb); \draw[-, thick] (vb) -- (vy);
      
      \draw[-, dashed, thick, gray] (ur) -- (ub) -- (uy); \draw[-,
      dashed, thick, gray] (vb) -- (ur); \draw[-, dashed, thick, gray]
      (vb) -- (uy);

      \draw[-, dashed, thick, gray] plot[smooth] coordinates
      {(uy.north west) ($(ub)+(.4, 0.2)$) %
        ($(ub)+(-.4,0.2)$) (ur.north east)};

    \end{tikzpicture}
    \caption{Gadget for a triple~$t=(r, b, y) \in \TDMT$ based on \cref{cons:TDM-to-P3Planar} (solid edges). 
    The reduction in the proof of \cref{thm:P3-hard-chordal}
    introduces the dashed edges.}
  \label{fig:gadget_c}
\end{figure}  
\begin{construction}\label{cons:TDM-to-P3Planar}
  \normalfont
  Let $\TDMInst$ with $|\TDMR|=|\TDMB|=|\TDMY|=q$ be an instance of
  3DM.  Construct a graph~$G=(V, E)$ as follows: For each
  element~$a\in \TDMR\cup \TDMB \cup \TDMY$, create two vertices~$u_a,
  u'_a$ and connect them by an edge~$\{u_a, u'_a\}$. We call~$u_a$
  an \emph{element-vertex} and $u'_a$~a \emph{pendant-vertex}. For
  each triple~$t=(r, b, y) \in \TDMT$, create three vertices~$v^t_{r},
  v^t_{b}, v^t_{y}$. We call these three vertices
  \emph{triple-vertices}. Make triple-vertex~$v^t_b$ adjacent to both
  $v^t_r$ and $v^t_y$. Also make triple-vertex~$v^t_r$ (respectively~$v^t_b$
  and~$v^t_y$) adjacent to element-vertex~$u_r$ (respectively~$u_b$
  and~$u_y$).
  Formally,
  \begin{align*}
    V&=\{u_a, u'_a\mid a\in \TDMR\cup \TDMB \cup \TDMY\} \cup \{v^t_r,
    v^t_b, v^t_y \mid t = (r, b, y) \in \TDMT\},\text{ and}\\
    E &= \{\{u_a, u'_a\} \mid a\in \TDMR\cup \TDMB \cup \TDMY\} \cup\\
    &\{\{u_r, v^t_r\}, \{u_b, v^t_b\}, \{u_y, v^t_y\} , \{v^t_r, v^t_b\},
    \{v^t_b, v^t_y\} \mid t =(r, b, y) \in \TDMT\}.
  \end{align*}
\end{construction}
  
\begin{theorem}\label{thm:P3-hard-chordal}
  \probPthreePart restricted to chordal graphs is \classNP-hard.
\end{theorem}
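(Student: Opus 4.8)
The plan is to reduce from \probThreeDMatching, which is \classNP-complete~\cite{Karp72}. An instance is given by \TDMInst{} with $|\TDMR| = |\TDMB| = |\TDMY| = q$, and we ask whether there is a perfect matching $\TDMSol \subseteq \TDMT$ covering every element of $\TDMR \cup \TDMB \cup \TDMY$ exactly once. From such an instance I would build, in polynomial time, a \emph{chordal} graph $G$ that has a $P_3$-partition if and only if the matching exists. Recall that a part of a $P_3$-partition is just any set of three vertices inducing a connected subgraph, so the flexibility of the pattern (a path, not necessarily a triangle) is what I will have to control throughout.

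Construction sketch. For every element $e \in \TDMR\cup\TDMB\cup\TDMY$ I introduce one \emph{element vertex} $\hat e$, and I keep the set of all element vertices independent, so that no $P_3$ can consist of three element vertices: every $P_3$ touching $\hat e$ must pass through a gadget. For every triple $t=\{r,b,y\}\in\TDMT$ I introduce a \emph{selection gadget} $\Gamma_t$ on a constant number of fresh vertices, together with three private \emph{connector} vertices $p_t^r,p_t^b,p_t^y$, where $p_t^e$ is the unique neighbour of $\hat e$ inside $\Gamma_t$. The gadget $\Gamma_t$ is designed to admit exactly two ways of being covered by connected triples: a \emph{selected} state, in which each connector $p_t^e$ is used in a $P_3$ together with the element vertex $\hat e$ (so that $\Gamma_t$ ``consumes'' $\hat r,\hat b,\hat y$), and an \emph{unselected} state, in which all vertices of $\Gamma_t$ (including its connectors) are partitioned internally and no element vertex is touched. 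The decisive feature is that these are the only two options, making $\Gamma_t$ all-or-nothing: it can never consume exactly one or two of its three elements. Finally I pad the instance so that the total number of vertices is divisible by three; subdividing auxiliary edges by paths on three new vertices, which preserves the existence of a $P_3$-partition by \autoref{obsubdivide}, is a convenient way to adjust counts without affecting the logic.

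Correctness. For the forward direction, a perfect matching $\TDMSol$ tells me to put each $t\in\TDMSol$ into its selected state and each $t\notin\TDMSol$ into its unselected state; since $\TDMSol$ covers every element exactly once, each element vertex $\hat e$ is consumed by precisely the one selected triple containing $e$, while the gadget states partition all remaining vertices, yielding a $P_3$-partition. For the converse I read the matching off a given $P_3$-partition~$\Psol$: because element vertices form an independent set and the only neighbours of $\hat e$ are the connectors $p_t^e$, the part of $\Psol$ containing $\hat e$ identifies a single triple $t$ that consumes $e$; the all-or-nothing property of $\Gamma_t$ then forces that same $t$ to consume all three of $r,b,y$, so the set of selected triples is a perfect matching. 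The key lemma here is the \emph{rigidity} of $\Gamma_t$ — that its only connected-triple partitions are the two intended states — which is exactly what rules out spurious $P_3$'s that would otherwise exploit the path-flexibility of the pattern.

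Chordality, and the main obstacle. The one genuine difficulty is that the incidence structure between elements and triples is itself not chordal: if several triples share elements, the naive construction contains long induced cycles running $\hat e - p_t^e - \dots - p_{t'}^{e'} - \hat{e'} - \dots$ through distinct gadgets, with no chords, since element vertices and connectors of different gadgets are non-adjacent. I therefore expect the crux of the proof to be \emph{chordalizing} these inter-gadget connections — for instance by turning the set of connectors incident to a common element vertex into a clique, or by routing them through a small clique, and then exhibiting a clique tree (equivalently, a perfect elimination ordering) for the resulting graph. This must be done on two fronts at once: the added chords must not create any new usable connected triple that would let a partition cheat, so that the rigidity argument of the previous paragraph survives; and the instance must remain chordal but \emph{not} interval (and non-split, non-tree), since \autoref{thm:P3-interval} and \autoref{thm:splittract} show that \probPthreePart{} is polynomial on those subclasses, so an interval or split output would be self-contradictory. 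Verifying that a branching clique tree can be maintained while preserving gadget rigidity is where I expect the bulk of the technical work to lie.
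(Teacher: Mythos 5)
Your high-level strategy --- reduce from \probThreeDMatching{} via per-triple selection gadgets and then triangulate --- is the right one and matches the paper's in spirit, but as written the proof has two genuine gaps, and they are exactly the two load-bearing pieces. First, the gadget $\Gamma_t$ is never exhibited: the ``all-or-nothing'' rigidity lemma is asserted, not proved, and it is not a routine assertion. For instance, with your element vertices kept independent, the $P_3$ of $\Psol$ containing $\hat e$ could be $\{p_{t'}^e,\hat e,p_{t''}^e\}$ for two \emph{distinct} triples $t'\neq t''$, each gadget surrendering a single connector; ruling this out (or showing it forces a contradiction with the gadget's internal count) requires the concrete gadget, which is precisely where reductions of this kind live or die. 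Second, you explicitly defer the chordalization --- ``where I expect the bulk of the technical work to lie'' --- so the one property the theorem is actually about is never established. Worse, your design choice of an independent set of element vertices works \emph{against} chordality: it is what creates the long chordless cycles through $\hat e - p_t^e - \cdots - p_{t'}^{e'} - \hat e'$ that you then have to repair.

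For comparison, the paper resolves both issues simultaneously by reusing the Dyer--Frieze gadget and making the element vertices a \emph{clique} (the opposite of your choice), additionally joining the middle triple-vertex $v^t_b$ to all three element vertices $u_r,u_b,u_y$ of its triple. Chordality then follows by a short case analysis (pendants have degree one; a degree-two triple-vertex on a cycle brings a triangle with it; anything else passes through the element-clique). Correctness survives because each element vertex carries a degree-one pendant, which forces every element vertex to be the center of its own $P_3$; this is what prevents the new clique edges and the new $v^t_b$--$u_r$, $v^t_b$--$u_y$ edges from being exploited, and it is the mechanism your sketch would need an analogue of. Until you (i) write down $\Gamma_t$ and prove its rigidity and (ii) specify the added chords and re-verify rigidity and chordality for the resulting graph, the argument is a plan rather than a proof.
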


\begin{proof}
  We extend \cref{cons:TDM-to-P3Planar} to show the \classNP-hardness
  of \probPthreePart restricted to chordal graphs.  Make any two
  element-vertices adjacent to each other such that the graph induced
  by all element-vertices is complete. Furthermore, for each
  triple~$t=(r, b, y)\in \TDMT$, add two edges~$\{v^{t}_b, u_r\}$ and
  $\{v^{t}_b, u_y\}$ to the graph, as illustrated in
  \cref{fig:gadget_c}.  Let $G$ be the resulting graph.

  We first show that $G$ is chordal.  Consider any size-$\ell$ set~$C$
  of vertices with~$\ell \ge 4$ such that the subgraph $G_C$ induced
  by~$C$ contains a simple cycle of length~$\ell$.  Since the
  pendant-vertices all have degree one, $C$ cannot contain any
  pendant-vertex.  If~$C$~contains a degree-two triple-vertex~$v_r^t$
  (respectively~$v_y^t$) for some $t=(r,b,y)\in \TDMT$, then~$C$ contains
  both its neighbors:~$v_b^t$~and~$u_r$ (respectively $v_b^t$ and~$u_y$)
  which are connected, that is, forming a triangle.  Otherwise, if~$C$
  contains a degree-five triple-vertex~$v_b^t$ but does not
  contain~$v_r^t$ nor~$v_y^t$, then, in the cycle, $v_b^t$ lies
  between two element-vertices.  Since two element-vertices are always
  connected, $G_C$~contains a triangle.  Finally, if~$C$ does not
  contain any triple-vertex, then it is included in the set of
  element-vertices, which form a complete graph. Hence, $G_C$ contains
  a triangle.

  Second, we show that $\TDMInst$ has a perfect 3-dimensional matching
  if and only if $G$ can be partitioned into $P_3$s.

  For the ``only if'' part, suppose that $\TDMSol\subseteq T$ is a perfect
  3-dimensional matching for $\TDMInst$. Then, the $P_3$s in
  \[\Bigl(\smashoperator{\bigcup_{t=(r,b,y)\notin \TDMSol}}{\{\{v^t_{r}, v^t_{b},
    v^t_{y}\}\}}\Bigr) \cup \Bigl(\smashoperator{\bigcup_{t=(r,b,y) \in \TDMSol}}
  \{\{u'_r, u_r, v^t_r\}, \{u'_b, u_b, v^t_b\}, \{u'_y, u_y,
  v^t_y\}\}\Bigr)\] indeed partition the graph~$G$.

  For the ``if'' part, suppose that $G$ has a partition~$P$ into
  $P_3$s.  We first enumerate the possible centers of the $P_3$s.
  Since each pendant-vertex is only adjacent to its element-vertex,
  every element-vertex is the center of a $P_3$ that contains a
  pendant-vertex. We call such a $P_3$ an \emph{element-$P_3$}.  For
  each triple~$t=(r, b, y)\in \TDMT$, neither $v^t_r$ nor $v^t_y$ can
  be the center of a $P_3$ since they are adjacent to only one vertex
  which is not already a center (namely $v^t_b$).  Thus, any $P_3$
  which is not an element-$P_3$ must have vertex~$v^t_b$ as a center
  for some $t=(r,b,y)\in \TDMT$. We call such a $P_3$ a \emph{triple-$P_3$
    corresponding to $t$}.

  Now, consider a triple $t=(r,b,y)\in \TDMT$.  If there exists a
  triple-$P_3$ corresponding to~$t$ (that is, with center $v^t_b$), then
  its two leaves can only be $v^t_r$ and $v^t_y$ (since $u^t_r$,
  $u^t_b$, and $u^t_y$ are centers).
  Otherwise, each of the three triple-vertices $v^t_r$,
  $v^t_b$, and $v^t_y$ must be a leaf of an element-$P_3$ centered on
  $u_r$, $u_b$, and $u_y$.  Indeed, there is only one way to match the
  three triple-vertices to these three element-vertices, that is, by
  using the edges $\{v^t_r,u_r\}$, $\{v^t_b,u_b\}$ and $\{v^t_y,u_y\}$. As
  a consequence, the leaves of the element-$P_3$ centered on~$u_a$
  are~$u'_a$ and $v^t_a$ for some triple $t$ containing~$a$.

  It remains to show that the triples with \emph{no} corresponding triple-$P_3$
  in $P$ form a perfect 3-dimensional matching~$\TDMSol$ for
  $\TDMInst$.  Note that for each element $a\in \TDMR \cup \TDMB \cup
  \TDMY$, the element-$P_3$ centered in~$u_a$ uses a
  triple-vertex~$v^t_a$ for some triple $t$ containing $a$, which
  means that no triple-$P_3$ in $P$ corresponds to~$t$. Hence, $t\in
  \TDMSol$ and element~$a$ is matched by~$t$.  Now, it remains to show
  that every element is matched at most once. Suppose for the sake of
  contradiction that there is an element~$a\in \TDMR \cup \TDMB \cup
  \TDMY$ which is matched at least twice. To this end, let $v^t_a$~be
  a triple-vertex that together with~$u_a$ and $u'_{a}$ forms an
  element-$P_3$. Thus, $t\in \TDMSol$. Furthermore, let $t'$ be another
  triple in~$\TDMSol$ that matches~$a$. Since $t'$ has no corresponding
  triple-$P_3$ in $P$, there is an element-$P_3$ containing~$v^{t'}_a$.
  But then, $v^{t'}_a$ must form an element-$P_3$ together with $u_a$ and $u'_{a}$,
  which is a contradiction.
\end{proof}

\section{Conclusion}
We close with three open questions for future research.  What is the
complexity of \probStarPart for~$s\geq 2$ on permutation graphs?  What
is the complexity of \probStarPart for~$s\geq 3$ on interval graphs?
Are there other important graph classes (not necessarily perfect ones)
where \probStarPart is \pt{} solvable?

\paragraph{Acknowledgments.}
Ren\'e van Bevern was supported by the Russian Foundation for Basic Research (RFBR), project~16-31-60007 mol\textunderscore{}a\textunderscore{}dk, at Novosibirsk State University, and by the German Research Foundation (DFG), project DAPA (NI 369/12), at TU Berlin. 
Robert Bredereck was supported by the DFG, project PAWS (NI 369/10).
Laurent Bulteau and Gerhard J.\ Woeginger were supported by the Alexander von Humboldt Foundation, Bonn, Germany, while visiting TU~Berlin.
Jiehua Chen was supported by the Studienstiftung des Deutschen Volkes.
Vincent Froese was supported by the DFG, project DAMM (NI 369/13).
This work started at the yearly research retreat of the group ``Algorithms and Computational Complexity TU~Berlin'' in March 2013, 
held in Bad Schandau, Germany.

\bibliographystyle{abbrvnat}
\bibliography{star_jgt}

\end{document}